\documentclass[a4paper,onecolumn,11pt,accepted=2025-01-22]{quantumarticle}
\pdfoutput=1
\usepackage[utf8]{inputenc}
\usepackage[english]{babel}
\usepackage[T1]{fontenc}
\usepackage{amsmath}
\usepackage{hyperref}

\usepackage{tikz}
\usepackage{lipsum}

\usepackage{bm}
\usepackage{color, soul}
\usepackage{appendix}
\usepackage{blkarray}
\usepackage{multirow}
\usepackage{float}
\usepackage{mathtools}
\usepackage{textcomp}
\usepackage{subcaption}
\usepackage{amsthm}
\usepackage{dsfont}
\theoremstyle{definition}
\newtheorem{theorem}{Theorem}
\newtheorem{lemma}{Lemma}
\newcounter{propertycounter}[theorem]
\newcounter{definitioncounter}[theorem]

\newcounter{informalthmcounter}[theorem]
\newcounter{remarkcounter}[theorem]
\newcounter{sublemmacounter}[theorem]
\newcounter{extralemmacounter}[theorem]
\newtheorem{property}[propertycounter]{Property}
\newtheorem{definition}[definitioncounter]{Definition}

\newtheorem{informalthm}[informalthmcounter]{Theorem}
\newtheorem{remark}[remarkcounter]{Remark}
\newtheorem{sublemma}[sublemmacounter]{Sublemma}
\newtheorem{extralemma}[extralemmacounter]{Lemma}

\usepackage{xcolor}

\makeatletter
\renewcommand{\p@subsection}{}
\renewcommand{\p@subsubsection}{}
\makeatother

\usepackage[numbers, sort&compress]{natbib}
\usepackage{amssymb}
\usepackage[export]{adjustbox}
\usepackage{physics}

\begin{document}

\title{Fundamental charges for dual-unitary circuits}

\author{Tom Holden-Dye}
\email[]{thomas.holden-dye.21@ucl.ac.uk }
\affiliation{Department of Physics and Astronomy, University College London, United Kingdom}
\author{Lluís Masanes}
\affiliation{Department of Computer Science, University College London, United Kingdom}
\affiliation{London Centre for Nanotechnology, University College London, United Kingdom}
\author{Arijeet Pal}
\affiliation{Department of Physics and Astronomy, University College London, United Kingdom}

\maketitle

\begin{abstract}
    Dual-unitary quantum circuits have recently attracted attention as an analytically tractable model of many-body quantum dynamics. Consisting of a 1+1D lattice of 2-qudit gates arranged in a `brickwork’ pattern, these models are defined by the constraint that each gate must remain unitary under swapping the roles of space and time. This dual-unitarity restricts the dynamics of local operators in these circuits: the support of any such operator must grow at the effective speed of light of the system, along one or both of the edges of a causal light cone set by the geometry of the circuit. Using this property, it is shown here that for 1+1D dual-unitary circuits the set of width-$w$ conserved densities (constructed from operators supported over $w$ consecutive sites) is in one-to-one correspondence with the set of width-$w$ solitons - operators which, up to a multiplicative phase, are simply spatially translated at the effective speed of light by the dual-unitary dynamics. A number of ways to construct these many-body solitons (explicitly in the case where the local Hilbert space dimension $d=2$) are then demonstrated: firstly, via a simple construction involving products of smaller, constituent solitons; and secondly, via a construction which cannot be understood as simply in terms of products of smaller solitons, but which does have a neat interpretation in terms of products of fermions under a Jordan-Wigner transformation. This provides partial progress towards a characterisation of the microscopic structure of complex many-body solitons (in dual-unitary circuits on qubits), whilst also establishing a link between fermionic models and dual-unitary circuits, advancing our understanding of what kinds of physics can be explored in this framework.
\end{abstract}

\tableofcontents

\section{Introduction}

Over the last decade, a number of models have been identified in the framework of local 1+1D brickwork quantum circuits that appear to be both quantum chaotic and partially solvable, heralding a rare opportunity to obtain sharp analytical results in chaotic many-body quantum systems. Perhaps the prototypical example of such a model is that of random unitary circuits \cite{nahum2017entanglementgrowthRUCs, nahum2018opspreadingRUCs, vonkeyserlingk2018ophydrodynamicsRUCs, bertinipiroli2020scrambling, chan2018periodicrucs}, for which it was shown that out-of-time-order correlators (OTOCs) and the spectral form factor - amongst a number of other interesting dynamical quantities - are not only exactly calculable but also quantum chaotic, with a ballistic growth of generic local operators revealed by the OTOC \cite{nahum2018opspreadingRUCs,vonkeyserlingk2018ophydrodynamicsRUCs} and random-matrix level statistics revealed by the spectral form factor \cite{chan2018periodicrucs}.
\par
These calculations, however, require averaging over many realisations of the random circuit\footnote{This isn't really a problem for the spectral form factor - which, by definition, requires averaging over an ensemble of matrices anyway - but isn't desirable for the calculation of other quantities, such as OTOCs.}, and have exactness only in the limit of a large local Hilbert space dimension. These conditions can be circumvented in the model of \textit{dual-unitary circuits} (DUCs), where the gates forming the circuit are unitary in both the time direction and the spatial direction of the circuit \cite{bertini2019exactcorrfuncs, bertini2021randomsffDUCs, claeys2020mvqcs, claeys2021ergodic, aravinda2021bernoulli, rather2020ensembles, rather2022construction, borsi2022constructionofduqcs, prosen2021many, bertini2019entanglementspreading, gopalakrishnan2019finitedepthinfinitewidth,  piroli2020exactdynamics,reid2021entanglementbarriersinducs, zhou2022maximal, foligno2023growth, bertini2020opent1, bertini2020opent2, gombor2022superintegrableCA, fritzsch2021eth, ippoliti2022dynamical, ho2022exactemergent, Claeys2022emergentquantum,rampp2023haydenpreskill,zhou2020entangmemchaoticmbs, ippoliti2022fractalduMIPT, ippoliti2022postselectionfree, lu2021stdualityMIPT, claeys2022exactMIPT, kos2023circuitsofspacetime, masanes2023discrete, suzuki2022computational, logaric2023scarsinDUCs, stephen2022universal} (see Section \ref{sec:DUsSubsec}). This property can then be exploited to calculate the spectral form factor \cite{bertini2021randomsffDUCs} and OTOCs \cite{claeys2020mvqcs}, without requiring any averaging (for OTOCs) and without requiring a large local Hilbert space dimension. It is now also generally accepted that generic DUCs exhibit `maximal' chaos, with a spectral form factor that matches the prediction of random matrix theory shown in Ref.~\cite{bertini2021randomsffDUCs}, and a maximal butterfly velocity and exponential decay of the OTOC inwards from the edge of the strict causal light cone (enforced by the geometry of the circuit) shown in Ref.~\cite{claeys2020mvqcs}. They have also been shown to be hard\footnote{The problem of calculating expectation values of local observables at late times under the dynamics of 1+1D DUCs is $\mathsf{BQP}$-complete \cite{suzuki2022computational}.} to classically simulate, in general \cite{suzuki2022computational}.
\par
Remarkably, the analytical tractability of these circuits extends well beyond calculations of OTOCs and the spectral form factor; for instance, correlation functions of local observables \cite{bertini2019exactcorrfuncs} and state  \cite{bertini2019entanglementspreading, gopalakrishnan2019finitedepthinfinitewidth, piroli2020exactdynamics, reid2021entanglementbarriersinducs,zhou2022maximal,foligno2023growth,bertinipiroli2020scrambling} and operator \cite{bertini2020opent1, bertini2020opent2} entanglement entropies are all analytically accessible in DUCs, as well as there existing a class of `solvable' matrix product states for which the time evolution of local observables under any dual-unitary dynamics is exactly calculable \cite{piroli2020exactdynamics}. Again, calculation of these quantities for generic dual-unitary circuits often reveals ergodic behaviour (such as the exponential decay of local correlation functions \cite{bertini2019exactcorrfuncs}), and accordingly DUCs have been used to prove many aspects of thermalisation from first principles \cite{fritzsch2021eth,ippoliti2022dynamical, ho2022exactemergent, Claeys2022emergentquantum}. They have been used to obtain exact results on the use of the Hayden-Preskill protocol to recover quantum information (with it being analytically demonstrable that they dynamically realise perfect decoding) \cite{rampp2023haydenpreskill}, and were also the first non-random Floquet spin model shown to be treatable by entanglement membrane theory \cite{zhou2020entangmemchaoticmbs}. Much of this tractability also survives under the addition of measurements and non-unitarity (allowing the study of measurement-induced phase transitions in the context of these models) \cite{ippoliti2022fractalduMIPT, ippoliti2022postselectionfree, lu2021stdualityMIPT, claeys2022exactMIPT, kos2023circuitsofspacetime}, under weak unitary perturbations away from dual-unitarity \cite{kos2021ducpathint, rampp2023perturbedDUCsopspreading}, and in a recently introduced hierarchical generalisation of dual-unitarity \cite{yu2023hierarchical,foligno2023spreadinggendu,liu2023entangdynamgendu,rampp2023entanglementmem}.
\par
Despite primarily attracting attention as a solvable model of ergodic dynamics, however, DUCs are also capable - when finely tuned - of realising non-ergodic dynamics. In particular, they are capable of hosting \textit{solitons}: local operators which, up to a multiplicative phase, are simply spatially translated at the effective speed of light (i.e.~along an edge of the causal light cone) by the dual-unitary dynamics \cite{bertini2020opent2,gombor2022superintegrableCA}. Dynamical quantities in DUCs with solitons display hallmarks of non-ergodicity - the level statistics may become Poissonian (see numerical results in the Supplementary Material of Ref.~\cite{claeys2021ergodic}), and correlation functions pertaining to the solitons will not decay \cite{bertini2019exactcorrfuncs,claeys2021ergodic}. Moreover, an exponential number of conserved quantities can be constructed from each soliton\footnote{In a finite chain, a condition on the multiplicative phase needs to be satisified in order for this to be strictly true - see Section \ref{sec:results}, Eqns.~\ref{eqn:Q_s} and \ref{eqn:Q_t}.} \cite{bertini2020opent2,gombor2022superintegrableCA}. Typically, the presence of an extensive number of (local) conserved charges is associated with the notion of \textit{integrability} \cite{faddeev1995algebraicaspects,faddeev1987hamiltonian,faddeev1996algebraic,caux2011remarksonquantumintegrability}. Indeed, there exists a family of dual-unitary circuits on qubits that corresponds to a particular point of the trotterised XXZ model, which is solvable by Bethe Ansatz techniques \cite{vanicat2018integrabletrotterization,ljubotina2019ballisticspin,claeys2022corrsinintegrablecircs}. Connections between integrable dual-unitary circuits and cellular automata generated from Yang-Baxter maps have also been explored in Ref.~\cite{gombor2022superintegrableCA}.
\par
Notably, however, the dynamics of solitons in dual-unitary circuit models is rather atypical. Quantum integrable systems are thought to have weakly-interacting quasi-particles (that generate the constants of motion), and in a many-body quantum system undergoing some `generic' unitary dynamics, localised densities of conserved charge will typically diffuse and become `mixed in' with nonconserved operators (such as in random unitary circuits \cite{khemani2018conslawsRUCs,rakovszky2018diffusive}). In contrast, the solitons to be studied herein do not diffuse (the number of sites over which they are supported remains strictly constant) nor mix with other operators (or even other solitons), in that they do not produce any non-conserved operator density under the dynamics of the circuit. They can be thought of as \textit{completely non-interacting} quasi-particles (they do not scatter at all); a particularly strong form of integrability.
\par
It has previously been shown for 1+1D brickwork dual-unitary circuits that the existence of a conserved quantity formed from a sum over $n$ of $2n$-site spatial translations of some finite range operator also implies the existence of a soliton with unit phase \cite{bertini2020opent1}. In this work, we extend on this result by showing a similar one-to-one correspondence between a more general class of conserved quantities (we allow them to be inhomogeneous, by allowing for variation in the finite-range operator over the course of the sum) and solitons with complex phases. Specifically, in Section \ref{sec:results} we prove a more formal version of the following theorem.
\begin{informalthm}
    \textit{Fundamental charges for DUCs} (\textit{informal}). Take a $1$D chain of local quantum systems. If a quantity formed from a linear combination of finite-range operators (i.e.~each with non-trivial support over some finite number - say $w$ - of neighbouring sites) is conserved by a $1+1$D brickwork dual unitary circuit acting on the chain, then it can be broken down into a linear combination of conserved quantities generated by the solitons (which are finite sized, also with support over at most $w$ neighbouring sites) of the circuit. In the thermodynamic limit ($L \rightarrow \infty$), a result in the converse direction holds - a conserved quantity can be constructed from each soliton - giving a one-to-one correspondence between solitons and conserved quantities (see Remark \ref{remark:onetoonecorrespondence}).
\end{informalthm}
Our belief is that this adds to the growing understanding of operator dynamics and non-ergodicity in dual-unitary many-body quantum systems, and also strengthens the sense in which solitons can be seen as fundamental charges for these systems. In essence, given a conserved quantity in a dual-unitary circuit, one only needs to know that it is formed from a linear combination of finite-range operators in order to know that it can be decomposed in terms of solitons.
\par 
The paper is structured as follows. In Section \ref{sec:DUCs}, we recap the model of DUCs, with a focus on solitons and conserved densities. In Section \ref{sec:results}, our main results are presented, including a proof of the aforementioned correspondence between solitons and conserved densities (Theorem \ref{thm:fundamentalcharges}). In Section \ref{sec:constructingmbcqs}, the structure of many-body solitons in DUCs is considered (which are encompassed by Theorem \ref{thm:fundamentalcharges}), and we show that these can be constructed from smaller fermionic and solitonic constituent quantities. In Section \ref{sec:conc}, the results are recapped and directions for future work are discussed.

\section{Dual-unitary circuits}\label{sec:DUCs}
\subsection{Local 1+1D quantum circuits}\label{sec:localqcircs}
The model which will be studied here is that of a brickwork 1+1D local unitary quantum circuit. The spatial part of this model consists of a $1$D geometry of $2L$ sites arranged on an integer lattice with periodic boundary conditions, $x \in \mathbb{Z}_{2L}$. We will often want to deal with sets of sites of the same parity, and so to facilitate this we introduce a notation
\begin{equation}
    \mathbb{E}_x = \{ e \in \mathbb{N} \; | \; e < x, \; e = 0 \mod 2\},
\end{equation}
and
\begin{equation}
    \mathbb{O}_x = \{ o \in \mathbb{N} \; | \; o < x, \; o = 1 \mod 2\},
\end{equation}
to denote the sets of even and odd positive integers in the interval $\left[0,x-1\right]$. To each site we associate a local Hilbert space $\mathcal{H}_1 = \mathbb{C}^d$, such that the Hilbert space of the whole system is $\mathcal{H}_{2L} = \mathcal{H}_1^{\otimes 2L}$; in other words, we have a chain of $2L$ $d$-dimensional qudits.
\par
For the temporal part of the model we define a Floquet operator, $\mathbb{U} \in \textrm{End}(\mathcal{H}_{2L})$, which acts on the chain of qudits as
\begin{equation}\label{eqn:floquetoperator}
    \mathbb{U} = \Pi_{2L}V^{\otimes L}\Pi_{2L}^{-1}U^{\otimes L} = \;\;\;\includegraphics[width=0.4\linewidth, valign=c]{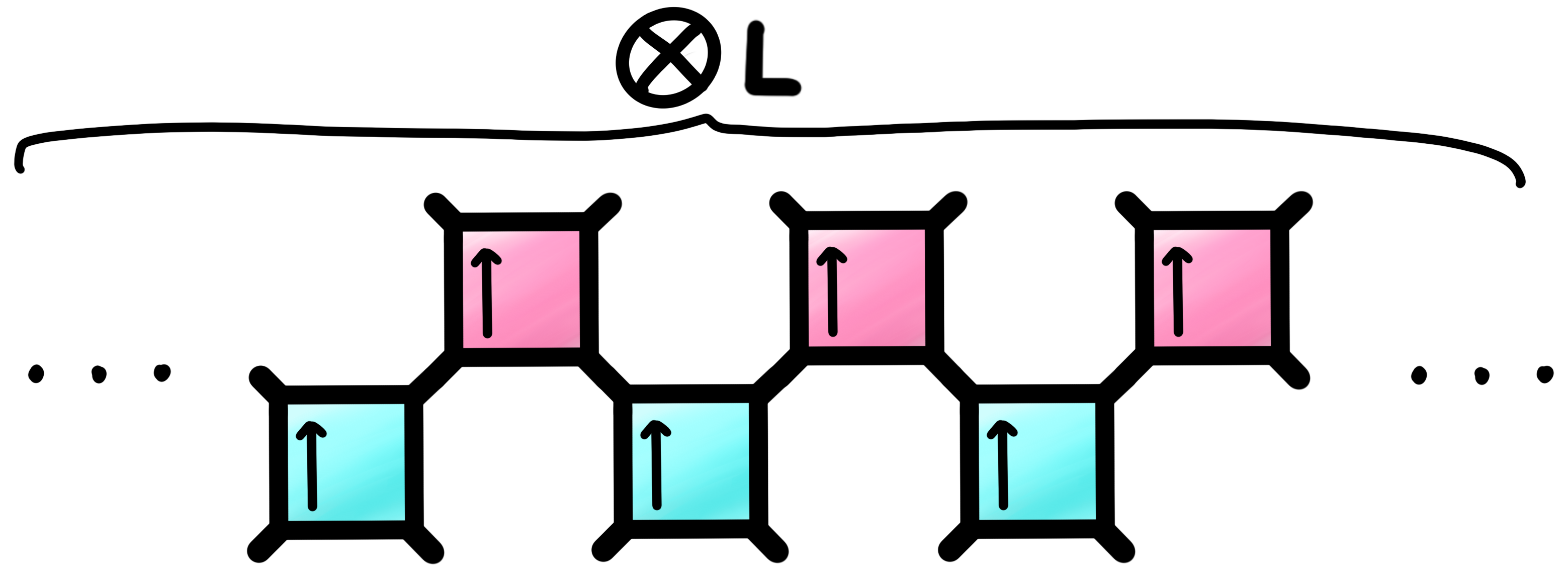}\;\;,
\end{equation}
where 
\begin{equation}
    U = \;\;\;\includegraphics[width=0.07\linewidth, valign=c]{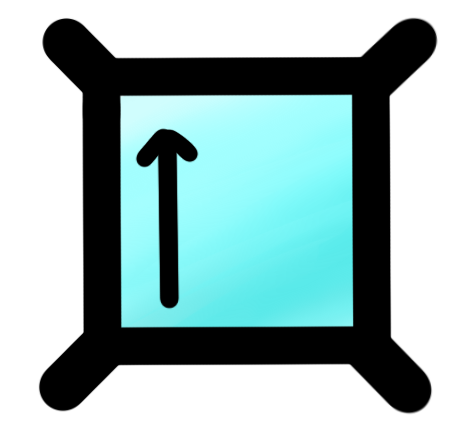}\;\;, \quad V = \;\;\;\includegraphics[width=0.07\linewidth, valign=c]{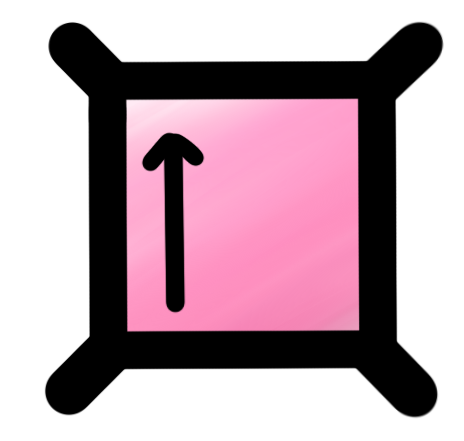}\;\; \quad \in \textrm{End}(\mathcal{H}_1 \otimes \mathcal{H}_1),
\end{equation}
are some 2-qudit unitary gates, and where we have introduced an $l$-site periodic one-site translation operator,
\begin{equation}
    \Pi_{l}\ket{i_1}\otimes...\otimes \ket{i_l} = \ket{i_2}\otimes...\otimes\ket{i_l}\otimes\ket{i_1},
\end{equation}
with the states $\{\ket{i}; i=1,...,d\}$ forming a complete orthonormal basis for $\mathcal{H}_1$. Note that we are considering a (spatially) homogeneous circuit, so the Floquet operator is invariant under spatial translations by two sites,
\begin{equation}
    \Pi_{2L}^2\mathbb{U}\Pi_{2L}^{-2} = \mathbb{U},
\end{equation}
which manifests in parity effects that motivate the $\mathbb{O}$ and $\mathbb{E}$ notation introduced above.
\par
A circuit enacting some unitary evolution for some time $t$ can then be constructed by taking $t$ sequential applications of the Floquet operator (so, save for the variation between the even and odd layers ($U\neq V$ in general), the circuit is also temporally homogeneous), i.e.~
\begin{equation}
    \mathcal{U}(t) = \mathbb{U}^t.
\end{equation}
This gives us a brickwork 1+1D local quantum circuit, as shown in Fig.~\ref{brickworkcircuit}. The structure of the Floquet operator enforces periodic boundary conditions on our chain; we have a $1$D ring topology.
\begin{figure}
      \includegraphics[width=0.9\linewidth]{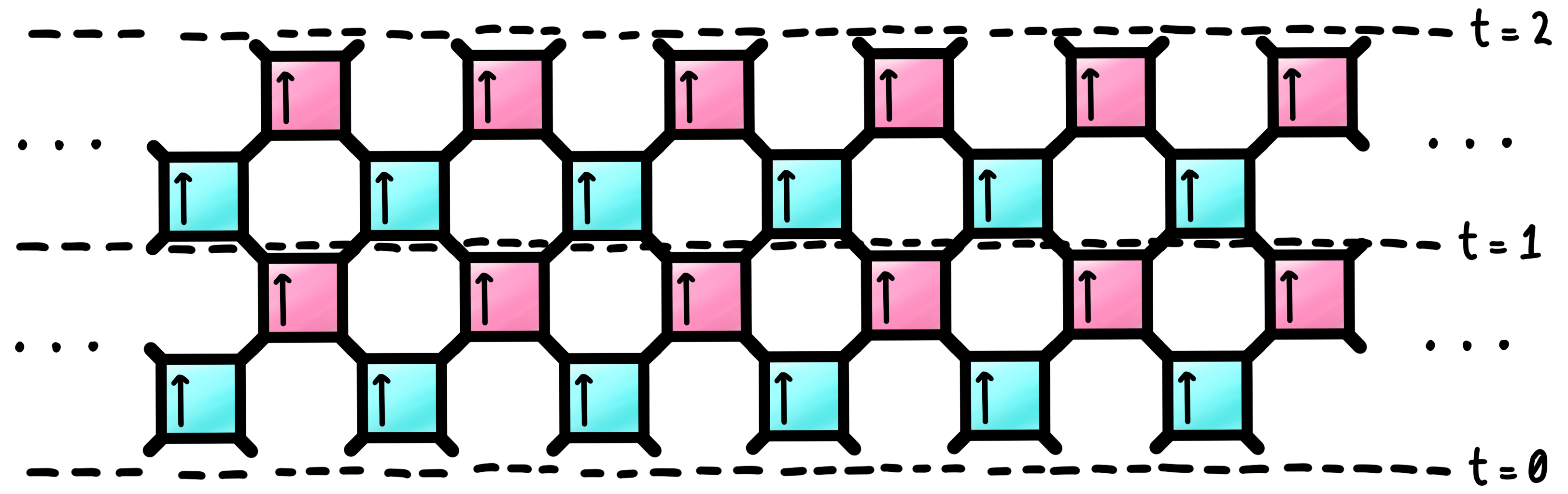}
      \caption{A brickwork circuit shown for two time steps, $t=2$, consisting of two Floquet periods, i.e.~two sequential applications of the Floquet operator $\mathbb{U}$.}
    \label{brickworkcircuit}
\end{figure}
\par
We will be primarily interested in the dynamics of `local' (we use the term local quite loosely here; we will be considering operators which act non-trivially on a connected region of consecutive sites of length $\leq L$, i.e.~half the length of the chain, $2L$) operators under the Floquet operator $\mathbb{U}$. To differentiate these operators from those generating the dynamics ($U$, $V$, $\mathbb{U}$) we will use bold font, representing some local operator on $l$ sites by $\bm{q} \in \mathcal{A}_l$, where $\mathcal{A}_l$ is the algebra of all matrices acting on $\mathcal{H}_1^{\otimes l}$.  We will use $\bm{q}_x \in \mathcal{A}_{2L}$ to denote an operator which acts as identity on all sites except the site $x$, where it acts as the single-qudit operator $\bm{q} \in \mathcal{A}_1$. We can then generalise this notation to represent more extensive operators: $\bm{q}_{x,y} \in \mathcal{A}_{2L}$ denotes the $2$-qudit operator $\bm{q} \in \mathcal{A}_2$ acting on the sites $x$ and $y$; $\bm{q}_{x,...,x+w-1} \in \mathcal{A}_{2L}$ denotes the width-$w$ operator $\bm{q} \in \mathcal{A}_w$ acting on the connected region of the chain $\mathcal{R} = \left[x,x+w-1\right] \subseteq \mathbb{Z}_{2L}, \; |\mathcal{R}| = w$.
Throughout this work we will almost always want to assume, when talking about some operator $\bm{q}_{x,...,y}$ acting over a region of the chain $[x,y]$, that $\tr_x(\bm{q}_{x,...,y}) = \tr_y(\bm{q}_{x,...,y}) = 0$ (i.e.~the operator is strictly traceless at the boundary sites of its region of non-trivial support). Accordingly, we introduce the notation
\begin{equation}
    \bar{\mathcal{A}}_w = \left\{\bm{a} \in \mathcal{A}_w | \tr_0(\bm{a}_{0,...,w-1}) = \tr_{w-1}(\bm{a}_{0,...,w-1})= 0 \right\},
\end{equation}
to explicitly pick out such operators.

\subsection{Dual-unitaries}\label{sec:DUsSubsec}
In order for the circuit $\mathcal{U}(t)$ to be unitary, the local gates $U$ and $V$ must also be unitary: $UU^{\dagger} = U^{\dagger}U = \mathds{1}$ and $VV^{\dagger} = V^{\dagger}V = \mathds{1}$. This can be represented diagrammatically as
\begin{equation}\label{eqn:unitarity}
    UU^{\dagger} = \;\;\; \includegraphics[width=0.07\linewidth, valign=c]{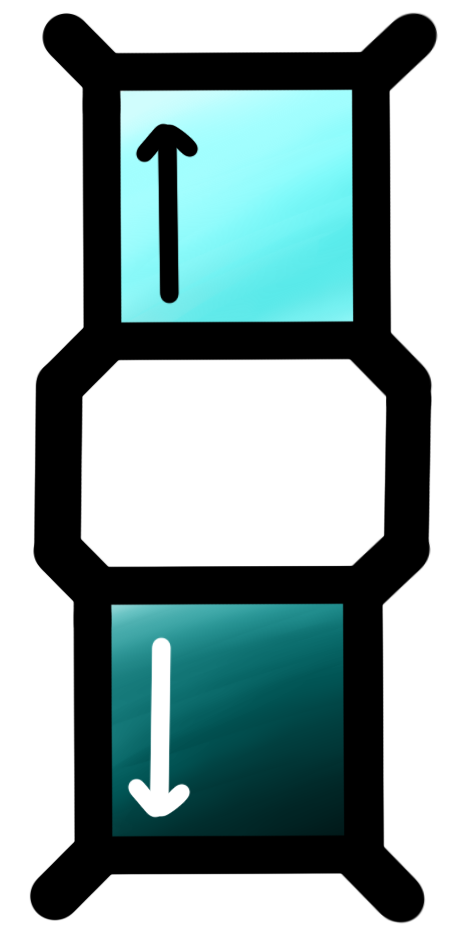} \;\;\; = U^{\dagger}U = \;\;\; \includegraphics[width=0.07\linewidth, valign=c]{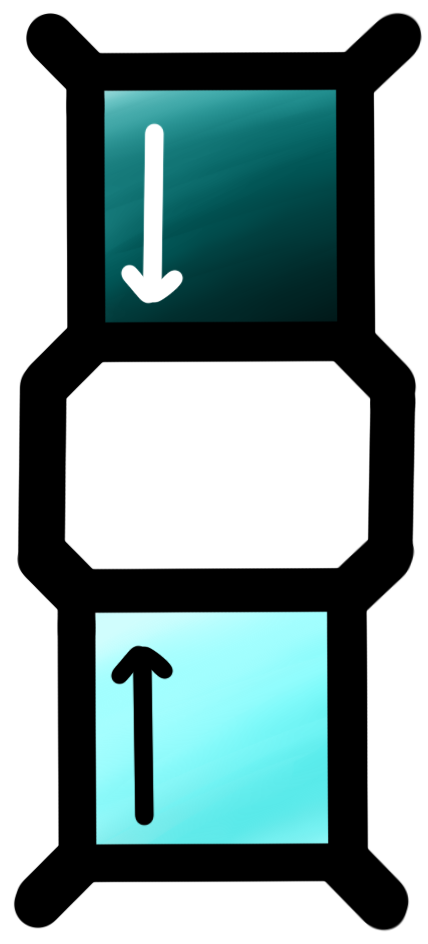} \;\;\; = \;\;\; \includegraphics[width=0.055\linewidth, valign=c]{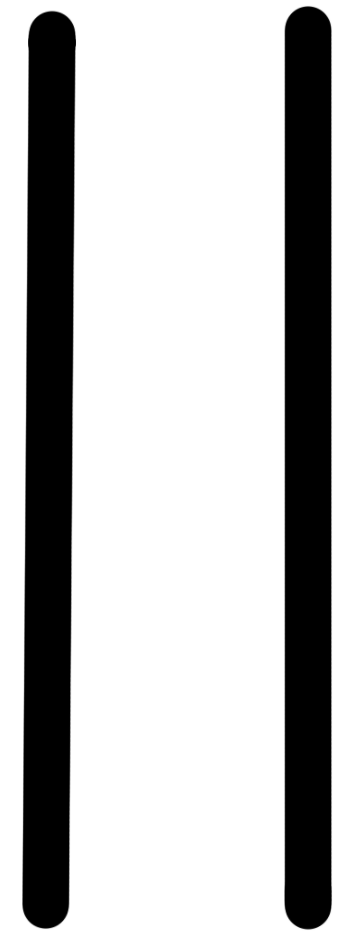} \;\;\;  = \mathds{1},
\end{equation}
where we are using the convention from \cite{masanes2023discrete} that a darker shade represents complex conjugation (i.e.~$\includegraphics[width=0.03\linewidth, valign=c]{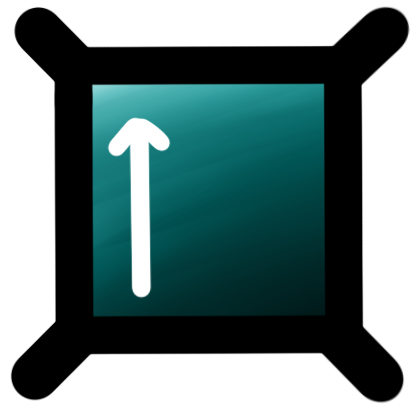} = U^*$), and flipping the arrow (which is included purely for the following illustrative, bookkeeping purposes) represents transposition (i.e~$\includegraphics[width=0.03\linewidth, valign=c]{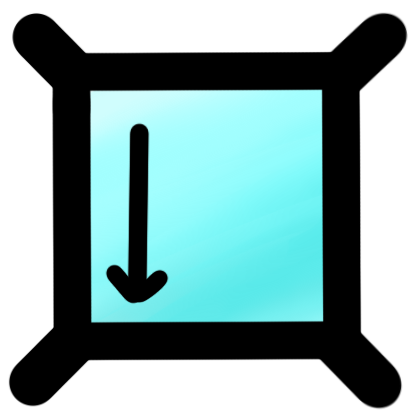} = U^T$), such that $\includegraphics[width=0.03\linewidth, valign=c]{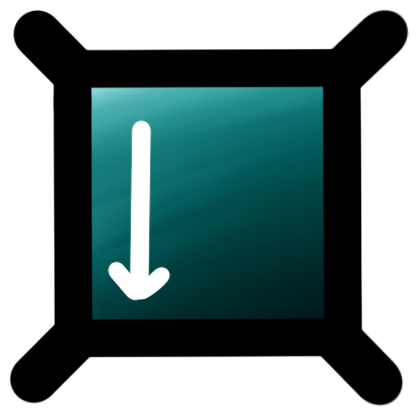} = U^{\dagger}$. We now enforce a further constraint on the circuit, in that it must be \textit{dual-unitary}; the dual-gates, defined via a reshuffling of indices like so
\begin{equation}
    \tilde{U}^{ij}_{kl} = \bra{i} \otimes \bra{j}\Tilde{U}\ket{k}\otimes\ket{l} = \bra{l} \otimes \bra{j}U\ket{k}\otimes\ket{i},
\end{equation}
must also be unitary, $\Tilde{U}\Tilde{U}^{\dagger}= \Tilde{U}^{\dagger}\Tilde{U}= \mathds{1}$ and $\Tilde{V}\Tilde{V}^{\dagger}= \Tilde{V}^{\dagger}\Tilde{V}= \mathds{1}$. 
Diagrammatically, this can be represented as
\begin{equation}
    \Tilde{U}\Tilde{U}^{\dagger} = \;\;\; \includegraphics[width=0.07\linewidth, valign=c]{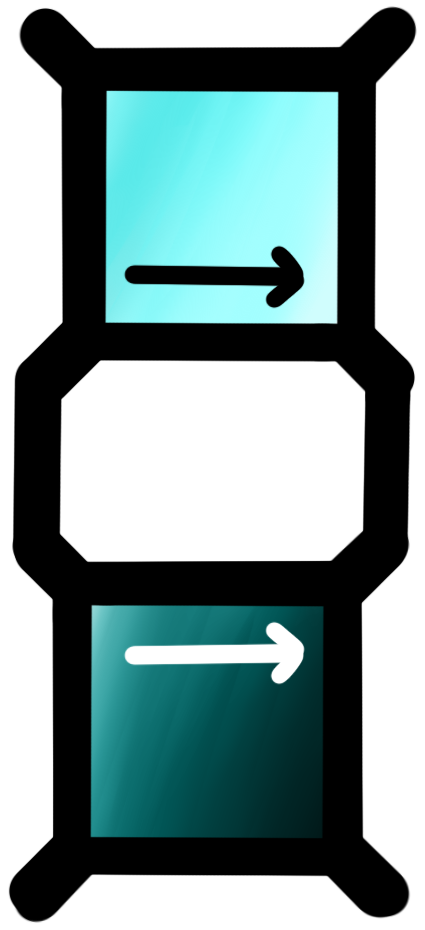} \;\;\; = \tilde{U}^{\dagger}\tilde{U} = \;\;\; \includegraphics[width=0.08\linewidth, valign=c]{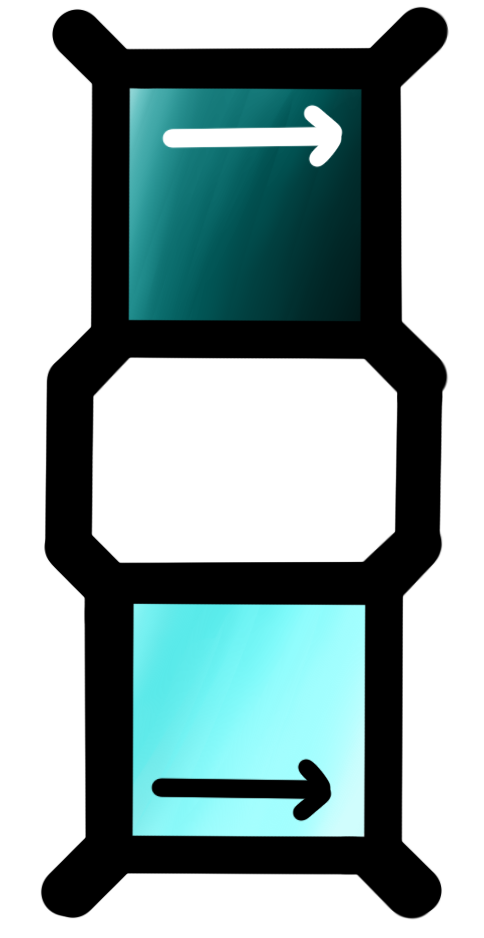} \;\;\; = \;\;\; \includegraphics[width=0.055\linewidth, valign=c]{identity.png} \;\;\; = \mathds{1},
\end{equation}
or alternatively as
\begin{equation}\label{eqn:dual_unitarity}
    \includegraphics[width=0.07\linewidth, valign=c]{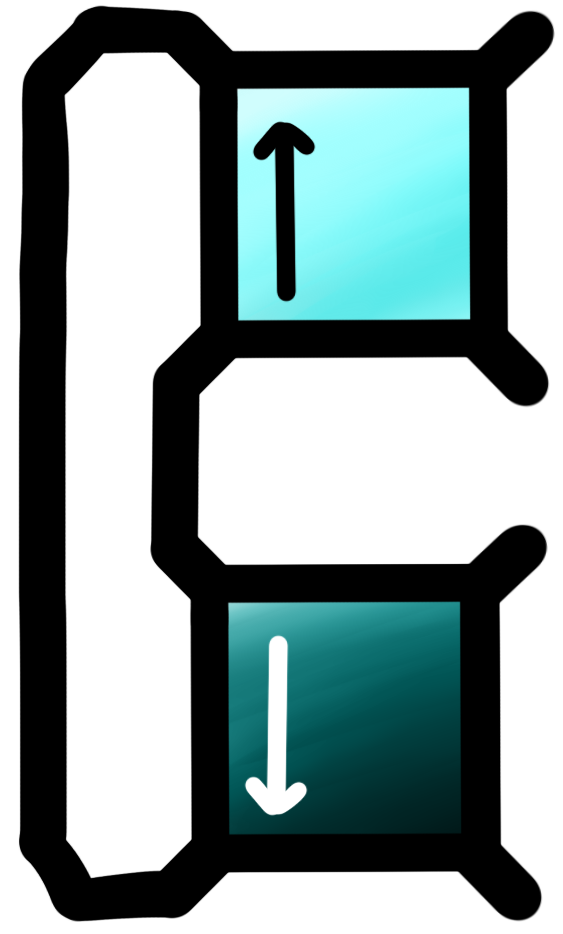} \;\;\; = \;\;\; \includegraphics[width=0.07\linewidth, valign=c]{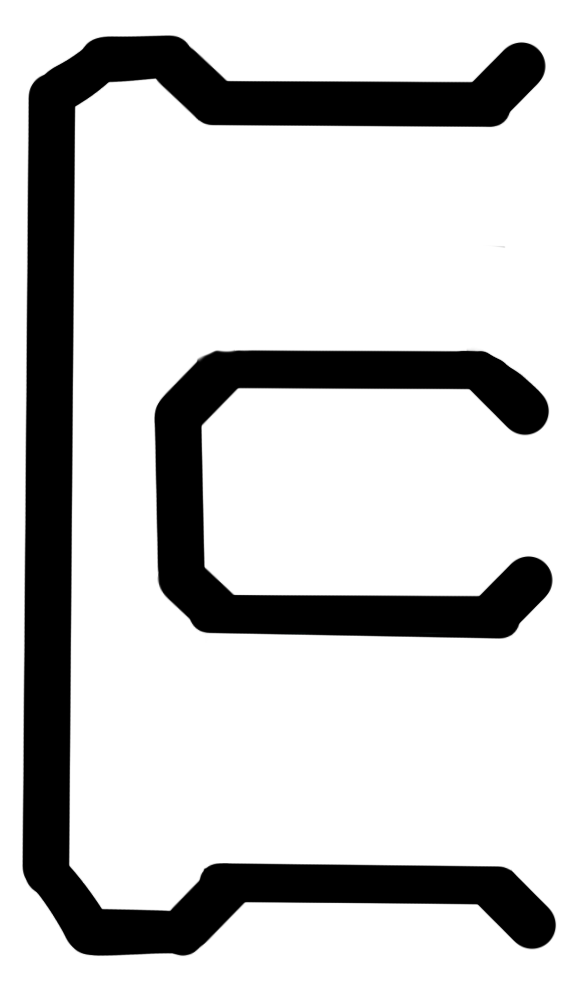} \;\;\;, \;\;\; \includegraphics[width=0.07\linewidth, valign=c]{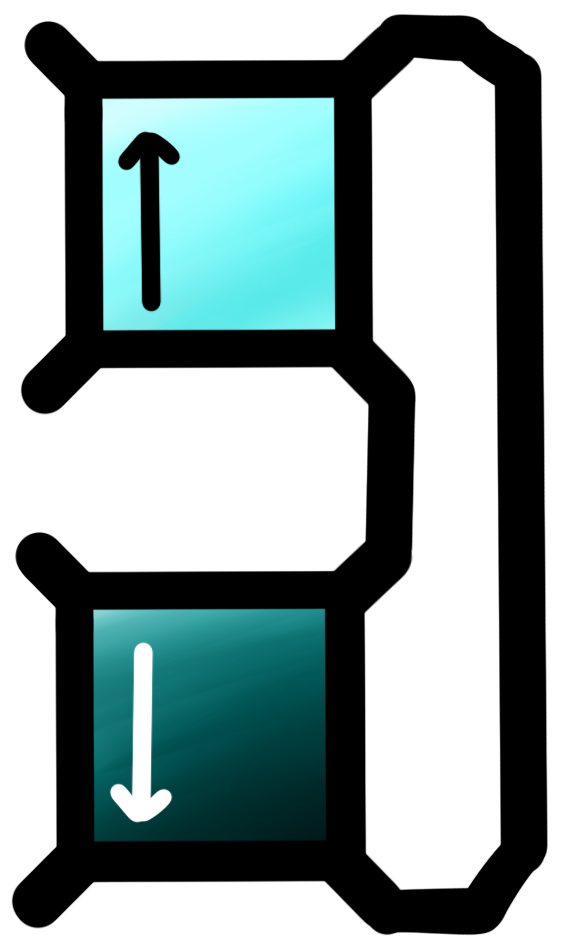} \;\;\; = \;\;\; \includegraphics[width=0.07\linewidth, valign=c]{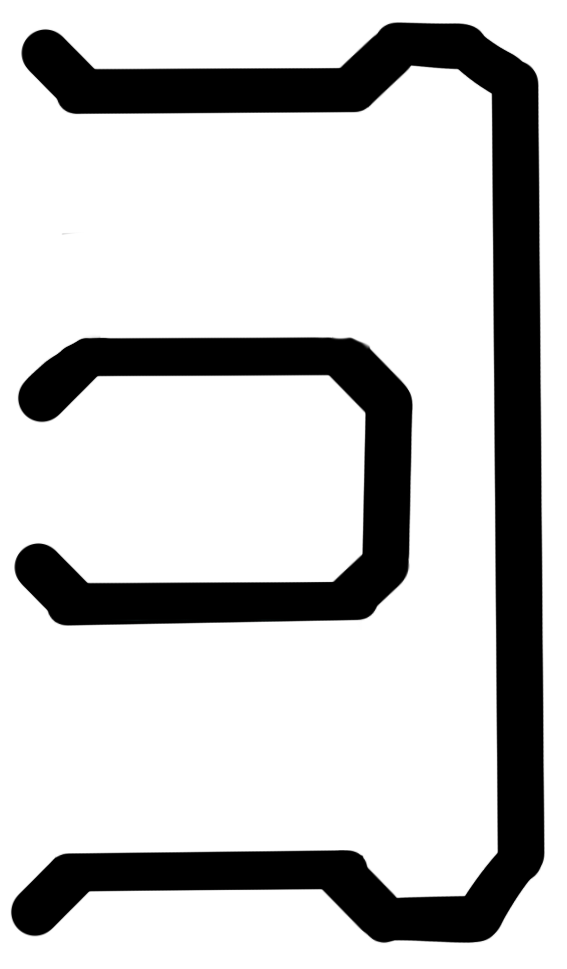} \;\;\;.
\end{equation}
This is equivalent to enforcing unitarity in the spatial direction of our 1+1D circuit.
\par
In \cite{masanes2023discrete}, it was noted that dual-unitaries have the following property:
\par
\begin{property}\label{property1}
Dual unitaries map any single-site traceless operator into a linear combination of operators that are necessarily traceless on the neighbouring site. That is, for a quantity $\bm{a} \in \mathcal{A}_1,\; \tr(\bm{a}) = 0$, initialised on a site $x$ and then acted upon by a dual-unitary gate $U$ coupling sites $x$ and $y$, the trace of $U\bm{a}_xU^{\dagger}$ is vanishing on site $y$,
\begin{equation}\label{eqn:du_property}
    \tr_y(U\bm{a}_xU^{\dagger}) = \;\;\; \includegraphics[width=0.09\linewidth, valign=c]{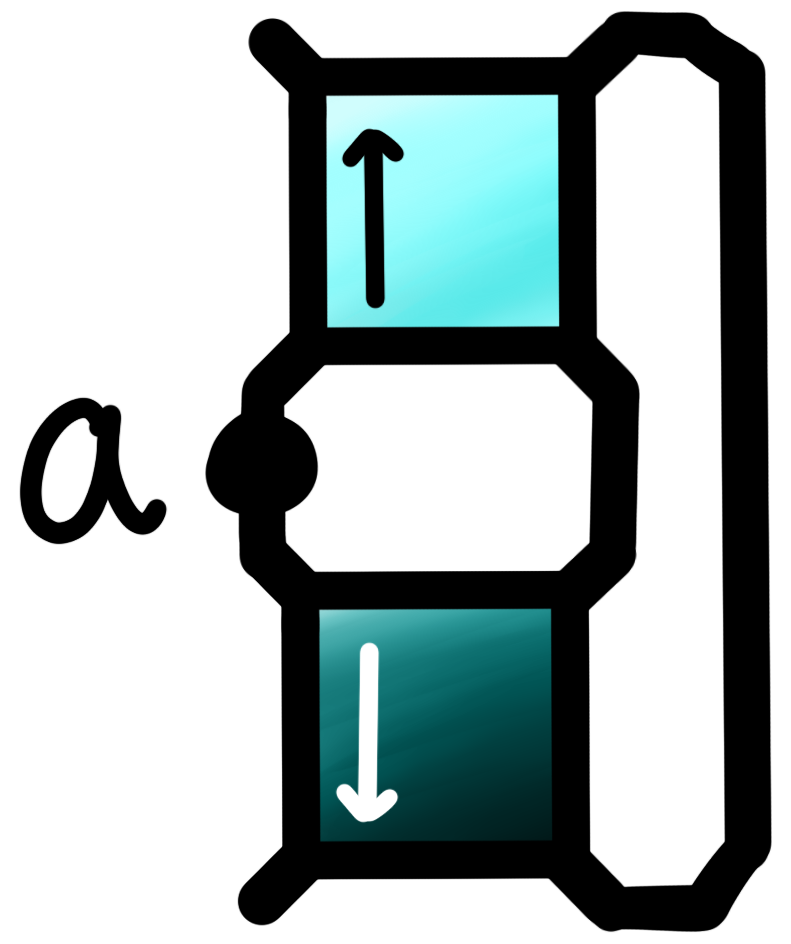} \;\;\; = \;\;\; \includegraphics[width=0.09\linewidth, valign=c]{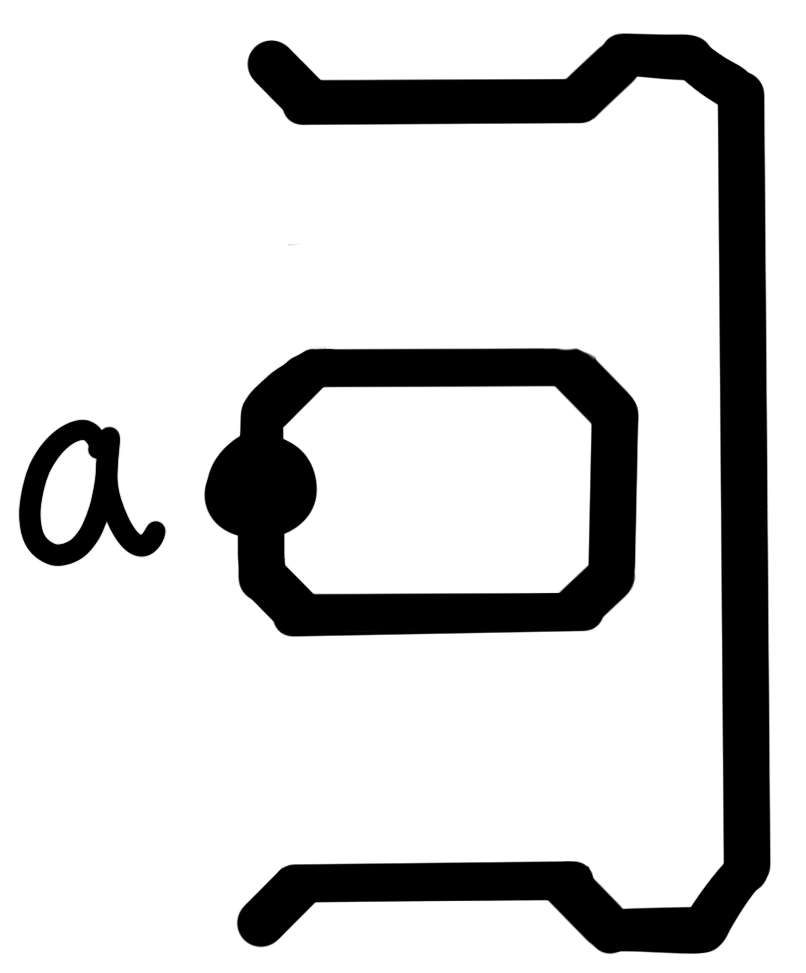} \;\;\; = \tr(\bm{a})\mathds{1}_x = 0,
\end{equation}
where we have used the dual-unitarity of $U$ as given by Eqn.~\ref{eqn:dual_unitarity} to simplify the diagram.
\end{property}
\par
The main result of \cite{bertini2019exactcorrfuncs} states that all correlation functions of local observables in 1+1D dual-unitary circuits are non-zero only on the edge of a causal light-cone set by unitarity and the geometry of the circuit.
Property \ref{property1} is essentially a `base case' of this.
We will use this property heavily in our proof of Theorem \ref{thm:fundamentalcharges}, the main result of this paper.

\subsection{Solitons and local conserved quantities} \label{sec:solitonsandlocalconservedQs}
In \cite{bertini2019exactcorrfuncs}, the authors also showed that correlations on the light cone edge can be calculated by determining the spectrum of the single-qudit map
\begin{equation}
    M_+(\;\cdot\;) = \frac{1}{d}\tr_0\left[ U( \;\cdot\; \otimes \mathds{1})U^{\dagger}\right] = \frac{1}{d} \quad \includegraphics[width=0.09\linewidth, valign=c]{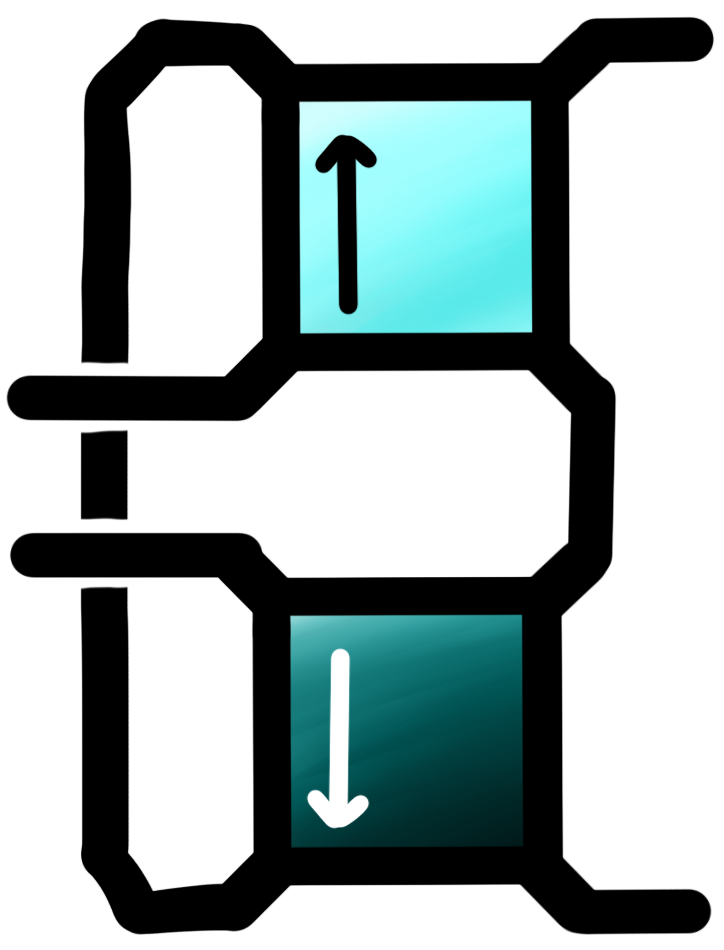} \quad,
\end{equation}
if the local observable $a$ is initialised on an even site, $x \in \mathbb{E}_{2L}$, or the single-qudit map
\begin{equation}
    M_-(\;\cdot\;) = \frac{1}{d}\tr_1\left[ U(  \mathds{1} \otimes \;\cdot\;)U^{\dagger}\right] = \frac{1}{d} \quad \includegraphics[width=0.09\linewidth, valign=c]{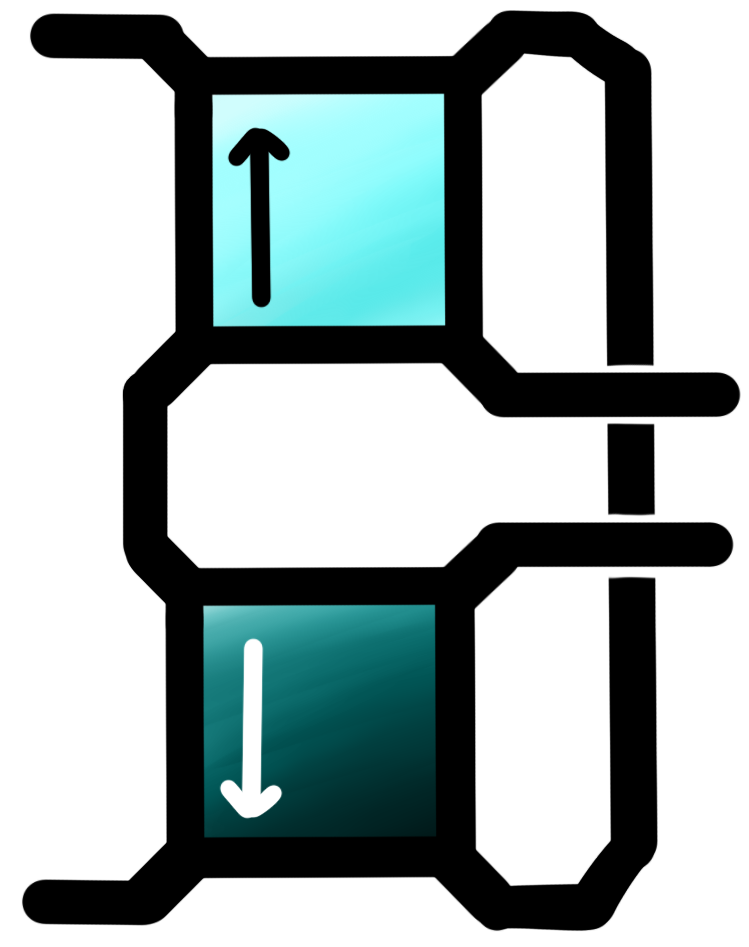} \quad,
\end{equation}
if $a$ is initialised on an odd site, $y \in \mathbb{O}_{2L}$. These maps are unital, CPTP maps, and consequently all their eigenvalues lie within the unit disc, $|\lambda| \leq 1$.
\par
Eigenvectors of the $M_+$ and $M_-$ maps with unimodular eigenvalue, $|\lambda| = 1$, correspond to single-site operators $\bm{a} \in \mathcal{A}_1$ which are simply shifted one site along the chain by each layer of the circuit, acquiring a phase $\lambda$ with each layer \cite{bertini2019exactcorrfuncs}. We note some important properties of these eigenvectors: not only are they eigenvectors of $M_{\pm}$, but they are also strictly localised, particle-like solutions for the dynamics generated by the Floquet operator, $\mathbb{U}$ (they are eigenvectors of the linear map corresponding to unitary conjugation by $\mathbb{U}^L$, specifically); they move at a constant velocity (the effective speed of light, $v_{LC} = 2$) without dissipation or distortion (as per Property \ref{property1} and Ref.~\cite{bertini2019exactcorrfuncs}, their support does not grow with time, nor do their correlations on the lightcone decay); and, if moving in opposite directions, they pass straight through each other without interacting (this can be seen by noting that $U(\bm{a}\otimes\mathds{1})U^{\dagger} = \mathds{1}\otimes \bm{a}$ and $U(\mathds{1}\otimes \bm{b})U^{\dagger} = \bm{b} \otimes \mathds{1}$ (for some $\bm{a},\;\bm{b}\in \mathcal{A}_1$) implies $U(\bm{a}\otimes \bm{b})U^{\dagger} = \bm{b} \otimes \bm{a}$). These properties have led to them being refered to as \textit{solitons} - borrowed from the study of integrability in many-body quantum systems \cite{faddeev1987hamiltonian,faddeev1996algebraic} - or \textit{gliders} - borrowed from the study of quantum cellular automata \cite{gutschow2010time,prosen2023two} - in the dual-unitaries literature \cite{bertini2020opent2,borsi2022constructionofduqcs}. In this work, we will choose to refer to them as solitons. We make explicit a definition of a width-$w$ soliton as follows.
\begin{definition}\label{def:wbodysolitons}
    \textit{Width-$w$ solitons}. An operator $\bm{a} \in \bar{\mathcal{A}}_{w}$ is a right-moving width-$w$ soliton of $\mathbb{U}$ if
    \begin{equation}
        \mathbb{U}\bm{a}_{x,..x+w-1}\mathbb{U}^{\dagger} = \lambda \bm{a}_{x+2,..x+w+1}, \; |\lambda|=1, \quad \forall \; x \in \mathbb{E}_{2L},
    \end{equation}
    or a left-moving width-$w$ soliton of $\mathbb{U}$ if
    \begin{equation}
        \mathbb{U}\bm{a}_{x,..x+w-1}\mathbb{U}^{\dagger} = \lambda \bm{a}_{x-2,..x+w-3}, \; |\lambda|=1, \quad \forall \; x \in \mathbb{O}_{2L},
    \end{equation}
    where $\mathbb{U}$ is a brickwork Floquet operator as defined in Eqn.~\ref{eqn:floquetoperator}.
\end{definition}
Recall that our definition of $\bar{\mathcal{A}}_w$ incorporates an assumption of non-trivial support at the boundary sites $x$ and $x+w-1$; this assumption ensures that it is meaningful to talk about the soliton having non-trivial support over a spatial region of the chain of size $w$.
\par
We note that our definition above does \textit{not} include operators which are (up to a multiplicative phase) spatially translated by just one site ($v = \pm 1$), or which remain still ($v = 0$), under conjugation by $\mathbb{U}$. Such operators have also been referred to as solitons in the literature (see Ref.~\cite{bertini2020opent2}). In Appendix \ref{sec:v1solitons}, we highlight that such operators cannot be many-body solitons in dual-unitary circuits (their stability under $\mathbb{U}$ is at best short-lived), extending the results of Ref.~\cite{bertini2020opent2} to the many-body case, and (along with the results of Appendix \ref{appendix:fate}) show that they cannot contribute to conserved quantities in dual-unitary circuits. In this work we seek to establish a correspondence between conserved quantities and solitons in DUCs; to avoid, for brevity, the need to repeatedly differentiate between the maximal velocity ($v=\pm 2$) solitons defined above and the $v = 0, \pm1$ solitons considered elsewhere when stating our results, we have chosen to incorporate an assumption of maximal velocity in our definition of solitons. To be clear, whenever we talk about solitons throughout the rest of this work we strictly mean the $v = \pm 2$ solitons that fit with Definition \ref{def:wbodysolitons}.
\par
For each soliton, we can construct a quantity $Q$ which is conserved under the dual-unitary dynamics of the Floquet operator, $\mathbb{U}$. For instance, for a 1-body soliton $\bm{a}$ with unit phase ($\lambda=1)$ the quantity 
\begin{equation}\label{eqn:unitphasesolitonconservedQs}
    Q = \sum_{x\in \mathbb{E}_{2L}} \bm{a}_x = \mathbb{U}Q\mathbb{U}^{\dagger},
\end{equation}
will be conserved under the dual-unitary dynamics. This can be seen by noting that $Q$ is invariant under 2-site translations,
\begin{equation}
    \mathbb{U}Q\mathbb{U}^{\dagger} = \sum_{x\in\mathbb{E}_{2L}} \mathbb{U}\bm{a}_x\mathbb{U}^{\dagger} = \sum_{x\in\mathbb{E}_{2L}} \bm{a}_{x+2} = \sum_{x\in\mathbb{E}_{2L}} \bm{a}_x = Q .
\end{equation}
We call $Q$ a \textit{width-1 conserved density}, according to the following definition of \textit{width-$w$ conserved densities}.
\begin{definition}\label{def:wbodydensities}
    \textit{Width-$w$ conserved densities}. Given a Floquet operator $\mathbb{U}$, an operator $Q \in \mathcal{A}_{2L}$ is a width-$w$ conserved density if $\mathbb{U}Q\mathbb{U}^{\dagger} = Q$, and it can be written as
    \begin{equation}
        Q = \sum_x \bm{q}(x)_{x,...,x+w-1},
    \end{equation}
    where $\bm{q}(x) \in \bar{\mathcal{A}}_w$ is a width-$w$ operator that can depend on the site $x$.
\end{definition}
As noted in Ref.~\cite{borsi2022constructionofduqcs}, the solitons form an algebra that is closed under addition and multiplication; products of solitons on sites on the same parity are also solitons, leading to an exponential number of conserved quantities \cite{bertini2020opent2,gombor2022superintegrableCA} (we recap this in further detail in Sec.~\ref{sec:boundsolitons}). If $\bm{a}$ is a soliton with phase $\lambda$, then $\bm{a}^n$ will also be a soliton with phase $\lambda^n$. If $\lambda$ is then a root of unity, there will be some finite $n$ for which a unit phase soliton can be generated.\footnote{There are a few caveats to note here: the product $\bm{a}^n$ may no longer be traceless, though could always be made traceless by subtracting multiples of the identity; it may transpire that $\bm{a}^n = \bm{\mathds{1}}$, which is still a soliton, but trivially so (the product would become zero upon making it traceless); some solitons may not generate a cyclic group under multiplication, with $\bm{a}^n=0$ instead for some finite $n$ (the fermionic conserved quantities considered later herein (Sec.~\ref{sec:fermionic}) would be one such example).}
\par
For these unit phase solitons, a strong correspondence with local conserved densities has already been established. In \cite{bertini2020opent1}, the authors proved that for all conserved densities of the form
\begin{equation}
    Q^{+} = \sum_{x\in \mathbb{E}_{2L}} \bm{q^{+}}_{x,...,x+w-1} = \mathbb{U}Q^+\mathbb{U}^{\dagger},
\end{equation}
where $\bm{q^{+}}_{x,...,x+w-1}$ is some local operator supported on a region of $w$ consecutive sites from the site $x$ onwards (i.e it acts non-trivially on the interval $[x, x+w-1]$\footnote{In \cite{bertini2020opent1} the authors use a slightly different convention for labelling the sites and defining the Floquet operator, so the statement of the results is slightly different.}), the dynamics of some local charge density, $\bm{q^+}_{x,...,x+w-1}$, will always be solitonic if $\mathbb{U}$ is dual-unitary, in that
\begin{equation}
    \mathbb{U}\bm{q^+}_{x,...,x+w-1}\mathbb{U}^{\dagger} = \bm{q^+}_{x+2,...,x+w+1}, \; \forall \; x\in\mathbb{E}_{2L}.
\end{equation}
Note that the authors only set $\tr_x(\bm{q^+}_{x,...,x+w-1})=0$, so that $\bm{q^+}_{x,...,x+w-1}$ only need strictly act non-trivially on the left-most site, $x$, and so in general $\bm{q^+}_{x,...,x+w-1} \notin \bar{\mathcal{A}}_w$. Consequently, $\bm{q^+}$ could be formed from a sum of solitons, each individually of width less than or equal to $w$, and all with $\lambda=1$.
For the above quantity, $Q^{+}$, we have summed over the even sites, but the results hold analogously if we instead consider the odd sites; if some operator $\bm{q^-}_{x-w+1,...,x}$, supported non-trivially on $w$ sites over the interval $[x-w+1, x]$, where now the right-most site is set to be strictly traceless, $\tr_x(\bm{q^-}_{x-w+1,...,x})=0$, forms a quantity
\begin{equation}
    Q^- = \sum_{x\in\mathbb{O}_{2L}} \bm{q^-}_{x-w+1,...,x},
\end{equation}
which is conserved under the dynamics of the circuit, $\mathbb{U}Q^-\mathbb{U}^{\dagger} = Q^-$, then it must be the case that
\begin{equation}
    \mathbb{U}\bm{q^-}_{x-w+1,...,x}\mathbb{U}^{\dagger} = \bm{q^-}_{x-w-1,...,x-2}, \; \forall \; x\in\mathbb{O}_{2L}.
\end{equation}
The above results of \cite{bertini2020opent1} establish a correspondence between some conservation densities (those with spatial homogeneity) and unit phase solitons in DUCs. It was clear that for each width-$w$ soliton with $\lambda=1$, we could construct a width-$w$ conserved density associated to this soliton; but, as is shown in Ref.~\cite{bertini2020opent1}, we now also know that for any width-$w$ conserved density in a DUC where $\bm{q}(x) = \bm{q}$ (i.e.~such that $Q$ is homogeneous - the local operator generating it does not depend on $x$\footnote{Technically, as shown here, the conserved quantities considered in Ref.~\cite{bertini2020opent1} can incorporate parity effects (and hence some dependence on $x$) - $Q$ can be formed from two independent conserved quantities $Q^+$ (even $x$) and $Q^-$ (odd $x$). These two quantities are themselves spatially homogeneous and independent of each other. The spatial inhomogeneity incorporated in Theorem \ref{thm:fundamentalcharges} cannot be captured by these parity effects alone.}), the local operator $\bm{q}$ evolves solitonically (with $\lambda = 1$) under the dynamics of the circuit. Concretely, we have
\begin{equation}
    Q = \sum_{x\in \mathbb{E}_{2L}} \bm{a}_{x,...,x+w-1} = \mathbb{U}Q\mathbb{U}^{\dagger} \iff \mathbb{U}\bm{a}_{x,...,x+w-1}\mathbb{U}^{\dagger} = \bm{a}_{x+2,...,x+w+1} \; \forall \; x \in \mathbb{E}_{2L},
\end{equation}
and
\begin{equation}
    Q = \sum_{x\in \mathbb{O}_{2L}} \bm{a}_{x,...,x+w-1} = \mathbb{U}Q\mathbb{U}^{\dagger} \iff \mathbb{U}\bm{a}_{x,...,x+w-1}\mathbb{U}^{\dagger} = \bm{a}_{x-2,...,x+w-3} \; \forall \; x \in \mathbb{O}_{2L},
\end{equation}
if $\bm{a}\in\mathcal{A}_w$ and $\mathbb{U}$ is a dual-unitary Floquet operator. We can demand that $\tr_x(\bm{a}_{x,...,x+w-1}) = \tr_{x+w-1}({\bm{a}_{x,...,x+w-1}}) = 0$ - such that $\bm{a} \in \bar{\mathcal{A}}_w$ - in order to fit with our definition of width-$w$ solitons, establishing a one-to-one correspondence between width-$w$ solitons with $\lambda=1$ and width-$w$ conserved densities with $\bm{q}(x) = \bm{a}$.
\par
Our primary contribution in this paper - contained in the following section, Section \ref{sec:results} - is to extend upon this result by showing, for 1+1D dual-unitary circuits, a one-to-one correspondence between the full set of width-$w$ solitons (including those with $\lambda \neq 1$) as per Definition \ref{def:wbodysolitons} and the full set of width-$w$ conserved densities (including those where $\bm{q}(x)$ is not constant with $x$) as per Definition \ref{def:wbodydensities}. This, along with the supporting work contained in Appendices \ref{appendix:fate} and \ref{appendix:solitonbasis} - which highlights the constraints dual-unitarity places on local operators and conserved charges in further detail - sharpens the characterisation of conserved charges in dual-unitary circuits.

\section{Main results}\label{sec:results}

\subsection{Solitons with complex multiplicative phase}\label{sec:complexsolitons}
Earlier, we showed how to construct conserved quantities from solitons with unit phase, $\lambda = 1$. As per Definition \ref{def:wbodysolitons}, however, we can also have solitons with complex, non-unit phases, $\lambda \in \mathbb{C}$ - it remains possible to construct an associated conserved quantity,
\begin{equation}\label{eqn:complexsolitonconservedQs}
    \mathbb{U}\bm{a}_x\mathbb{U}^{\dagger} = \lambda \bm{a}_{x+2}, \; x \in \mathbb{E}_{2L}, \; \lambda^{L} = 1 \implies Q = \sum_{x\in\mathbb{E}_{2L}} \lambda^x \bm{a}_x = \mathbb{U}Q\mathbb{U}^{\dagger},
\end{equation}
as long as $\lambda$ is some $L$-th root of unity, where $2L$ is the length of the chain \cite{borsi2022constructionofduqcs}.
It is these width-$w$ solitons - with complex phases - for which we will be able to establish a correspondence with the set of width-$w$ conserved densities (as per Definition \ref{def:wbodydensities}).
\par
In order to make a convenient constructive definition of the sets of right- and left-moving width-$w$ solitons, we introduce the following maps:
\begin{align}
    \mathcal{M}_{+,w}(\;\cdot\;) &= \frac{1}{d^2}\tr_1\left[V^{\otimes \frac{w+1}{2}}\left(\tr_0\left[U^{\otimes \frac{w+1}{2}}(\;\cdot\;\otimes\mathds{1}){U^{\dagger}}^{\otimes \frac{w+1}{2}}\right] \otimes \mathds{1} \right){V^{\dagger}}^{\otimes \frac{w+1}{2}}\right] \;\; \in \textrm{End}(\mathcal{H}_w), \\
    & = \frac{1}{d^2} \quad \quad \includegraphics[width=0.2\linewidth, valign=c]{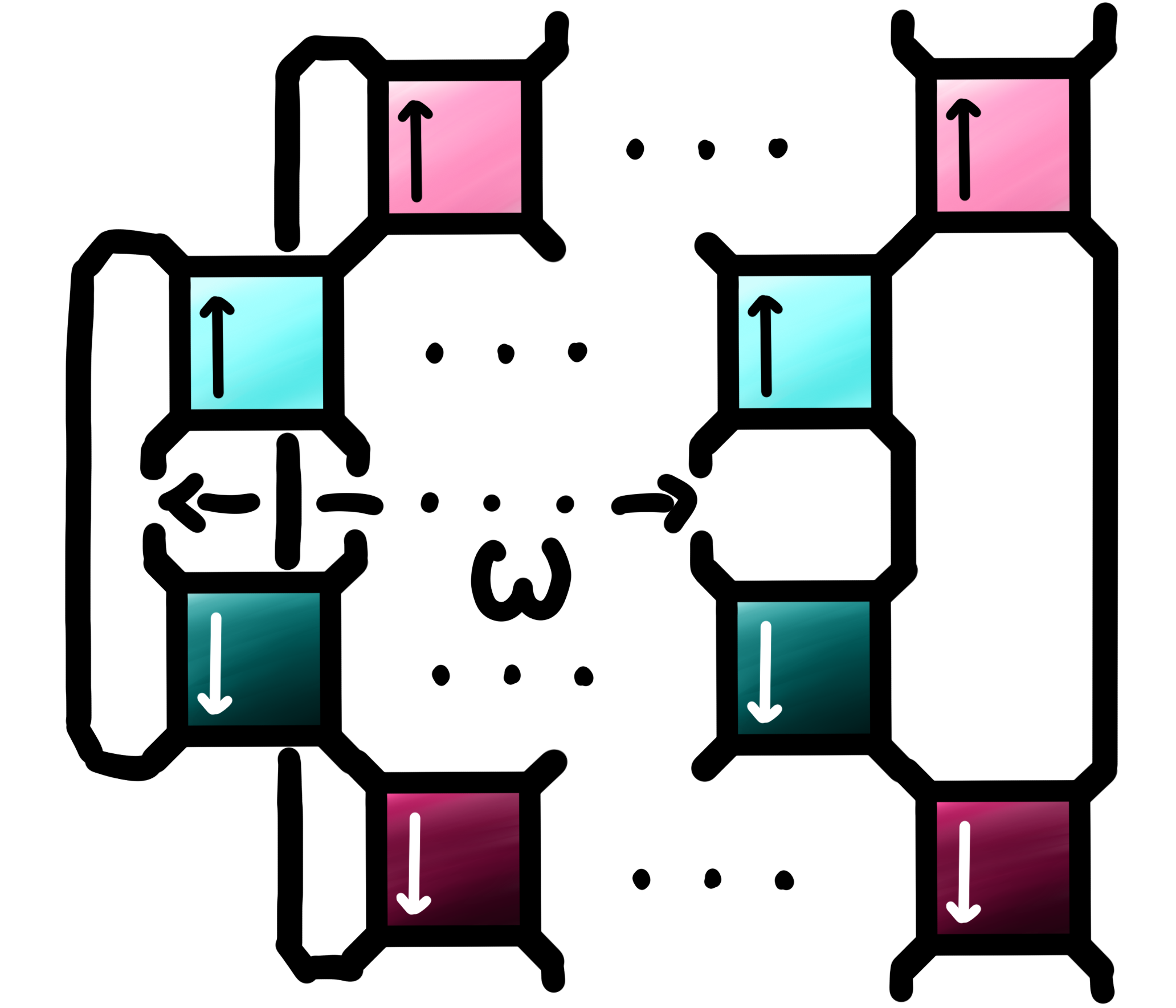} \quad ;
\end{align}
and
\begin{align}
    \mathcal{M}_{-,w}(\;\cdot\;) &= \frac{1}{d^2} \tr_{w-1}\left[V^{\otimes \frac{w+1}{2}}\left(\mathds{1} \otimes \tr_{w}\left[U^{\otimes \frac{w+1}{2}}(\mathds{1}\otimes \;\cdot\;){U^{\dagger}}^{\otimes \frac{w+1}{2}}\right]\right){V^{\dagger}}^{\otimes \frac{w+1}{2}}\right] \;\; \in \textrm{End}(\mathcal{H}_w), \\
    & = \frac{1}{d^2} \quad \quad \includegraphics[width=0.2\linewidth, valign=c]{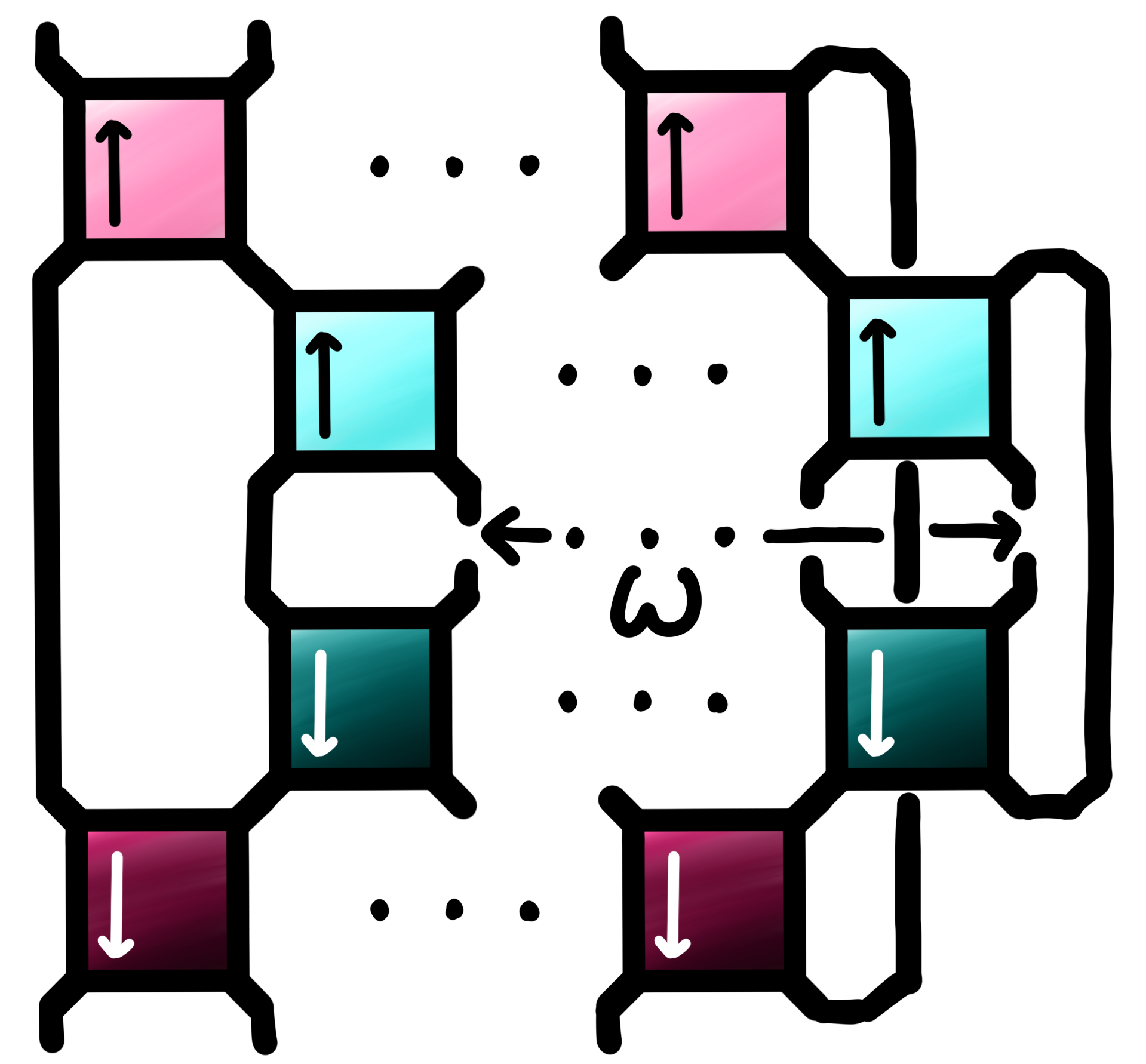} \quad ,
\end{align}
where we have used $\includegraphics[width=0.03\linewidth, valign=c]{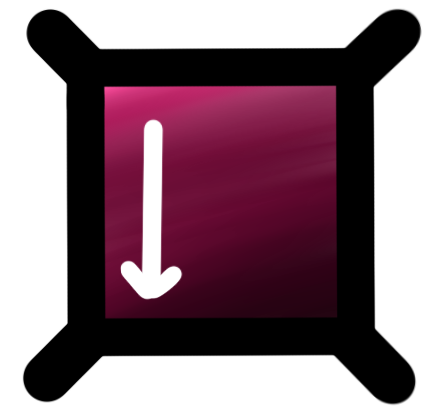} = V^{\dagger}$. These are generalisations of the $M_{\pm}$ maps introduced in \cite{bertini2019exactcorrfuncs}, and the width-$w$ solitons are eigenvectors of them with unimodular eigenvalue. They are defined over two layers of the circuit as in this work we allow, for the purpose of generality, for inhomogeneity between the even and odd layers of the circuit - the gates forming the Floquet operator, $U$ and $V$, are not necessarily the same. 
For the right-moving width-$w$ solitons we define a subset
\begin{equation}\label{eqn:set_S}
    \mathbb{S}_{+,w} = \{\bm{a_k} \;|\; k\neq0, \; \mathcal{M}_{+,w}(\bm{a_k}) = \lambda^{+}_{k,w}\bm{a_k}, \; |\lambda^{+}_{k,w}| = 1\},
\end{equation}
of a basis $\{\bm{a_k}\}_{k=0}^{d^{2^w}-1}$ for $\bar{\mathcal{A}}_w$, with $\bm{a_0} = \mathds{1}^{\otimes w}$. Note that we have extended the bold notation for the operator to its associated subscript here, in contrast to the unbolded subscripts used earlier, to differentiate between indexing that pertains to the composition of the operator - as the subscripts in this equation do, labelling distinct solitons - from indexing that pertains simply to the spatial position of the operator on the lattice, as for example in Eqn.~\ref{eqn:complexsolitonconservedQs} above. In later equations (e.g. Eqns.~\ref{eqn:Q_s} and \ref{eqn:Q_t} below) both kinds of subscripts will be needed, necessitating this notation choice.
\par
Similarly, for the left-moving width-$w$ solitons we define a subset
\begin{equation}\label{eqn:set_T}
    \mathbb{S}_{-,w} = \{\bm{b_l} \;|\; l\neq 0, \;\mathcal{M}_{-,w}(\bm{b_l}) = \lambda^{-}_{l,w}\bm{b_l}, \; |\lambda^{-}_{l,w}| = 1\},
\end{equation}
of another basis $\{\bm{b_l}\}_{l=0}^{d^{2^w}-1}$ for $\bar{\mathcal{A}}_w$, with $\bm{b_0} = \mathds{1}^{\otimes w}$. 
\par
For the $\mathcal{M}_{\pm,w}$ maps to be well-defined we require that $(w+1)/2$ is integer, and hence $w$ must be an odd integer. This means that the above definitions only capture solitonic quantities of odd width - it is fairly straightforward to show, however, that we cannot have even-width operators which are spatially translated by two sites by $\mathbb{U}$ (and hence fit our definiton of a soliton by Definition \ref{def:wbodysolitons}) in a brickwork unitary circuit. This fact is clarified in Appendix \ref{appendix:fate} (See the digraph in Fig.~\ref{fig:digraph}; only operators with $w$ odd can remain the same width and be spatially translated by two sites under the action of a dual-unitary Floquet operator $\mathbb{U}$).
\par
Again, we can associate conserved quantities to each of these width-$w$ solitons,
\begin{equation}\label{eqn:Q_s}
    \bm{a_k} \in \mathbb{S}_{+,w}, \; (\lambda^{+}_{k,w})^{L} = 1 \implies Q^{+}_{k,w} = \sum_{x\in\mathbb{E}_{2L}} \left(\lambda^{+}_{k,w}\right)^{\frac{x}{2}}\left[\bm{a_k}\right]_{x,..,x+w-1} = \mathbb{U}Q^{+}_{k,w}\mathbb{U}^{\dagger},
\end{equation}
for the right-moving solitons, and
\begin{equation}\label{eqn:Q_t}
    \bm{b_l} \in \mathbb{S}_{-,w}, \; \left(\lambda^{-}_{l,w}\right)^{L} = 1 \implies Q^{-}_{l,w} = \sum_{x\in\mathbb{O}_{2L}} \left(\lambda^{-}_{l,w}\right)^{-\frac{x-1}{2}}\left[\bm{b_l}\right]_{x,...,x+w-1} = \mathbb{U}Q^{-}_{l,w}\mathbb{U}^{\dagger},
\end{equation}
for the left-moving solitons, where we have in both cases used square brackets around the local operators to separate the indices labelling the basis with the indices labelling the sites of support.
\par
The requirement that the eigenvalues are unity when raised to the power $L$ means that for some solitons (i.e.~those for which $\lambda^L \neq 1$) it may not be possible to write down an associated width-$w$ conserved density. Note, however, that we can get around this restriction and construct larger conserved densities from any soliton by taking non-local products of it with its hermitian conjugate (see Section \ref{sec:boundsolitons}), and that in the case of an infinite chain, $L \rightarrow \infty$, we can ignore the restriction altogether, and obtain a strict mapping from each width-$w$ soliton to a width-$w$ conserved density (see Remark \ref{remark:onetoonecorrespondence}). 
\par
It is these conserved quantities given in Eqns.~\ref{eqn:Q_s} and \ref{eqn:Q_t} which, in a sense, represent a set of fundamental charges for the system, and out of which we can construct any other conserved quantity formed from sums of finite-range terms.  The formal statement and proof of this result follows in Sec.~\ref{sec:fundamentalcharges} below.

\subsection{Dynamical constraints on the spatial support of local operators in dual-unitary circuits}\label{sec:dynamicalconstraints}

Before moving on to the formal proof of Theorem \ref{thm:fundamentalcharges}, we will first establish some technical results that lay the foundation for the proof. In particular, we turn our attention to the constraints that Property \ref{property1} enforces on the support of local operators as they are time-evolved under dual-unitary dynamics.
\par
To facilitate the analysis, we introduce a compressed diagrammatic notation for representing the support of local operators on the chain, and the effect of local gates acting in adjoint action on this support. We use a large red dot to denote a traceless operator on a single site, and a smaller red dot to denote a single site operator with an unspecified trace. We can then represent larger operators on the chain by chaining these dots together; for instance, we would represent an operator supported over $4$ sites, $\bm{q}_{x,...,x+3}$, as
\begin{equation}
    \bm{q}_{x,...,x+3} = \;\;\;\includegraphics[width=0.2\linewidth, valign=c]{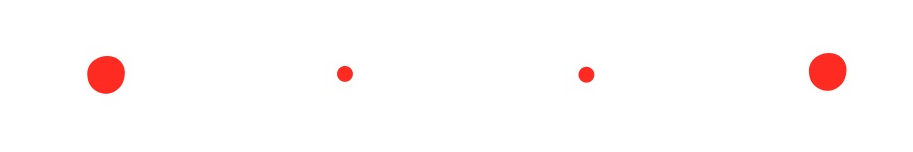}\;\;,
\end{equation}
where $\bm{q} \in \bar{\mathcal{A}}_4$ in this example and our site indexing convention implies that $\tr_x(\bm{q}_{x,...,x+3}) = \tr_{x+3}(\bm{q}_{x,...,x+3}) = 0$, that $\tr_{x+1}(\bm{q}_{x,...,x+3}), \; \tr_{x+2}(\bm{q}_{x,...,x+3}) \in [0,d]$, and that $\tr_y(\bm{q}_{x,x+3}) = d$ for all $y \notin [x,x+3]$. We then use horizontal black lines above neighbouring pairs of dots to represent the adjoint action of each two-qudit gate on them (one could think of this as similar to the `folded' picture used throughout the literature - e.g.~in Ref.~\cite{bertini2020opent1} - in that we are representing unitary conjugation with a single gate). For instance, we could diagrammatically represent the statement of Property \ref{property1} - that a single site operator on a site $x$ is mapped under conjugation by a two-qudit dual-unitary gate into a linear combination of terms that are traceless on the neighbouring site - as
\begin{equation}
    U\bm{a}_xU^{\dagger} = \bm{b}_{x+1} + \bm{c}_{x, x+1} \quad \iff \quad \includegraphics[width=0.15\linewidth, valign=c]{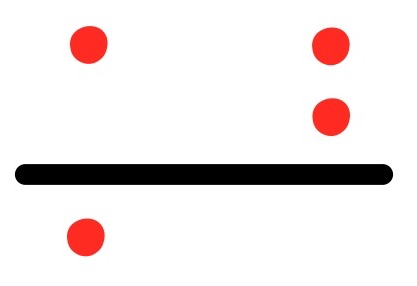}\;\;,
\end{equation}
where $\bm{a}, \bm{b} \in \bar{\mathcal{A}}_1$ are traceless one-qudit operators and $\bm{c} \in \bar{\mathcal{A}}_2$ is a two-qudit operator, with $\tr_x(\bm{c}_{x,x+1}) = \tr_{x+1}(\bm{c}_{x,x+1}) = 0$. Note that chains of red dots stacked on top of each other after (as in the line above), before, or between gates should be taken to represent linear combinations of operators with the support indicated by the dots, with arbitrary (up to normalisation) coefficients. 
\par
As a further example, if we were to evolve a width-1 operator $\bm{q}_x$ by a dual-unitary Floquet operator, $\mathbb{U}$, then we would find by applying Property \ref{property1} that it is mapped to a linear combination of width-1 terms that are supported on the site $x+2$ (assuming $x$ is even), width-$2$ terms supported on sites $x+1$ and $x+2$, and width-$4$ terms supported over the interval $[x-1,x+2]$, i.e.
\begin{equation}\label{eqn:1bodyunderfloquetDU}
    \mathbb{U}\bm{q}_x\mathbb{U}^{\dagger} =  \left[\mathcal{M}_{+,1}(\bm{q})\right]_{x+2} + \bm{q^{\prime}}_{x+1, x+2} + \bm{q^{\prime\prime}}_{x-1,...,x+2},
\end{equation}
where $\mathcal{M}_{+,1}(\bm{q}) \in \bar{\mathcal{A}_1}$, $\bm{q^{\prime}} \in \bar{\mathcal{A}_2}$ and $\bm{q^{\prime\prime}} \in \bar{\mathcal{A}_4}$. Diagrammatically, we would represent this as
\begin{equation}\label{eqn:1bodyunderfloquetDUdiag}
    \mathbb{U}\bm{q}_x\mathbb{U}^{\dagger} \quad \iff \quad \includegraphics[width=0.25\linewidth, valign=c]{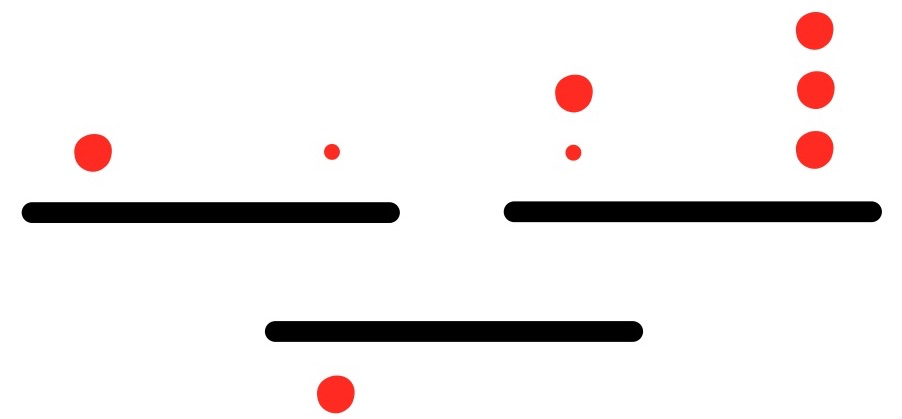}\;\;.
\end{equation}
We have noted in Eqn.~\ref{eqn:1bodyunderfloquetDU} that the width-1 terms supported on site $x+2$ can be calculated by using the 1-qudit light cone transfer matrix $\mathcal{M}_{+,1}$, i.e.
\begin{equation}
    \mathcal{M}_{+,1}(\bm{q}) = \frac{1}{d^2}\tr_1\left[V(\tr_0[U(\bm{q}\otimes\mathds{1})U^{\dagger}]\otimes\mathds{1})V^{\dagger}\right] = \frac{1}{d^2}\includegraphics[width=0.15\linewidth, valign=c]{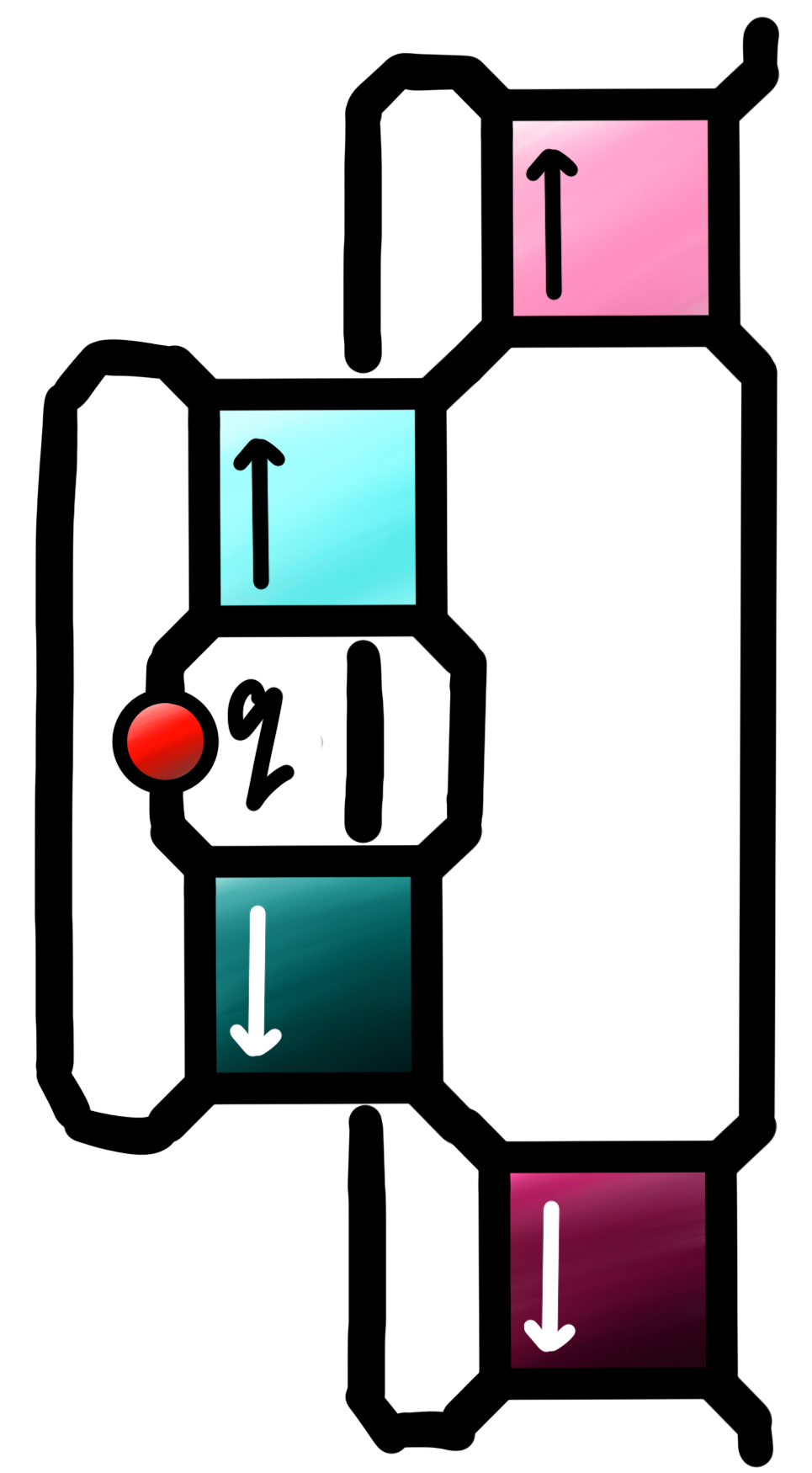},
\end{equation}
Similarly, the width-2 terms $\bm{q^{\prime}}$ could be calculated like so,
\begin{align}
    \bm{q^{\prime}} &= \frac{1}{d^{|\mathcal{R}_+|}}\tr_{\mathcal{R}_+}\left(\mathbb{U}\bm{q}_x\mathbb{U}^{\dagger}\right) - \left(\mathds{1} \otimes \mathcal{M}_{+,1}\left(\bm{q}\right) \right), \\
    &= \frac{1}{d}\includegraphics[width=0.15\linewidth, valign=c]{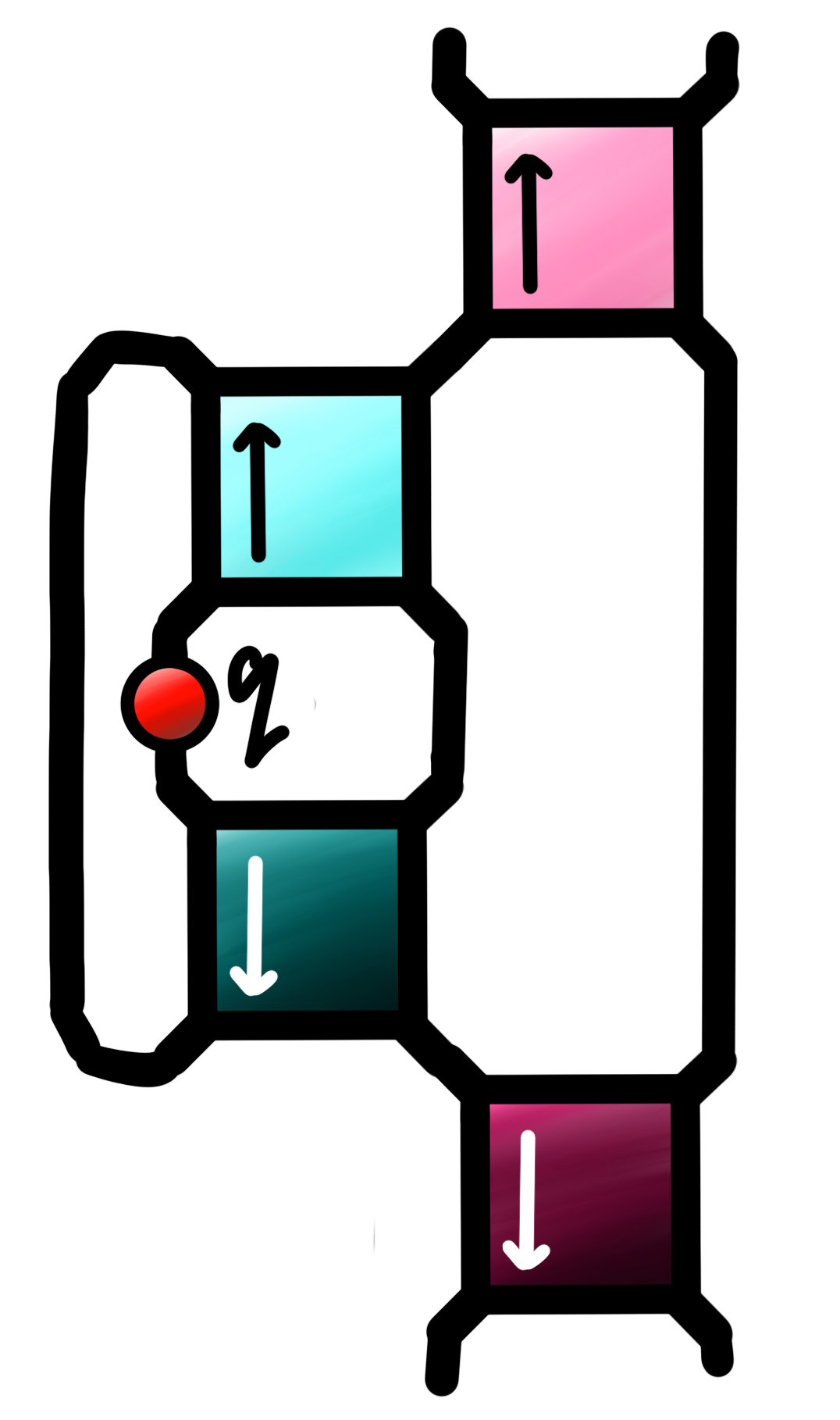} - \Bigg(\mathds{1} \otimes \frac{1}{d^2}\includegraphics[width=0.15\linewidth, valign=c]{M1plusw.png}\Bigg),
\end{align}
where $\mathcal{R}_+ \subseteq \mathbb{Z}_{2L}$ is the complement to the interval over which $\bm{q^{\prime}}$ is supported, i.e. $\mathcal{R}_+ = \{ x \in \mathbb{Z}_{2L} | x \notin [x+1, x+2]\}$. Finally, the width-4 terms $\bm{q^{\prime\prime}}$ can be calculated like so
\begin{align}
    \bm{q^{\prime\prime}} &= \frac{1}{d^{|\mathcal{R}_+|-2}}\tr_{\mathcal{R}_+/\{x-1,x\}}\left(\mathbb{U}\bm{q}_{x}\mathbb{U}^{\dagger}\right) - \left(\mathds{1}^{\otimes 2} \otimes \bm{q^{\prime}} \right) - \left(\mathds{1}^{\otimes 3} \otimes \mathcal{M}_{+,1}\left(\bm{q}\right) \right),\\
    &= \includegraphics[width=0.15\linewidth, valign=c]{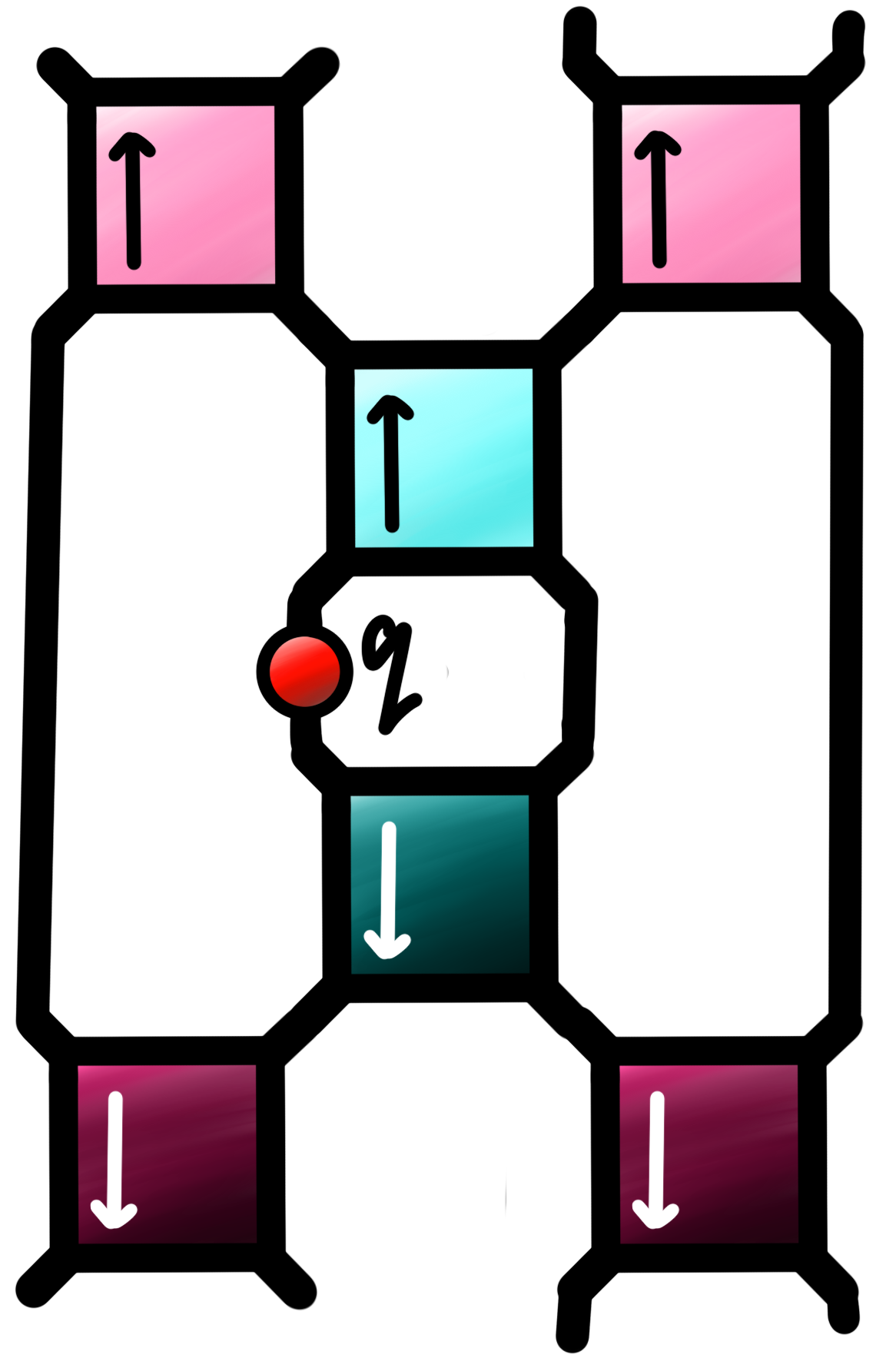} - \Bigg(\mathds{1}^{\otimes2} \otimes \frac{1}{d}\includegraphics[width=0.15\linewidth, valign=c]{1bodyqp.png}\Bigg).
\end{align}
The important lesson to be learned from Eqn.~\ref{eqn:1bodyunderfloquetDU} and the associated diagram is that Property 1 significantly constraints the way in which the support of local operators can change under brickwork Floquet evolution. One way to see this is to consider some product basis $\{\bm{A_{\alpha\beta\gamma\delta}} = \bm{a_{\alpha}} \otimes \bm{a_{\beta}} \otimes \bm{a_{\gamma}} \otimes \bm{a_{\delta}}\}_{\alpha,\beta,\gamma,\delta = 0}^{d^2-1}$ for the 4-site Hilbert space associated to the sites $\{x-1,\;x,\;x+1,\;x+2\}$, with $\{\bm{a_\alpha}\}_{\alpha = 0}^{d^2-1}$ an operator basis for the single site Hilbert space satisfying $\langle \bm{a_{\alpha}}, \bm{a_{\beta}} \rangle = \tr\left(\bm{a_{\alpha}}^{\dagger}\bm{a_{\beta}}\right) = d\delta_{\alpha,\beta}$ and $\bm{a_0} = \bm{\mathds{1}}$. Expanding $\mathbb{U}\bm{q}_x\mathbb{U}^{\dagger}$ in this basis,
\begin{equation}
    \mathbb{U}\bm{q}_x\mathbb{U}^{\dagger} = \sum_{\alpha,\beta,\gamma,\delta = 0}^{d^2-1} c_{\alpha\beta\gamma\delta}\left[\bm{A_{\alpha\beta\gamma\delta}}\right]_{x-1,...,x+2},
\end{equation}
where
\begin{equation}
    c_{\alpha\beta\gamma\delta} = \frac{1}{d^{2L}} \tr\left(\left[\bm{A_{\alpha\beta\gamma\delta}}^{\dagger}\right]_{x-1,...,x+2}\mathbb{U}\bm{q}_x\mathbb{U}^{\dagger}\right),
\end{equation}
all $d^8-1$ of the coefficients $c_{\alpha\beta\gamma\delta}$ (excluding $c_{0000}$, assuming $\bm{q}$ to be traceless) would be non-zero for a generic choice of $\bm{q}$ and a generic choice of $U, V \in U(d^2)$ generating the Floquet operator $\mathbb{U}$. Yet from Eqn.~\ref{eqn:1bodyunderfloquetDU}, we can deduce that enforcing dual-unitarity on $\mathbb{U}$ means that $2d^6-2d^4+d^2-1$ of the coefficients must be zero\footnote{Consult Appendix \ref{sec:dimreductioncalc} for clarification on how this can be determined.}. For small $d$ this can cause quite a significant effective reduction in the Hilbert space; $99$ of the $255$ coefficients are zero for $d=2$, for instance. 
\par
Another (perhaps more insightful) way to see the constraints, however, is directly in the reduced picture illustrated diagrammatically above. The notation used in the diagrams is tailored to focus on the spatial distribution of non-trivial support, with any further details about the specific operator content ignored. This viewpoint is interesting, however, as it highlights the nature of the constraints clearly, even as they vanish asymptotically (in terms of the number of coefficients they constrain to be zero relative to the total Hilbert space dimension) in the limit of a large local Hilbert space dimension, $d\rightarrow\infty$. Considering the $4$ sites $\{x-2,\;x-1,\;x,\;x+1\}$ and the $10$ different ways to place two traceless endpoints over them\footnote{This a stars and bars problem, with $n=2$ stars and $k=4$ bins (or $k-1 = 3$ bars); the standard stars and bars formula $\begin{pmatrix} n + k - 1 \\ n \end{pmatrix}$ gives us $5!/(2!3!) = 10$ combinations.}, all $10$ of these combinations would be produced with some non-zero probability when evolving a generic local operator $\bm{q}_x$ under a generic brickwork Floquet operator, $\mathbb{U}\bm{q}_x\mathbb{U}^{\dagger}$; we see in Eqn.~\ref{eqn:1bodyunderfloquetDUdiag}, however, that enforcing dual-unitarity on $\mathbb{U}$ means we can only generate 3 such combinations.
\par
The procedure outlined in the examples above can be generalised to the case of an initial local operator of arbitrary width $w$, giving us a more comprehensive characterisation of the constraints enforced by Property \ref{property1} on the dynamics of local operators in dual-unitary circuits. The resulting diagrams are shown in Figure \ref{spreadingdiagrams}. The parity of $x$ and $w$ for a given local operator $\bm{q}_{x,...,x+w-1}$ affects how the support can change, so different diagrams are drawn for each of the possible combinations of these parities.

\begin{figure*}
    \begin{subfigure}{.45\textwidth}
\captionsetup{justification=centering}
      \centering
      \includegraphics[width=0.9\linewidth]{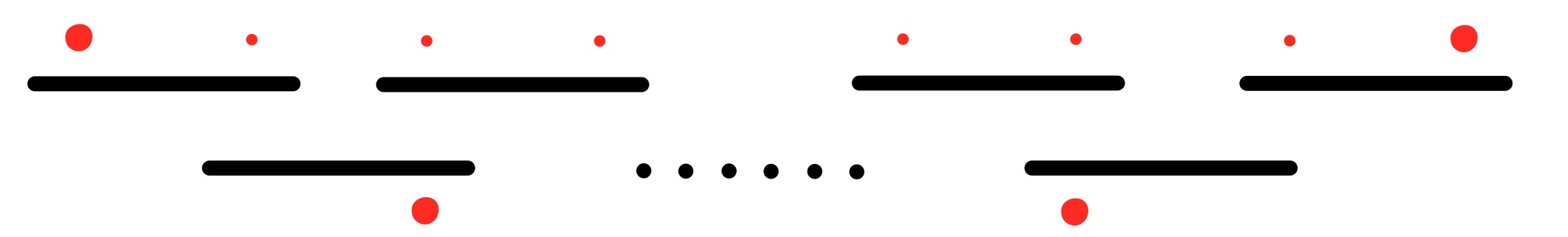}
      \caption{$x$ odd, $w$ even.}
      \label{oddeventerms}
    \end{subfigure}%
    \begin{subfigure}{.45\textwidth}
      \centering
      \includegraphics[width=0.9\linewidth]{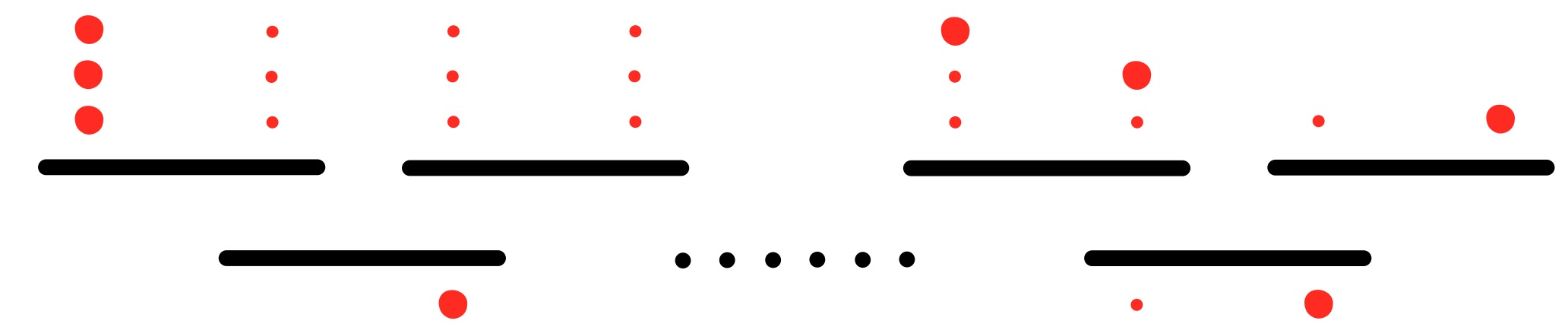}
      \caption{$x$ odd, $w$ odd.}
      \label{oddoddterms}
    \end{subfigure}
    \begin{subfigure}{.45\textwidth}
        \centering
        \includegraphics[width=0.9\linewidth]{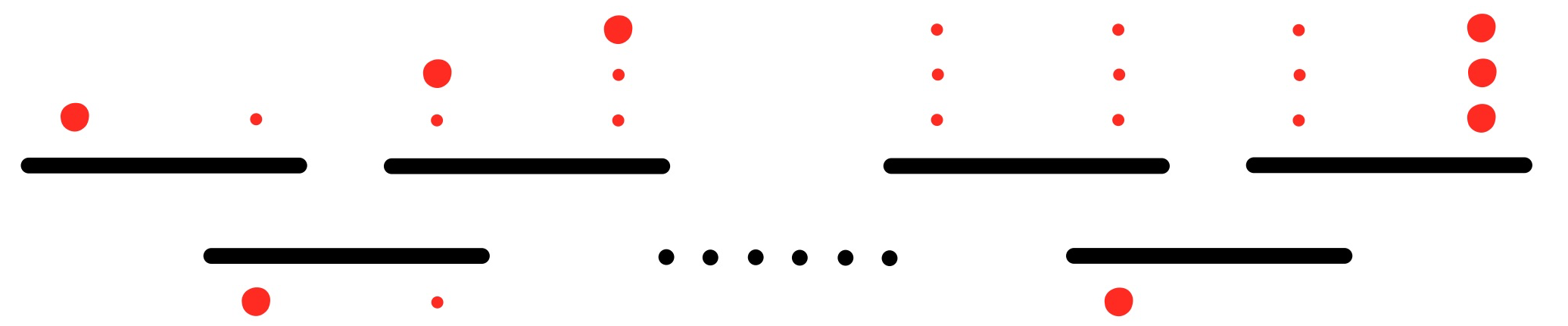}
        \caption{$x$ even, $w$ odd.}
        \label{evenoddterms}
      \end{subfigure}%
      \begin{subfigure}{.45\textwidth}
        \centering
        \includegraphics[width=0.9\linewidth]{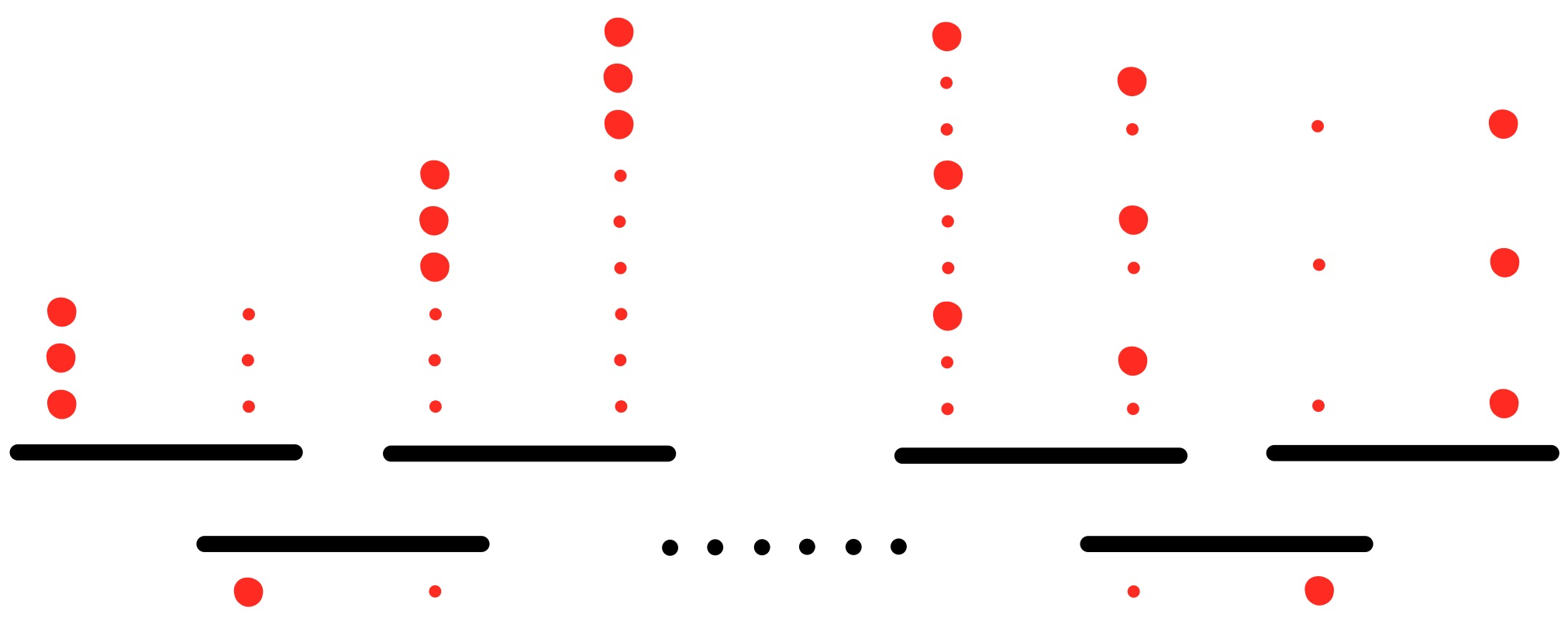}
        \caption{$x$ even, $w$ even.}
        \label{eveneventerms}
      \end{subfigure}
      \begin{subfigure}{.45\textwidth}
        \centering
        \includegraphics[width=0.7\linewidth]{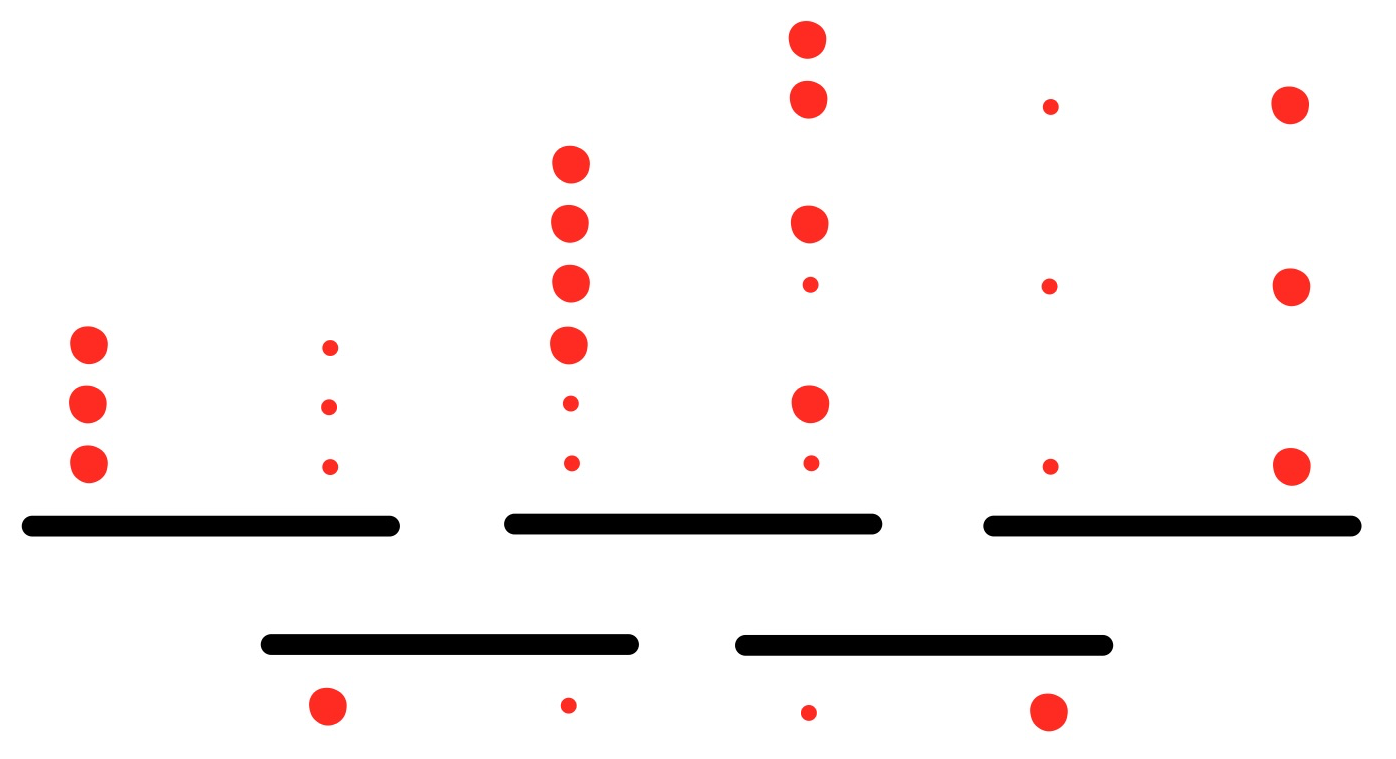}
        \caption{$x$ even, $w=4$.}
        \label{evenevenwidth4}
      \end{subfigure}%
      \begin{subfigure}{.45\textwidth}
        \centering
        \includegraphics[width=0.5\linewidth]{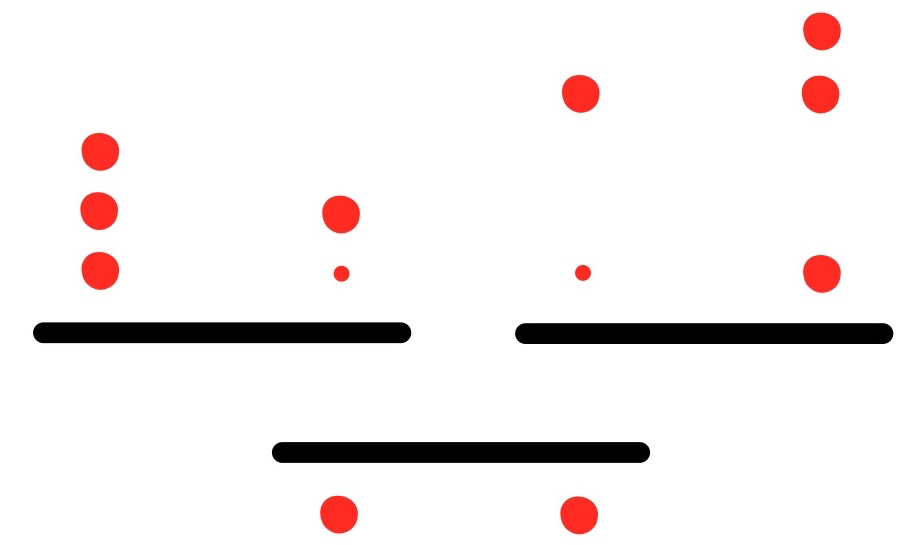}
        \caption{$x$ even, $w=2$.}
        \label{evenevenwidth2}
      \end{subfigure}
\captionsetup{justification=raggedright,singlelinecheck=false}
    \caption{Diagrams showing how the support of finite-range operators acting on an interval of $w$ neighbouring sites can change under conjugation by a dual-unitary Floquet operator $\mathbb{U}$ (as defined in Eqn.~\ref{eqn:floquetoperator}).
    The parity of $x$ (the left-hand most site of non-trivial support) and $w$ turns out to be relevant, due to the restrictions on how the support of an operator can change under dual-unitary dynamics (see Property \ref{property1}), and so different diagrams are shown for the different possible combinations of these two parities. It is necessary to consider two small-$w$ edge cases ($w=4$ and $w=2$) for the $x$-even, $w$-even terms; in all other cases, the value of $w$ does not matter.}
    \label{spreadingdiagrams}
    \end{figure*}

From these diagrams, it is then possible to determine the possible mappings generated by the adjoint action of a dual-unitary Floquet operator between subspaces of the space of local operators $\bm{q}_{x,...,x+w-1}$ defined by the parity of $x$ and $w$. These mappings are summarised in the digraph shown in Fig.~\ref{fig:digraph}. We see in the digraph, as in the example diagrams given above, evidence of constraints on the dynamics of local operators enforced by dual-unitarity; for generic unitary dynamics, the digraph would be complete, with every subspace joined to every other subspace by bi-directional edges. In Fig.~\ref{fig:digraph}, however, we see that the digraph generated by dual-unitary dynamics is oriented (there are no bi-directional edges), and not complete as there is no edge between the even-$x$, odd-$w$ subspace and the odd-$x$, odd-$w$ subspace. We reiterate that this is a direct consequence of Property \ref{property1} - which holds for all dual-unitary gates - and hence applies to all dual-unitary circuits.
\begin{figure}
    \includegraphics[width=0.9\linewidth]{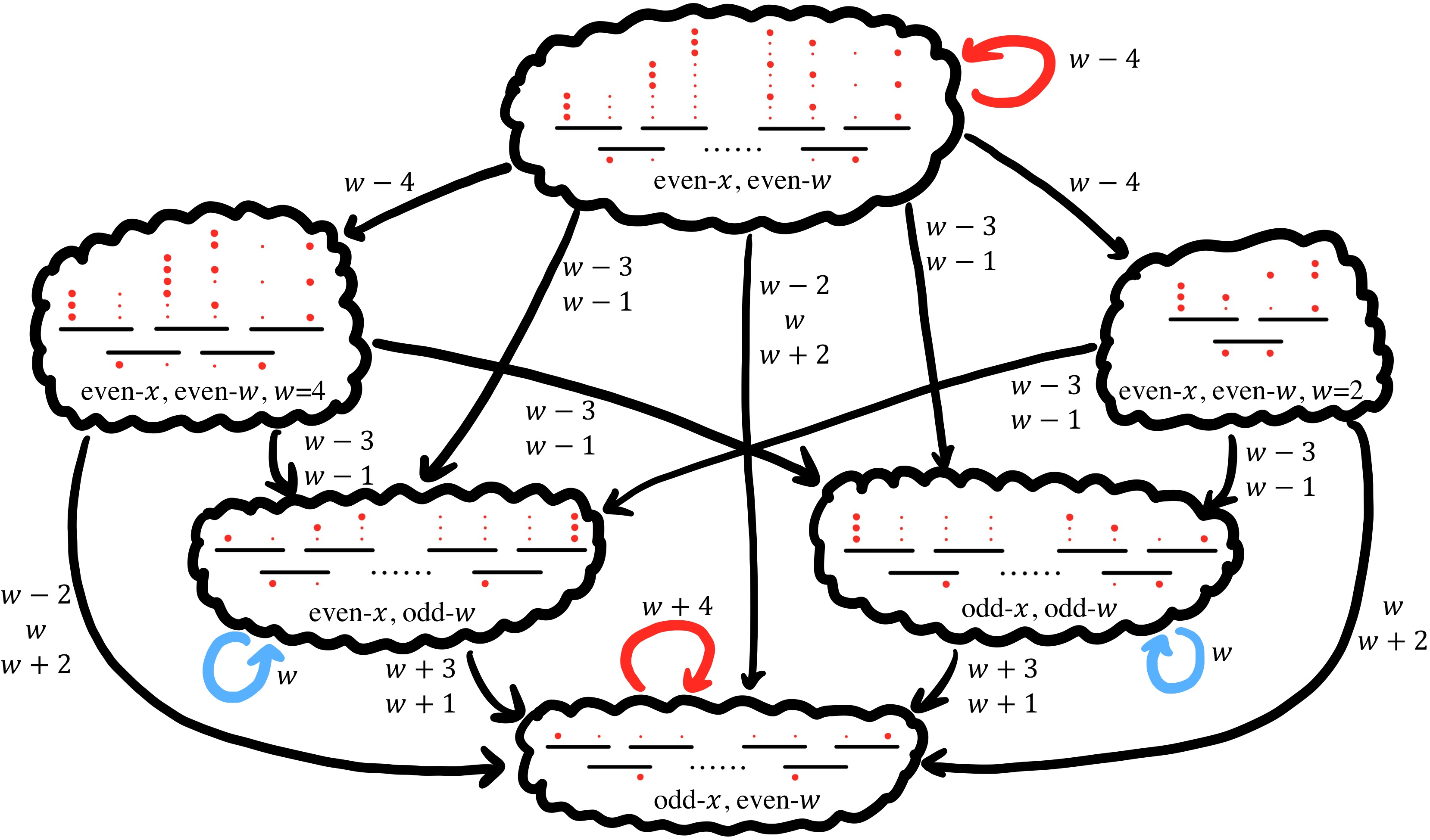}
\captionsetup{justification=raggedright,singlelinecheck=false}
    \caption{When we partition the operator space into subspaces characterised by the parity of $x$ and $w$, for local operators $\bm{q}_{x,...,x+w-1}$, the possible mappings between the subspaces that can be generated by a dual-unitary brickwork Floquet operator $\mathbb{U}$ can be represented as the digraph above. The subspaces are the nodes of the digraph, and the directed edges mappings between them under unitary conjugation by $\mathbb{U}$. The labels on each directed edge denote the width of terms in the subspace at the tip of the edge that can be generated from terms of width $w$ in the subspace at the base of the edge.}
    \label{fig:digraph}
\end{figure}
\par
From the digraph, it is possible to deduce that only operators for which the width $w$ is odd can contribute to conserved quantities in the circuit. This is formalised in the following Lemma.
\begin{lemma}\label{lem:onlyoddw}
    \textit{Only odd-width terms}.
    Let
    \begin{equation}
        \mathcal{Q} = \sum_{x\in\mathbb{Z}_{2L}}\sum_{w\in\mathbb{Z}_{L-3}/0} \bm{q}(x,w)_{x,...,x+w-1}
    \end{equation}
    with $\bm{q}(x,w)\in\bar{\mathcal{A}}_w$, and let $\mathbb{U}$ be a 1+1D brickwork dual-unitary Floquet operator (Eqn.~\ref{eqn:floquetoperator}). Then
    \begin{equation}
        \mathbb{U}\mathcal{Q}\mathbb{U}^{\dagger} = \mathcal{Q} \implies \bm{q}(x,w) = 0 \;  \forall \; w \in \mathbb{E}_{L-3}/0.
    \end{equation}
\end{lemma}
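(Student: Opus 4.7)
The plan is to split $\mathcal{Q}$ along the four mutually orthogonal subspaces introduced above as $\mathcal{Q} = \mathcal{Q}_{ee} + \mathcal{Q}_{eo} + \mathcal{Q}_{oe} + \mathcal{Q}_{oo}$ (which is unambiguous because the width bound $w \leq L-4$ places every summand in exactly one $\mathcal{B}_{ij}$), eliminate $\mathcal{Q}_{ee}$ by Lemma~\ref{lem:noeveneven}, and then show the conservation law forces the remaining even-width piece $\mathcal{Q}_{oe}$ to be itself conserved, so that Lemma~\ref{lem:notjustoddeven} kills it.

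The key step is isolating $\mathcal{Q}_{oe}$. Using the inclusions $\mathbb{U}\mathcal{B}_{oe}\mathbb{U}^{\dagger} \subseteq \mathcal{B}_{oe}$, $\mathbb{U}\mathcal{B}_{eo}\mathbb{U}^{\dagger} \subseteq \mathcal{B}_{eo} \oplus \mathcal{B}_{oe}$, $\mathbb{U}\mathcal{B}_{oo}\mathbb{U}^{\dagger} \subseteq \mathcal{B}_{oo} \oplus \mathcal{B}_{oe}$ read off from Fig.~\ref{fig:digraph}, I project $\mathcal{Q} = \mathbb{U}\mathcal{Q}\mathbb{U}^{\dagger}$ onto $\mathcal{B}_{eo}$. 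The right-hand side collapses down to $\mathbb{U}\mathcal{Q}_{eo}\mathbb{U}^{\dagger}|_{\mathcal{B}_{eo}}$, giving $\mathcal{Q}_{eo} = \mathbb{U}\mathcal{Q}_{eo}\mathbb{U}^{\dagger}|_{\mathcal{B}_{eo}}$. Writing $\mathbb{U}\mathcal{Q}_{eo}\mathbb{U}^{\dagger} = A + B$ with $A \in \mathcal{B}_{eo}$ and $B \in \mathcal{B}_{oe}$, this projection identity just says $A = \mathcal{Q}_{eo}$; Hilbert--Schmidt unitarity of $\mathbb{U}(\cdot)\mathbb{U}^{\dagger}$ combined with orthogonality of the $\mathcal{B}_{ij}$ then gives $\|\mathcal{Q}_{eo}\|^2 + \|B\|^2 = \|A\|^2 + \|B\|^2 = \|\mathbb{U}\mathcal{Q}_{eo}\mathbb{U}^{\dagger}\|^2 = \|\mathcal{Q}_{eo}\|^2$, forcing $B = 0$. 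In other words, $\mathbb{U}\mathcal{Q}_{eo}\mathbb{U}^{\dagger}$ never leaks into $\mathcal{B}_{oe}$. Running the identical argument for $\mathcal{Q}_{oo}$ yields $\mathbb{U}\mathcal{Q}_{oo}\mathbb{U}^{\dagger}|_{\mathcal{B}_{oe}} = 0$ as well.

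Projecting the conservation law onto $\mathcal{B}_{oe}$ is now free of cross-terms and simplifies to $\mathcal{Q}_{oe} = \mathbb{U}\mathcal{Q}_{oe}\mathbb{U}^{\dagger}$; that is, $\mathcal{Q}_{oe}$ is by itself a conserved quantity composed entirely of $\mathcal{B}_{oe}$ terms, and Lemma~\ref{lem:notjustoddeven} immediately forces $\mathcal{Q}_{oe} = 0$. Together with $\mathcal{Q}_{ee} = 0$, this yields $\bm{q}(x,w) = 0$ whenever $w$ is even, as required.

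The main delicacy I expect is ensuring the orthogonal decomposition stays clean after one application of $\mathbb{U}$: the $\mathcal{B}_{ij}$ are only guaranteed mutually orthogonal in the regime $w \leq L$ (as discussed in the appendix preamble), and dual-unitary spreading can widen an operator by up to $4$ sites --- precisely why the hypothesis restricts to $w \leq L-4$. Once that bookkeeping is in place, the whole argument reduces to a short norm calculation layered on top of the digraph in Fig.~\ref{fig:digraph}, together with an appeal to the two earlier lemmas of this appendix.
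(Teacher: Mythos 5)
Your proposal is correct and follows essentially the same route as the paper's proof: kill the even-even part with Lemma~\ref{lem:noeveneven}, use the digraph structure (with $\mathcal{B}_{\textrm{oe}}$ as a sink) plus orthogonality and Hilbert--Schmidt norm preservation to show the odd-width parts cannot leak into $\mathcal{B}_{\textrm{oe}}$, and then invoke Lemma~\ref{lem:notjustoddeven} to force $\mathcal{Q}_{\textrm{oe}}=0$. The only difference is organizational --- you establish the no-leakage condition directly (separately for $\mathcal{B}_{\textrm{eo}}$ and $\mathcal{B}_{\textrm{oo}}$) and then apply Lemma~\ref{lem:notjustoddeven}, whereas the paper lumps the odd-width terms together and runs the same norm argument as a contradiction --- which does not change the substance.
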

\begin{proof}
The full proof of this Lemma can be found in Appendix \ref{appendix:fate}. Here, we provide a sketch of the proof, as follows: We see that the digraph representing the way in which the support of a local operator $\bm{q}_{x,...,x+w-1}$ can change under the adjoint action of $\mathbb{U}$ (Figure \ref{fig:digraph}) contains no loops, save for the mappings from the odd-$w$ subspaces back into themselves. In order for a given set of local operators to contribute to a conserved quantity we would require such a loop involving that set of operators; we can hence deduce that these odd-$w$ operators and the aforementioned mappings back into their own subspaces - which, it turns out, are precisely the mappings that the maximal velocity solitons obey - constitute the only possible contributions to conserved quantities in the circuit.
\end{proof}

\subsection{Fundamental charges for dual-unitary circuits}\label{sec:fundamentalcharges}

In Section \ref{sec:dynamicalconstraints} above, we showed that Property \ref{property1} significantly constrains the dynamics of local operators in dual-unitary circuits, and that a consequence of this is that only operators that evolve solitonically - at least in the sense in which their support changes under the dual-unitary dynamics - can contribute to conserved quantities in the circuit. We are now ready to build on this and prove the main result of the paper, Theorem \ref{thm:fundamentalcharges}, establishing a deep connection between solitons and conserved quantities in dual-unitary circuits.

\begin{theorem}\label{thm:fundamentalcharges}
    \textit{Fundamental charges for DUCs}. Consider a brickwork dual-unitary Floquet operator, $\mathbb{U}$, as defined in Eqn.~\ref{eqn:floquetoperator}, acting on a chain of $2L$ qudits. Let $\mathcal{Q}$ be a conserved quantity under $\mathbb{U}$ that is formed from a linear combination of finite-range operators that are each supported non-trivially on at most $L-4$ consecutive sites, i.e.
    \begin{equation}\label{eqn:thm1Q}
        \mathbb{U}\mathcal{Q}\mathbb{U}^{\dagger} = \mathcal{Q} = \sum_{w \in \mathbb{Z}_{L-3}/0}\sum_{x\in\mathbb{Z}_{2L}} \bm{q}(x,w)_{x,...,x+w-1},
    \end{equation}
    where $\bm{q}(x,w) \in \bar{\mathcal{A}}_w$ is an operator that potentially depends on $x$ and $w$. Any such $\mathcal{Q}$ can always be written as
    \begin{equation}
        \mathcal{Q} = \sum_{w \in \mathbb{O}_{L-3}} \left( \sum_{k: \; \bm{a_k}\in \mathbb{S}_{+,w}} \alpha_{k,w}Q^{+}_{k,w} + \sum_{l: \; \bm{b_l}\in \mathbb{S}_{-,w}} \beta_{l,w}Q^{-}_{l,w}\right), 
    \end{equation}
    where the sets $\mathbb{S}_{+,w}$, defined in Eqn.~\ref{eqn:set_S}, and $\mathbb{S}_{-,w}$, defined in Eqn.~\ref{eqn:set_T}, contain the left- and right-moving solitons of $\mathbb{U}$ respectively, and where $Q^{+}_{k,w}$ and $Q^{-}_{l,w}$ are the conserved quantities associated to these solitons, as defined in Eqns.~\ref{eqn:Q_s}, and \ref{eqn:Q_t}.
    \end{theorem}
    \begin{proof}
        By Lemma \ref{lem:onlyoddw}, we know that $\mathcal{Q}$ can only contain terms of odd width,
        \begin{equation}
            \mathcal{Q} = \sum_{w \in \mathbb{O}_{L-3}}\sum_{x\in\mathbb{Z}_{2L}} \bm{q}(x,w)_{x,...,x+w-1}.
        \end{equation}
        Let us consider the terms with even $x$ and a fixed width $w$,
        \begin{equation}
            \mathcal{Q}^{+}_{w} = \sum_{x\in\mathbb{E}_{2L}} \bm{q}(x,w)_{x,...,x+w-1}, \quad w \in \mathbb{O}_{L-3}.
        \end{equation}
        As summarised diagrammatically in Fig.~\ref{evenoddterms}, Property \ref{property1} enforces that such terms are restricted under the adjoint action of $\mathbb{U}$ to evolving into  a linear combination of operators of width $w$, $w+1$, and $w+3$, supported over the regions $\left[x+2, x+w+1\right]$, $\left[x+1, x+w+1\right]$, and $\left[x-1, x+w+1\right]$ respectively. Following this, we note that the width-$w$ component of this linear combination, supported on the interval $[x+2, x+w+1]$, is given by $\mathcal{M}_{+,w}\left(\bm{q}(x,w)\right)$. This allows us to then write
        \begin{equation}
            \mathbb{U}\mathcal{Q}^{+}_{w}\mathbb{U}^{\dagger} = \sum_{x \in \mathbb{E}_{2L}} \left[\mathcal{M}_{+,w}\left(\bm{q}(x,w)\right)\right]_{x+2,...,x+w+1} + \bm{q^{\prime}}(x,w)_{x+1,...,x+w+1} + \bm{q^{\prime\prime}}(x,w)_{x-1,...,x+w+1},
        \end{equation}
        where, to be explicit, we have defined
        \begin{equation}
            \bm{q^{\prime}}(x,w) = \frac{1}{d^{|\mathcal{R}_+|}}\tr_{\mathcal{R}_+}\left(\mathbb{U}\bm{q}(x,w)_{x,...,x+w-1}\mathbb{U}^{\dagger}\right) - \left(\mathds{1} \otimes \mathcal{M}_{+,w}\left(\bm{q}(x,w)\right) \right)\in \bar{\mathcal{A}}_{w+1},
        \end{equation}
        and 
        \begin{multline}
            \bm{q^{\prime\prime}}(x,w) = \frac{1}{d^{|\mathcal{R}_+|-2}}\tr_{\mathcal{R}_+/\{x-1,x\}}\left(\mathbb{U}\bm{q}(x,w)_{x,...,x+w-1}\mathbb{U}^{\dagger}\right) \\
            - \left(\mathds{1}^{\otimes 2} \otimes \bm{q^{\prime}}(x,w) \right) - \left(\mathds{1}^{\otimes 3} \otimes \mathcal{M}_{+,w}\left(\bm{q}(x,w)\right) \right)\in \bar{\mathcal{A}}_{w+3},
        \end{multline}
        where $\mathcal{R}_+ \subseteq \mathbb{Z}_{2L}$ is the complement to the interval over which $\bm{q^{\prime}}(x,w)$ is supported. We have defined $\bm{q^{\prime}}(x,w)$ as the width-$(w+1)$ component of $\mathbb{U}\bm{q}(x,w)_{x,...,x+w-1}\mathbb{U}^{\dagger}$ supported on the interval $[x+1, x+w+1]$, and $\bm{q^{\prime\prime}}(x,w)$ as the width-$(w+3)$ component of $\mathbb{U}\bm{q}(x,w)_{x,...,x+w-1}\mathbb{U}^{\dagger}$ supported on the interval $[x-1, x+w+1]$. Diagrammatically, these calculations could be represented exactly as outlined for the width-1 $\bm{q}_x$ example given earlier in Section \ref{sec:dynamicalconstraints}, except now with the relatively straightforward generalisation to arbitrary width $w$; $\mathcal{M}_{+,1}$ would simply be replaced by $\mathcal{M}_{+,w}$ in all cases.
        \par
        Now consider the terms with odd $x$,
        \begin{equation}
            \mathcal{Q}^{-}_{w} = \sum_{x\in\mathbb{O}_{2L}} \bm{q}(x,w)_{x,...,x+w-1}, \quad w \in \mathbb{O}_{L-3}.
        \end{equation}
        From Fig.~\ref{oddoddterms}, it can be deduced that each $\bm{q}_{x,...,x+w-1}$ with $x$ odd and $w$ odd will be mapped by $\mathbb{U}$ into a linear combination of operators of width $w$, $w+1$, and $w+3$, supported over the regions $\left[x-2, x+w-3\right]$, $\left[x-2, x+w-2\right]$, and $\left[x-2, x+w\right]$ respectively. Similarly to before, the width-$w$ component of this linear combination, supported on the interval $[x-2, x+w-3]$, is given by $\mathcal{M}_{-,w}\left(\bm{q}(x,w)\right)$. Hence, we can write
        \begin{equation}\label{eqn:Qpluswevol}
            \mathbb{U}\mathcal{Q}^{-}_{w}\mathbb{U}^{\dagger} = \sum_{x \in \mathbb{O}_{2L}} \left[\mathcal{M}_{-,w}\left(\bm{q}(x,w)\right)\right]_{x-2,...,x+w-3} + \bm{q^{\prime\prime\prime}}(x,w)_{x-2,...,x+w-2} + \bm{q^{\prime\prime\prime\prime}}(x,w)_{x-2,...,x+w}.
        \end{equation}
        where
        \begin{equation}
            \bm{q^{\prime\prime\prime}}(x,w) = \frac{1}{d^{|\mathcal{R}_-|}}\tr_{\mathcal{R}_-}(\mathbb{U}\bm{q}(x,w)_{x,...,x+w-1}\mathbb{U}^{\dagger}) - \left(\mathcal{M}_{-,w}\left(\bm{q}(x,w)\right)\otimes \mathds{1}\right)\in \bar{\mathcal{A}}_{w+1},
        \end{equation}
        and
        \begin{multline}
            \bm{q^{\prime\prime\prime\prime}}(x,w) = \frac{1}{d^{|\mathcal{R}_-|-2}}\tr_{\mathcal{R}_-/\{x+w-1,x+w\}}(\mathbb{U}\bm{q}(x,w)_{x,...,x+w-1}\mathbb{U}^{\dagger}) \\ - \left(\bm{q^{\prime\prime\prime}}(x,w) \otimes \mathds{1}^{\otimes 2} \right) - \left(\mathcal{M}_{-,w}\left(\bm{q}(x,w)\right) \otimes \mathds{1}^{\otimes 3} \right)\in \bar{\mathcal{A}}_{w+3},
        \end{multline}
        with $\mathcal{R}_- \subseteq \mathbb{Z}_{2L}$ the complement to the interval over which $\bm{q^{\prime\prime\prime}}(x,w)$ is supported. We have defined $\bm{q^{\prime\prime\prime}}(x,w)$ as the width-$(w+1)$ component of $\mathbb{U}\bm{q}(x,w)_{x,...,x+w-1}\mathbb{U}^{\dagger}$ supported on the interval $[x-2, x+w-2]$, and $\bm{q^{\prime\prime\prime\prime}}(x,w)$ as the width-$(w+3)$ components of $\mathbb{U}\bm{q}(x,w)_{x,...,x+w-1}\mathbb{U}^{\dagger}$ supported on the interval $[x-2, x+w]$.
        \par
        As $\mathcal{Q} = \sum_{w \in \mathbb{O}_{L-3}} Q_w^+ + Q_w^-$, we have
        \begin{align}
            \begin{split}
            \mathbb{U}\mathcal{Q}\mathbb{U}^{\dagger} = \sum_{w \in \mathbb{O}_{L-3}} \sum_{x \in \mathbb{E}_{2L}} &\left[\mathcal{M}_{+,w}\left(\bm{q}(x,w)\right)\right]_{x+2,...,x+w+1} + \bm{q^{\prime}}(x,w)_{x+1,...,x+w+1} \\ 
            + \; &\bm{q^{\prime\prime}}(x,w)_{x-1,...,x+w+1} + \left[\mathcal{M}_{-,w}\left(\bm{q}(x+1,w)\right)\right]_{x-1,...,x+w-2} \\
            + \; &\bm{q^{\prime\prime\prime}}(x+1,w)_{x-1,...,x+w-1} + \bm{q^{\prime\prime\prime\prime}}(x+1,w)_{x-1,...,x+w+1}.
            \end{split}  
        \end{align}
        Note that the term $\left[\mathcal{M}_{+,w}\left(\bm{q}(x,w)\right)\right]_{x+2,...,x+w+1}$ will be the only term in $\mathbb{U}\mathcal{Q}\mathbb{U}^{\dagger}$ supported exclusively over the interval $\left[x+2,x+w+1\right]$. Similarly, $\left[\mathcal{M}_{-,w}\left(\bm{q}(x+1,w)\right)\right]_{x-1,...,x+w-2}$ will be the only term supported exclusively over the interval $\left[x-1,x+w-2\right]$. As $\mathbb{U}\mathcal{Q}\mathbb{U}^{\dagger} = \mathcal{Q}$, we can hence relate these to the terms supported over the same intervals in $\mathcal{Q}$:
        \begin{equation}
            \mathcal{M}_{+,w}\left(\bm{q}(x,w)\right) = \bm{q}(x+2,w),
        \end{equation}
        for even $x$, and
        \begin{equation}
            \mathcal{M}_{-,w}\left(\bm{q}(x,w)\right) = \bm{q}(x-2,w),
        \end{equation}
        for odd $x$. Summing over even $x$, we have
        \begin{equation}
            \sum_{x \in \mathbb{E}_{2L}} \mathcal{M}_{+,w}\left(\bm{q}(x,w)\right) = \sum_{x \in \mathbb{E}_{2L}} \bm{q}(x+2,w) = \mathcal{Q}_w^+,
        \end{equation}
        and hence we see that Eqn.~\ref{eqn:Qpluswevol} becomes
        \begin{equation}
            \mathbb{U}\mathcal{Q}^{+}_{w}\mathbb{U}^{\dagger} = \mathcal{Q}_w^+ + \sum_{x \in \mathbb{E}_{2L}} \bm{q^{\prime}}(x,w)_{x+1,...,x+w+1} + \bm{q^{\prime\prime}}(x,w)_{x-1,...,x+w+1}.
        \end{equation}
        Taking a scalar product of both sides, $\|A\|^2 = \tr(A^{\dagger}A)$, and noting that by unitarity $\|\mathbb{U}\mathcal{Q}^{+}_{w}\mathbb{U}^{\dagger}\|^2 = \|\mathcal{Q}^{+}_{w}\|^2$, we obtain
        \begin{equation}
            0 = \sum_{x \in \mathbb{E}_{2L}} \|\bm{q^{\prime}}(x,w)_{x+1,...,x+w+1}\|^2 + \|\bm{q^{\prime\prime}}(x,w)_{x-1,...,x+w+1}\|^2.
        \end{equation}
        where as all the terms on the right-hand side are supported over unique intervals of the chain, they will all be orthogonal to each other and hence all cross terms will go to zero (i.e.~the scalar product can be taken linearly). This then implies
        \begin{equation}
            \bm{q^{\prime}}(x,w) = \bm{q^{\prime\prime}}(x,w) = 0, \quad \forall \; (x,w) \in \mathbb{E}_{2L} \times \mathbb{O}_{L-3},
        \end{equation}
        where we have noted that as our choice of $w$ was arbitrary to start with, this must be true for all the possible values of $w$ in $\mathcal{Q}$. Analogously for the odd $x$ terms, we find
        \begin{equation}
            \bm{q^{\prime\prime\prime}}(x,w) = \bm{q^{\prime\prime\prime\prime}}(x,w) = 0, \quad \forall \; (x,w) \in \mathbb{O}_{2L} \times \mathbb{O}_{L-3}.
        \end{equation}
        \par
        We have shown that all the $\bm{q}(x,w)$ terms of $\mathcal{Q}$ are mapped exclusively into terms of the same width under conjugation by $\mathbb{U}$. It turns out that we can always decompose such terms into a basis formed by the width-$w$ solitons of the $\mathcal{M}_{\nu,w}$ map, where $\nu = +$ if $x$ is even or $\nu = -$ if $x$ is odd - we establish this fact rigorously in Appendix \ref{appendix:solitonbasis} (see Lemma \ref{lem:solitonbasis}). We can hence write
        \begin{equation}
            \bm{q}(x=0, w) = \sum_{\bm{a_k} \in \mathbb{S}_{+,w}} \alpha_{k,w}\bm{a_k},
        \end{equation}
        where $\bm{a_k}$ are the width-$w$ solitons of the $\mathcal{M}_{+,w}$ map, defined via the set $\mathbb{S}_{+,w}$ in Eqn.~\ref{eqn:set_S}, and $\alpha_{k,w}$ are a set of normalised complex coefficients.
        \par
        As $\bm{q}^{\prime} = \bm{q}^{\prime\prime} = 0$ and $\mathcal{M}_{+,w}\left(\bm{q}(x,w)\right) = \bm{q}(x+2,w)$ for all even $x$, we have
        \begin{equation}
            \mathbb{U}\bm{q}(x,w)_{x,...,x+w-1}\mathbb{U}^{\dagger} = \bm{q}(x+2, w)_{x+2,...,x+w+1} \quad \forall \; x \in \mathbb{E}_{2L},
        \end{equation}
        which gives us a kind of recurrence relation,
        \begin{equation}
            \mathbb{U}^n\bm{q}(x=0, w)_{0,...,w-1}\left(\mathbb{U}^{\dagger}\right)^n = \bm{q}(x=2n, w)_{2n,...,2n+w-1}.
        \end{equation}
        As $\bm{q}(x=0, w)$ is formed solely of solitons, we can easily calculate the left-hand side of the above,
        \begin{equation}
            \sum_{\bm{a_k} \in \mathbb{S}_{+,w}} \alpha_{k,w} \left(\lambda^{+}_{k,w}\right)^{n}\left[\bm{a_k}\right]_{2n,...,2n+w-1} = \bm{q}(x=2n, w)_{2n,...,2n+w-1},
        \end{equation}
        where again we have used square brackets to help separate the basis labelling and spatial position indices. This then gives us a general expression for $\bm{q}(x, w)$ for all even $x$,
        \begin{equation}
            \bm{q}(x,w) = \sum_{\bm{a_k} \in \mathbb{S}_{+,w}} \alpha_{k,w} \left(\lambda^{+}_{k,w}\right)^{\frac{x}{2}}\bm{a_k}, \quad \forall \; x\in\mathbb{E}_{2L}, \quad \textrm{and} \quad \forall \; w \in \mathbb{O}_{L-3}.
        \end{equation}
        Upon substitituing this into our expression for $\mathcal{Q}^{+}_{w}$ and recalling our definition for the conserved quantities $Q^{+}_{k,w}$ associated to each soliton $\bm{a_k}$ (as given in Eqn.~\ref{eqn:Q_s}), we obtain
        \begin{equation}
            \mathcal{Q}^{+}_{w} = \sum_{x \in \mathbb{E}_{2L}} \bm{q}(x,w)_{x,...,x+w-1} = \sum_{k: \; \bm{a_k}\in \mathbb{S}_{+,w}} \alpha_{k,w} Q^{+}_{k,w},
        \end{equation}
        which means that by summing over $w$ we get
        \begin{equation}
            \mathcal{Q}^{+} = \sum_{w \in \mathbb{O}_{L-3}} \; \sum_{k: \; \bm{a_k}\in \mathbb{S}_{+,w}} \alpha_{k,w}Q^{+}_{k,w}.
        \end{equation}
        Recall that we strictly require $\left(\lambda^{+}_{k,w}\right)^L = 1$ in order for $Q^{+}_{k,w}$ to be a conserved quantity - this means that $\mathbb{U}\mathcal{Q}\mathbb{U}^{\dagger} = \mathcal{Q}$ implies that $ \alpha_{k,w} = 0 $ for all $k,w$ such that $\left(\lambda^{+}_{k,w}\right)^L \neq 1$.
        \par
        Repeating the analysis for the terms with odd-$x$, the only difference we will find is that these terms will be shifted by $\mathbb{U}$ in the negative-$x$ direction, and must be decomposed into the width-$w$ solitons $\bm{b_l}$ of the $\mathcal{M}_{-,w}$ map (defined via the set $\mathbb{S}_{-,w}$ in Eqn.~\ref{eqn:set_T}) like so,
        \begin{equation}
            \bm{q}(x=1, w) = \sum_{\bm{b_l} \in \mathbb{S}_{-,w}} \beta_{l,w} \bm{b_l},
        \end{equation}
        where we have arbitrarily chosen $x=1$ to define the decomposition (similarly to how we arbitrarily chose $x=0$ when studying the even-$x$ terms), and $\beta_{l,w}$ are a set of normalised complex coefficients, and where again we will have that $\beta_{l,w} = 0$ for all $l$ such that $\left(\lambda^{-}_{l,w}\right)^L \neq 1$. Summing over (odd-)$x$ and $w$ again, and recalling now our definition for the conserved quantities $Q^{-}_{l,w}$ associated to the $\bm{b_l}$ solitons (and as given in Eqn.~\ref{eqn:Q_t}), we will get
        \begin{equation}
            \mathcal{Q}^- = \sum_{w \in \mathbb{O}_{L-3}} \; \sum_{l: \; \bm{b_l}\in \mathbb{S}_{-,w}} \beta_{l,w}Q^{-}_{l,w},
        \end{equation}
        which means that we can rewrite $\mathcal{Q} = \mathcal{Q}^{+} + \mathcal{Q}^{-}$ as
        \begin{equation}
            \mathcal{Q} = \sum_{w \in \mathbb{O}_{L-3}} \left( \sum_{k: \; \bm{a_k}\in \mathbb{S}_{+,w}} \alpha_{k,w}Q^{+}_{k,w} + \sum_{l: \; \bm{b_l}\in \mathbb{S}_{-,w}} \beta_{l,w}Q^{-}_{l,w}\right),
        \end{equation}
        which is what we wished to prove.
    \end{proof}
    We conclude this section by noting that in the thermodynamic limit of an infinite chain, a converse to the above holds.
    \begin{remark}\label{remark:onetoonecorrespondence}
        \textit{One-to-one correspondence in an infinite chain}. Recall that for a finite chain with periodic boundary conditions we required, for each width-$w$ soliton with phase $\lambda$, that $\lambda^{L} = 1$ in order to be able to construct an associated width-$w$ conserved density as per Eqns.~\ref{eqn:Q_s} and \ref{eqn:Q_t}. In the limit of an infinite chain ($L\rightarrow \infty$), however, this is no longer necessary (in an infinite chain we no longer have periodic boundary conditions, and so the solitons no longer return to their previous positions on the chain - they simply propagate off to infinity; consequently, we do not require their phases to ever return to unity either), and so it will be possible to construct a width-$w$ conserved density for every width-$w$ soliton. This implies, along with Theorem \ref{thm:fundamentalcharges}, the existence of a one-to-one correspondence between solitons and conserved densities in the $1+1$D brickwork dual-unitary circuits to which Theorem \ref{thm:fundamentalcharges} applies.
    \end{remark}

\section{Constructing many-body conserved quantities in DUCs}\label{sec:constructingmbcqs}

The results of the above section state that any conserved quantity (composed of terms of a finite range, $w \leq L - 4$) in a dual-unitary circuit can be broken down into linear combinations of width-$w$ solitons. In what follows, we demonstrate two ways in which to construct these width-$w$ solitons: firstly, via products of smaller, constituent solitons; and secondly, via products of fermionic operators.

\subsection{Composite configurations of solitons}\label{sec:boundsolitons}

Consider a quantity formed by producting a right-moving $1$-body soliton $\bm{a} \in \bar{\mathcal{A}}_1$ across multiple different even sites
\begin{equation}
    \bm{A}_{j,j^{\prime}} = \bm{a}_j\bm{a}_{j^{\prime}}, \quad j,j^{\prime}\in\mathbb{E}_{2L}, \quad \mathcal{M}_{+,1}(\bm{a}) = \lambda \bm{a}, \quad |\lambda|=1,
\end{equation}
 which we will loosely refer to as a \textit{composite} configuration of solitons. The action of the Floquet operator on $\bm{A}_{j,j^{\prime}}$ is also to simply shift it two sites to the right (up to a phase $\lambda^2$), meaning it is also a soliton. This can be seen by inserting a resolution of the identity,
\begin{equation}
    \mathbb{U}\bm{A}_{j,j^{\prime}}\mathbb{U^{\dagger}} = \mathbb{U}\bm{a}_j\bm{a}_{j^{\prime}}\mathbb{U}^{\dagger} = \mathbb{U}\bm{a}_j\mathbb{U^{\dagger}}\mathbb{U}\bm{a}_{j^{\prime}}\mathbb{U}^{\dagger} = \lambda\bm{a}_{j+2}\lambda\bm{a}_{j^{\prime}+2} = \lambda^2\bm{A}_{j+2,j^{\prime}+2}. 
\end{equation}
By the same logic as with the 1-body solitons, it is clear that we can again construct a conserved quantity by summing over all even sites,
\begin{equation}
    Q_{\textrm{composite}} = \sum_{x \in \mathbb{E}_{2L}} \lambda^{x}\bm{A}_{x,x+r} = \sum_{x \in \mathbb{E}_{2L}} \lambda^{x}\bm{a}_{x}\bm{a}_{x+r} = \mathbb{U}Q_{\textrm{composite}}\mathbb{U}^{\dagger}, \quad r = j^{\prime}-j, 
\end{equation}  
The above would have held true no matter what set of even sites we had chosen to place the soliton on. So, more generally, and as noted in \cite{bertini2020opent2}, there will be an exponentionally large number of local conserved quantities that can be constructed from $\bm{a}$:
\begin{equation}
    Q = \sum_{x \in \mathbb{E}_{2L}} \lambda^{|\mathbb{J}|\frac{x}{2}}\prod_{j\in \mathbb{J}} \bm{a}_{j+x} = \mathbb{U}Q\mathbb{U}^{\dagger},
\end{equation}
where the set of sites $\mathbb{J}$ can be any collection of even sites i.e.~$\mathbb{J} \subseteq \mathbb{E}_{2L}$\footnote{For the $Q^{+}_{k,w}$ and $Q^{-}_{l,w}$ conserved quantities talked about in Section \ref{sec:results}, we required that the phases of the associated solitons were $L$-th roots of unity; here, we similarly require $\lambda^{|\mathbb{J}|L} = 1$ for $Q$ to be a conserved quantity.}. There will be $\mathcal{O}(\textrm{exp}(L))$ such sets\footnote{Strictly we should only really count the number of sets of sites which are not equivalent under spatial translations by an even number of sites (those which are equivalent will lead to the same $Q$, up to a global phase), however, there will still be $\mathcal{O}(\textrm{exp}(L))$ such sets.}, and hence an exponential in $L$ number of corresponding conserved quantities $Q$. Note that the product $\prod_{j\in \mathbb{J}} \bm{a}_{j+x}$ is itself a many-body soliton, that has been explictly constructed as a composite configuration of the $1$-body soliton $\bm{a}$.
\par
It is clear that we could also create exponentially many conserved quantities out of a left-moving $1$-body soliton $\bm{b} \in \bar{\mathcal{A}}_1$ by producting over odd sites,
\begin{equation}
    Q = \sum_{x \in \mathbb{O}_{2L}}  \lambda^{-|\mathbb{K}|\frac{x-1}{2}} \prod_{k\in \mathbb{K}} \bm{b}_{k+x} = \mathbb{U}Q\mathbb{U}^{\dagger}, \quad \mathcal{M}_{-,1}(\bm{b}) = \lambda \bm{b},
\end{equation}
where the set of sites $\mathbb{K}$ is any collection of odd sites, i.e.~$\mathbb{K} \subseteq \mathbb{O}_{2L}$. It should also be clear that this prescription would also work with larger solitons i.e.~if we were to product together width-$w$ solitons on sites of the same parity, with $w \geq 1$.
\par
Additionally, we also do not need to product together the same soliton. If $\bm{a}$ and $\bm{b}$ are both right-moving solitons (of any width), then $\bm{A}_{j,j^{\prime}} = \bm{a}_j \bm{b}_{j^{\prime}}, \; j,j^{\prime}\in\mathbb{E}_{2L}$ will also be a soliton. An interesting corollary of this is that it allows us to create conserved quantities from \textit{any} soliton (even when $L$ is finite), without requiring that the phase $\lambda$ is a $2L$-th root of unity: Take a soliton, $\bm{a} \in \bar{\mathcal{A}}_w$, $\mathcal{M}_{\pm}(\bm{a}) = \lambda \bm{a}$. The Hermitian conjugate of $\bm{a}$ will also be a soliton, with phase $\lambda^{*}$, i.e.~$\mathcal{M}_{\pm}(\bm{a}^{\dagger}) = \lambda^{*} \bm{a}^{\dagger}$. If we product together $\bm{a}$ and $\bm{a}^{\dagger}$ on sites of the same parity then we will get a soliton $\bm{A}_{j,j^{\prime}} = \bm{a}_j \bm{a}^{\dagger}_{j^{\prime}}$ with unit phase $\lambda\lambda^{*} = 1$, as $|\lambda| = \sqrt{\lambda\lambda^{*}} = 1$ by our definition of solitons. As we can always trivially create a conserved quantity from any soliton with unit phase (see Eqn.~\ref{eqn:unitphasesolitonconservedQs}), then we will have an associated conserved density for $\bm{A}$, even if it were not possible to construct one for $\bm{a}$ (i.e.~if $\lambda$ were not an $L$-th root of unity).
\par
As an example for $d=2$, we consider the gate
\begin{equation}
    U = \left(u_- \otimes \sigma_xe^{i\phi_1\sigma_z}\right)V[J]\left(e^{i\phi_2\sigma_z} \otimes v_-\right),
\end{equation}
with $u_-, v_- \in \textrm{SU}(2)$, $\phi_1, \phi_2 \in [0,\pi]$, and $J \in [0,\pi/2]$, and where
\begin{equation}
    V[J] = e^{i\left(\frac{\pi}{4}\sigma_x \otimes \sigma_x + \frac{\pi}{4}\sigma_y \otimes \sigma_y + J\sigma_z\otimes\sigma_z\right)}. 
\end{equation}
For generic choices of all the free parameters\footnote{For $J=\pi/4$, for instance, $V[J] \equiv \textrm{SWAP}$, and the circuit has solitons in both directions. Also, for certain fine-tuned choices of $u_-$ and $v_-$, the gate can have solitons in both directions.} this is a \textit{chiral} dual-unitary gate (falling into the parameterisation of all chiral dual-unitary gates provided in Ref.~\cite{bertini2020opent2}), with $\bm{\sigma_{z}}$ a right-moving soliton with phase $\lambda = -1$, but no ultralocal ($w=1$) solitons moving to the left. The associated set of charges for a 1D brickwork circuit generated from gates of this form would then be
\begin{equation}
    Q_{\mathbb{J}} = \sum_{x \in \mathbb{E}_{2L}} (-1)^{|\mathbb{J}|\frac{x}{2}} \prod_{j\in \mathbb{J}} \left[\bm{\sigma_z}\right]_{j+x} = \mathbb{U}Q_{\mathbb{J}}\mathbb{U}^{\dagger},
\end{equation}
for sets of even sites $\mathbb{J} \subseteq \mathbb{E}_{2L}$.

\subsection{Composite configurations of fermions}\label{sec:fermionic}

By Theorem \ref{thm:fundamentalcharges}, any conserved quantity (composed of terms of a finite range, $w \leq L - 4$) in a dual-unitary circuit can be broken down into linear combinations of width-$w$ solitons. We now also know one way to construct some of these width-$w$ solitons, via products of $1$-body solitons (as outlined in Section \ref{sec:boundsolitons} above). One question to ask at this stage might be: is this way (i.e.~via composite configurations of $1$-body solitons) the only way to construct all solitons (i.e.~of any width $w$)? Given the form of the dynamical constraints used in Theorem \ref{thm:fundamentalcharges}, it seems plausible that the width-$w$ solitons must also behave solitonically on a more microscopic level.
\par
A simple counterexample - a conserved quantity that cannot be formed out of composite configurations of $1$-body solitons - can be identified by studying a dual-unitary circuit consisting of fermionic SWAP operators. That is, we consider a chain of qubits, $d=2$, and set every gate in our dual-unitary circuit acting on this chain to be equal to
\begin{equation}
    \textrm{FSWAP} = \begin{pmatrix}
        1 & 0 & 0 & 0 \\
        0 & 0 & 1 & 0 \\
        0 & 1 & 0 & 0 \\
        0 & 0 & 0 & -1 \\
    \end{pmatrix},
\end{equation}
a gate which has been studied recently in the context of simulating physically relevant Hamiltonians on quantum computers, such as the Ising model \cite{cervera2018exactisingsimqc} and the Fermi-Hubbard model \cite{cade2020strategiesfermihubbardnearterm}.
\par
It is straightforward to verify that FSWAP is dual-unitary, and that $\bm{\sigma_z}$, the Pauli-Z operator, is a soliton of the resulting circuit, i.e.
\begin{equation}
    U = \textrm{FSWAP} \implies \mathbb{U}\left[\bm{\sigma_z}\right]_x\mathbb{U}^{\dagger} = \left[\bm{\sigma_z}\right]_{x+2} \; \forall \; x \in \mathbb{E}_{2L}, \quad \mathbb{U}\left[\bm{\sigma_z}\right]_x\mathbb{U}^{\dagger} = \left[\bm{\sigma_z}\right]_{x-2} \; \forall \; x \in \mathbb{O}_{2L}.
\end{equation}
The action of FSWAP on the remaining Pauli operators, $\bm{\sigma_x}$ and $\bm{\sigma_y}$, is as follows:
\begin{equation}
    \textrm{FSWAP}(\bm{\sigma_x} \otimes \mathds{1})\textrm{FSWAP}^{\dagger} = \bm{\sigma_z} \otimes \bm{\sigma_x},
\end{equation}
and
\begin{equation}
    \textrm{FSWAP}(\bm{\sigma_y} \otimes \mathds{1})\textrm{FSWAP}^{\dagger} = \bm{\sigma_z} \otimes \bm{\sigma_y}.
\end{equation}
It follows from the above that any operator $\bm{p}\bm{\sigma_z}^{\otimes l}\bm{q}$, with $l$ odd and the single-qubit operators $\bm{p}$ and $\bm{q}$ being arbitrary linear combinations of Pauli-$X$ and -$Y$ operators, i.e.~$\bm{p},\bm{q} \in \textrm{Span}(\{ \bm{\sigma_x}, \bm{\sigma_y}\})$ (where $\textrm{Span}(S)$ specifically refers to the linear span with complex coefficients of the elements of some set $S$), will be a width-$(l+2)$ soliton of an FSWAP circuit,
\begin{equation}
   U = \textrm{FSWAP} \implies \mathcal{M}_{\pm, l+2}\left(\bm{p}\bm{\sigma_z}^{\otimes l}\bm{q}\right) = \bm{p}\bm{\sigma_z}^{\otimes l}\bm{q},
\end{equation}
and there will exist an associated conserved quantity,
\begin{equation}\label{eqn:fermionicconsq}
    \mathcal{Q} = \sum_{x \in \mathbb{Z}_{2L}} \left[\bm{p}\bm{\sigma_z}^{\otimes l}\bm{q}\right]_{x,...,x+l+1} = \mathbb{U}\mathcal{Q}\mathbb{U}^{\dagger}.
\end{equation}
Yet, there is no way to express these operators as a composite configuration of $1$-body solitons; $\bm{\sigma_z}$ is the only $1$-body soliton of the FSWAP circuit, and so cannot be used to decompose the $\bm{p}$ and $\bm{q}$ terms, which will be orthogonal to it. Additionally, each $\bm{p}\bm{\sigma_z}^{\otimes l}\bm{q}$ will have non-trivial support on both even-$x$ and odd-$x$ sites, in stark contrast to the composite solitons (which will only have non-trivial support on exclusively odd or even sites). We must conclude that conserved quantities in DUCs cannot, in general, be broken down into linear combinations of products of single-body solitons.
\par
However, there is still an intuitive way to understand these kinds of conserved quantities: defining an operator $\bm{f}^{\dagger}_x$ which creates a fermion at site $x$ via a Jordan-Wigner transformation,
\begin{equation}
    \bm{f}^{\dagger}_x = \left[\bm{\sigma_z}^{\otimes x} \bm{\sigma_+}\right]_{0,...,x}, \quad \bm{\sigma_+} = \begin{pmatrix} 0 & 1 \\ 0 & 0 \\ \end{pmatrix} = \frac{\bm{\sigma_x} - i\bm{\sigma_y}}{2},  \quad \{\bm{f}_x,\bm{f}_{x^{\prime}}\} = 0,
\end{equation}
we see that a fermionic bilinear term such as
\begin{equation}
    \bm{F}_{x,w} = \bm{f}_x\bm{f}^{\dagger}_{x+w-1} = \left[\bm{\sigma_-}\bm{\sigma_z}^{\otimes w-2}\bm{\sigma_+}\right]_{x,...,x+w-1},
\end{equation}
will be a $w$-body soliton of FSWAP with $\bm{\sigma_-} = \bm{p}, \bm{\sigma_+} = \bm{q} \in \textrm{Span}(\{\bm{\sigma_x}, \bm{\sigma_y}\})$, and will obey bosonic statistics
\begin{equation}
    \left[ \bm{F}_{x,w}, \bm{F}_{x^{\prime},w} \right] = 0.
\end{equation}
So, although we cannot break the quantities $\bm{p}\bm{\sigma_z}^{\otimes l}\bm{q}$ into products of simple, single-body solitons, we can (at least under certain conditions) break them down into products of single site fermions, which themselves behave somewhat similarly to solitons under the dynamics of the FSWAP circuit, in that
\begin{equation}
    \mathbb{U}\bm{f}_x\mathbb{U}^{\dagger} = \bm{f}_{x+2\nu}, \quad \nu \in \{+1, -1\}, \quad x \in \mathbb{Z},
\end{equation}
if we drop our periodic boundary conditions and instead consider an infinite chain, and where $\nu=1$ if $x$ is even, and $\nu=-1$ if $x$ is odd. Of course, this is perhaps not surprising - by its own name, the action of the FSWAP circuit on fermions is to swap them.
\par
Note that although we could also in the case of an infinite chain construct conserved quantities out of these individual fermions, such as
\begin{equation}\label{eqn:fermionicsolitonicconslaw}
    \mathcal{Q} = \sum_{x} \bm{f}_x = \mathbb{U}\mathcal{Q}\mathbb{U^{\dagger}},
\end{equation}
for example, this would not violate Theorem \ref{thm:fundamentalcharges} in any way, as $\mathcal{Q}$ (as given in Eqn.~\ref{eqn:fermionicsolitonicconslaw}) contains terms which (before doing the Jordan-Wigner transformation) have extensive (infinite) spatial support - which are explicitly not considered in the theorem.

\subsubsection{Dual-unitary matchgates}
Quantum circuits that host fermionic conserved quantities have also been studied in the context of matchgate circuits. Matchgates are two-qubit gates $G(A,B) \in U(4)$ parameterised as
\begin{equation}
    G(A,B) = \begin{pmatrix}
        a & 0 & 0 & b \\
        0 & e & f & 0 \\
        0 & g & h & 0 \\
        c & 0 & 0 & d \\
    \end{pmatrix}, \;\;\; A = \begin{pmatrix} a & b \\ c & d \\ \end{pmatrix}, \;\;\; B = \begin{pmatrix} e & f \\ g & h \\ \end{pmatrix},
\end{equation}
with  $A, B \in U(2)$ and $\det(A) = \det(B) = \pm 1$. Circuits comprised of matchgates are classically simulable, as the dynamics can be mapped on to that of free fermions \cite{jozsa2008matchgates}.\par
Given the free-fermionic properties of FSWAP illustrated in the above section, it is natural to assume that it may also be a matchgate. Indeed, it is - explicitly, it is the matchgate $G(\sigma_z,\sigma_x)$. This raises the question as to what other dual-unitary gates are also matchgates; this can be answered by comparing the KAK decomposition of an arbitrary matchgate, given in Ref.~\cite{mocherla2024extendingmatchgatesimulationmethods} as
\begin{equation}
    G(A,B) \equiv \left(e^{i\phi_1\sigma_z} \otimes e^{i\phi_2\sigma_z}\right)e^{i\left(a \sigma_x \otimes \sigma_x + b \sigma_y\otimes \sigma_y\right)}\left(e^{i\phi_3\sigma_z} \otimes e^{i\phi_4\sigma_z}\right), \quad a,b,\phi_1,\phi_2,\phi_3,\phi_4 \in \mathbb{R},
\end{equation}
with the parameterisation of 2-qubit dual-unitary gates given in Ref.~\cite{bertini2019exactcorrfuncs} as
\begin{equation}
    U = e^{i\phi}\left(u_- \otimes u_+\right)e^{i(\frac{\pi}{4}(\sigma_x \otimes \sigma_x + \sigma_y \otimes \sigma_y) +J\sigma_z\otimes \sigma_z)}\left(v_+ \otimes v_-\right), \quad u_{\pm}, v_{\pm} \in \textrm{SU}(2), \quad \phi, J\in\mathbb{R}.
\end{equation}
It is evident that dual-unitary gates with $J = 0$ and $u_{\pm}, v_{\pm}$ restricted to being $\bm{\sigma_z}$-rotations are also matchgates (with $a=b=\frac{\pi}{4}$). Any pair of the single qubit $\bm{\sigma_z}$-rotations can be pulled through the entangling gate and absorbed into the other pair\footnote{One way to see this is to note that $V[J]$ is equivalent, for all $J$, to a $\textrm{SWAP}$ multiplied by a 2-qubit unitary that is diagonal in the computational basis (a $ZZ$-rotation), and so the single qubit $\bm{\sigma_z}$-rotations commute with it (up to the $\textrm{SWAP}$). Equivalently, $\bm{\sigma_z}$ is a soliton of $V[J]$ with $\lambda = 1$ for all $J$, and so any $Z$-rotation - which will just be a linear combination of $\bm{\mathds{1}}$, which is a trivial $\lambda = 1$ soliton, and $\bm{\sigma}_z$ - will also be a soliton (and so will commute with $V[J]$ up to a spatial translation by one site).}, leaving us with the family of gates 
\begin{equation}\label{eqn:DUmatchgates}
    U(\phi_+, \phi_-) = e^{i\frac{\pi}{4}\left( \sigma_x \otimes \sigma_x +  \sigma_y\otimes \sigma_y\right)}\left(e^{i\phi_+\sigma_z} \otimes e^{i\phi_-\sigma_z}\right),
\end{equation}
parameterised by $\phi_+, \phi_- \in \mathbb{R}$, where $\phi_+ = \phi_2 + \phi_3$ and $\phi_- = \phi_1 + \phi_4$. In the standard matchgate language, we have 
\begin{equation}
    U(\phi_+, \phi_-) = G(e^{-i\theta_+\sigma_z}, ie^{-i\theta_-\sigma_z}\sigma_x),
\end{equation}
where $e^{-i\theta_+\sigma_z}$ acts on the even-parity subspace spanned by $\{\ket{00}, \ket{11}\}$, and $ie^{-i\theta_-\sigma_z}\sigma_x$ acts on the odd-parity subspace spanned by $\{\ket{01}, \ket{10}\}$, and where $\theta_{\pm} = \phi_+ \pm \phi_-$.
\par
We note that these gates are a restricted subclass of the dual-unitary gates with solitons in both directions, as parameterised in Ref.~\cite{bertini2020opent2}; they have $\bm{\sigma_z}$ as a unit-phase soliton in both directions. In Ref.~\cite{bertini2020opent2} it was also noted that the $J=0$ parameter line of the entangling gate is equivalent to the self-dual kicked Ising model (the paradigmatic Hamiltonian model that dual-unitary circuits are related to \cite{bertini2018exactsff,bertini2019entanglementspreading,gopalakrishnan2019finitedepthinfinitewidth,claeys2020mvqcs}); all the dual-unitary matchgates are hence equivalent to this model too, up to the single qubit $Z$-rotations.
\par
The adjoint action of $U(\phi_+, \phi_-)$ on $\bm{\sigma_{\pm}}$ terms can be neatly summarised as
\begin{equation}\label{eqn:sigmapluszerophase}
    U(\phi_+, \phi_-)\left[\bm{\sigma_+}\right]_0 U(\phi+, \phi_-)^{\dagger} = ie^{-i\phi_+}\left[\bm{\sigma_z}\otimes\bm{\sigma_+}\right]_{0,1},
\end{equation}
\begin{equation}
    U(\phi_+, \phi_-)\left[\bm{\sigma_+}\right]_1 U(\phi_+, \phi_-)^{\dagger} = ie^{-i\phi_-} \left[\bm{\sigma_+}\otimes\bm{\sigma_z}\right]_{0,1},
\end{equation}
\begin{equation}\label{eqn:sigmaminuszerophase}
    U(\phi_+, \phi_-)\left[\bm{\sigma_-}\right]_0 U(\phi_+, \phi_-)^{\dagger} = -ie^{i\phi_+} \left[\bm{\sigma_z}\otimes\bm{\sigma_-}\right]_{0,1},
\end{equation}
\begin{equation}
    U(\phi_+, \phi_-)\left[\bm{\sigma_-}\right]_1 U(\phi_+, \phi_-)^{\dagger} = -ie^{i\phi_-} \left[\bm{\sigma_-}\otimes\bm{\sigma_z}\right]_{0,1}.
\end{equation}
Using these mappings, one can then verify that all these dual-unitary matchgates will have an exponential number of solitons that can be constructed from fermionic operators. In particular, the bilinears considered above, $\bm{F}_{x, w} = \bm{f}_x\bm{f}^{\dagger}_{x+w-1} = \left[\bm{\sigma_-}\otimes\bm{\sigma_z}^{\otimes w-2}\otimes{\bm{\sigma_+}}\right]_{x,...,x+w-1}$ with $w$ odd, will be $w$-body solitons with unit phase\footnote{The phases from Eqns.~\ref{eqn:sigmapluszerophase} and \ref{eqn:sigmaminuszerophase} cancel each other out.}.

\section{Conclusion}\label{sec:conc}

Dual-unitary circuits represent, without doubt, a pertinent development in the study of many-body quantum systems from a quantum information perspective; they have afforded us precious analytical insights on the dynamics of ergodic, chaotic many-body quantum systems. Here, however, we have clarified a story that has been developing in the dual-unitaries literature since their advent \cite{bertini2019exactcorrfuncs, bertini2020opent1, bertini2020opent2,borsi2022constructionofduqcs}: the set of conserved quantities realisable in dual-unitary circuits is limited to those that can be constructed from solitons, leading to trivial, non-diffusive dynamics of any localised density of these conserved quantities, and hence restricting the range of \textit{non-ergodic} dynamics we can expect to see in these systems.
\par
We reiterate that this implies that a dual-unitary circuit either has a soliton (of any finite size) - leading to a strong form of non-ergodicity, in that an exponential (in the system size) number of conserved quantities can be constructed, and correlations associated to the soliton will never decay - or it does not - in which case, as we have shown, we can essentially rule out the existence of any conserved quantities, and the circuit will exhibit (provably maximal, for certain classes of circuits \cite{bertini2021randomsffDUCs}) quantum chaos.
\par
Despite these limitations on the non-ergodicity they can realise, exemplified by the relatively trivial dynamics of the solitons, there may well still be plenty of rich non-trivial out-of-equilibrium phenomena to explore in dual-unitary circuits - the recent discovery that they can host quantum many-body scars being a prime example \cite{logaric2023scarsinDUCs}. Attempts to understand the link, if it exists, between such forms of non-ergodicity more generally in DUCs and the presence of solitons would be worthwhile. Moreover, despite the aforementioned relatively trivial dynamics of solitons, this does not preclude the possibility of non-trivial structure that could be associated with their presence; the fact that they exist in their own, separate \textit{fragment} of Hilbert space - reminiscent of a decoherence-free subspace \cite{lidar1998decoherencefreesubspsforqc, lidar2014reviewofdecohfreesubsps} - amongst an otherwise highly chaotic system\footnote{Any non-conserved operator weight in DUCs will continue to spread out with an exponentially decaying (away from the trailing lightcone edge), maximally chaotic OTOC, even in the presence of solitons \cite{claeys2020mvqcs}.} is intruiging, and invites investigation of the potential utility of this kind of dynamics for protecting or processing quantum information in nascent quantum technologies.
\par 
We use the term `fragment' here to allude to the phenomenon of Hilbert space fragmentation, that has recently been studied in the context of certain many-body Hamiltonians \cite{moudgalya2022scarsandfragmentationreview, moudgalya2022fragmentationandcommutantalgbrs, li2023fragmentationinopenqsys,li2024highlyentangledstationarystatesstrong}. It seems highly likely that the formalism used therein can be modified and applied to dual-unitary circuits, to identify fragmentation in DUCs with solitons. The presence of symmetries and fragmentation can leave fingerprints in the late-time dynamics of many-body systems, such as in entanglement in steady states \cite{li2024highlyentangledstationarystatesstrong}, and modifications to thermal states (e.g. generalised Gibbs ensembles \cite{rigol2007GGE}). Understanding the nature of conserved quantities and means of constructing them in dual-unitary circuits, as studied here, could then be pertinent with a view towards using integrable dual-unitary circuits as toy models for these phenomena. We note that since dissemination of an initial version of this manuscript there has been pleasing progress in this direction, with fragmentation and thermalisation to a generalised Gibbs ensemble identified in integrable dual-unitary circuits in Ref.~\cite{foligno2024chargedDUCs}.
\par
More generally in the context of integrability, understanding how to `Floquetify' integrable, non-Floquet, Hamiltonians (perhaps via standard Trotterisation techniques) without destroying the integrability is a problem of interest, studied for instance in Ref.~\cite{pozsgay2024qcircsfreefermionsindisguise}. The integrable circuits studied here in this work could be viewed as a way in which to study this problem from the other direction. For DUCs on qubits, $d=2$, the connection to integrable models is fairly well understood, arising due to the dual-unitary entangling gate $V[J]$ generating a special point of the Trotterised XXZ model \cite{claeys2022corrsinintegrablecircs}. Dual-unitarity naturally generalises to arbitrary $d$, however; understanding how integrable dual-unitary circuits on qudits connect to integrable Hamiltonians on higher spin systems is perhaps less well understood, but worthy of investigation. Some progress has already been made in this direction, with connections to the integrable Potts model (for $d=3$) studied in Ref.~\cite{claeys2024opdynamicsinSTdualHlattices}. The constructions and results presented herein could be relevant for finding further and associated generalisations of integrable dual-unitary circuits. Understanding how to deform integrable dual-unitary circuits into new non-integrable dual-unitary circuits circuits is also of interest (see e.g. Ref.~\cite{gombor2022superintegrableCA}), as well as the converse - exploring breaking dual-unitarity while maintaining integrability, leading to a potential breakdown of the solitonic quasi-particle picture in the process. The latter has been studied in Ref.~\cite{rampp2023haydenpreskill}, where the study of Hayden-Preskill decoding in integrable and non-integrable (dual- and non-dual-) unitary circuits highlighted that integrability, particularly in combination with dual-unitary, can have stark implications for quantum information processing tasks.
\par
It would also be interesting to establish whether the two methods of constructing many-body solitons in $1+1$D dual-unitary circuits on qubits given in Section \ref{sec:constructingmbcqs} are exhaustive. Both methods require the existence of a 1-body soliton in order to construct any many-body ($w > 1$) solitons (stricly speaking, the construction from products of constituent solitons can use solitons of any width, but one can ask in turn how these constituent solitons are constructed - if we only have access to the two constructions outlined in Section \ref{sec:constructingmbcqs}, then eventually we will need a $1$-body soliton). Could it be the case that for these circuits (i.e.~$1+1$D DUCs on qubits) local thermalisation (i.e.~ergodic, decaying correlation functions of all $1$-site observables) implies a more extensive, global sense of thermalisation (i.e.~no many-body solitons and hence no non-trivial many-body conserved quantities, leading to a decay of all many-body correlation functions)? It is already known that this is not the case for circuits with $d>2$ (examples of $1+1$D dual-unitary circuits on qutrits with ergodic local correlations but non-ergodic many-body correlations were found in Ref.~\cite{borsi2022constructionofduqcs}). In the restrictive case of $d=2$, however, it seems plausible that the constructions given in Section \ref{sec:constructingmbcqs} could be the only way to generate solitons, and hence (according to Theorem \ref{thm:fundamentalcharges}) also be the only way to generate conserved quantities and non-ergodic correlations. Dual-unitary circuits have already proved to be very useful for studying the ETH and strong notions of thermalisation (see Refs.~\cite{fritzsch2021eth, ho2022exactemergent,ippoliti2022dynamical, Claeys2022emergentquantum}); this may represent another potential application of them in this direction.
\par
We wish to mention the links already made between dual-unitary circuits and conformal field theories (CFTs): for integrable and non-integrable dual-unitary circuits, operator entanglement has been shown to exhibit CFT behaviour \cite{reid2021entanglementbarriersinducs}; it also has been shown that dual-unitary circuits with a discrete version of conformal symmetry can be constructed, which are being explored as a potential toy model of the holographic AdS/CFT correspondence \cite{masanes2023discrete}. Our results here could be understood as solidifying another link between dual-unitary circuits and CFTs, in that the restriction of charges in dual-unitary circuits to being right- or left-moving solitons proven herein is highly reminiscent of the restriction of quanta of charge to being independent, right- or left-moving modes propagating at the effective speed of light (known as chiral separation \cite{bernard2016outofeqCFTreview}) in one-dimensional CFTs. This further motivates the exploration of using these circuits as toy models for studying CFTs in important physical contexts: in holographic models of quantum gravity (as has been initiated in Ref.~\cite{masanes2023discrete}); or perhaps in phase transitions between phases of matter involving symmetry protected topological phases (SPTs) \cite{cho2017SPTsandBCFTs}, in particular those with chiral transport properties \cite{po2016chiralfloqphasesofMBLbosons,po2017radicalchiralfloq}. We note that it has already been established, in Ref.~\cite{stephen2022universal}, that $1+1$D dual-unitary circuits can be constructed which generate SPT phases with `infinite order' (they generate states with 1D SPT order from initial product states with, as $L \rightarrow \infty$, complexity that grows unboundedly with the number of Floquet periods). Their ability to realise models with symmetries related to Jordan-Wigner strings, as demonstrated in Section \ref{sec:fermionic}, perhaps also gives some further creedence to the idea that they could be useful for studying SPTs \cite{verresen2017oneDSPTs} and their detection via string order parameters \cite{perezgarcia2008stringorder}.
\par
As a final comment on connections to topological phases, we note that one might envision that similar results as obtained herein could be formulated for local quantum circuits with a brickwork structure and multidirectional unitarity in higher spatial dimensions (such as the $2+1$D circuits considered in Refs.~\cite{milbradt2022ternary} and \cite{jonay2021triunitary}); we hence also, in this vein, echo a call first made in Ref.~\cite{sommers2023crystallineqcircs} to understand how these families of circuits might fit into the wider framework of quantum cellular automata in higher dimensions and the associated characterisation of topological phases \cite{haah2023nontrivialQCAinhighdim,haah2021cliffordQCA,freedman2020classificationQCA,freedman2022groupstrucQCA,shirley2022QCAchiralsemionsurface} (where often the presence of gliders - which the solitons considered here would be counted as - is of particular relevance \cite{stephen2019subsystemsymandQCAs}).
\par
And, to conclude, while we have ascertained here that dual-unitary circuits can only realise trivial dynamics of any local density of conserved charge, we note the success of recent studies on \textit{perturbed} dual-unitary circuits \cite{kos2021ducpathint, rampp2023perturbedDUCsopspreading}. In perturbed DUCs one can expect - and often analytically verify - the emergence of features of more `generic' unitary dynamics; for instance, the behaviour of the out-of-time-order correlator in perturbed chaotic DUCs can be shown to resemble that of random unitary circuits \cite{rampp2023perturbedDUCsopspreading}. Analytical results suggest that weakly breaking dual-unitarity in DUCs with solitons leads to the emergence of non-trivial (diffusive) behaviour of local densities of charges in a late time limit \cite{kos2021ducpathint,kos2021extendedsyswnoisydriving}. It would be interesting to establish a handle on how this picture emerges, and what physical transport phenomena can be realised in this setting.

\section{Acknowledgements}
THD acknowledges support from the EPSRC Centre for Doctoral Training in Delivering Quantum Technologies [Grant Number EP/S021582/1], and would like to thank Christopher J. Turner, Pieter W. Claeys, Pavel Kos, Georgios Styliaris, Harriet Apel, Anastasia Moroz, and Sougato Bose for fruitful discussions. AP is funded by the European Research Council (ERC) under the EU's Horizon 2020 research and innovation program (Grant Agreement No. 853368).

\bibliographystyle{quantum}
\bibliography{main}

\appendix

\section{Further analysis of dynamical constraints on the spatial support of local operators in dual-unitary circuits}\label{appendix:fate}
In Fig.~\ref{spreadingdiagrams} we show how the spatial support of an operator $\bm{q}_{x,...,x+w-1}$ can change (according to Property \ref{property1}) under the evolution of a brickwork dual-unitary Floquet operator in 4 different cases, corresponding to the different combinations of $x$ and $w$ being odd or even, using the notation introduced in Sec.~\ref{sec:dynamicalconstraints}. Explicitly, we consider dual-unitary conjugation by $\mathbb{U}$ on elements of the following subspaces:
\begin{align}
    \mathcal{B}_{\textrm{ee}} &= \left\{\bm{q}_{x,...,x+w-1} \in \mathcal{A}_{2L} |\; (x, w) \in \mathbb{E}_{2L} \times \mathbb{E}_{L+1}/0, \;\bm{q} \in \bar{\mathcal{A}}_w\right\}, \quad \textrm{(The `even-even' subspace)}, \\
    \mathcal{B}_{\textrm{eo}} &= \left\{\bm{q}_{x,...,x+w-1} \in \mathcal{A}_{2L} |\; (x, w) \in \mathbb{E}_{2L} \times \mathbb{O}_{L+1}, \;\bm{q} \in \bar{\mathcal{A}}_w\right\}, \quad \textrm{(The `even-odd' subspace)}, \\
    \mathcal{B}_{\textrm{oe}} &= \left\{\bm{q}_{x,...,x+w-1} \in \mathcal{A}_{2L} |\; (x, w) \in \mathbb{O}_{2L} \times \mathbb{E}_{L+1}/0, \;\bm{q} \in \bar{\mathcal{A}}_w\right\}, \quad \textrm{(The `odd-even' subspace)}, \\
    \mathcal{B}_{\textrm{oo}} &= \left\{\bm{q}_{x,...,x+w-1} \in \mathcal{A}_{2L} |\; (x, w) \in \mathbb{O}_{2L} \times \mathbb{O}_{L+1}, \;\bm{q} \in \bar{\mathcal{A}}_w\right\}, \quad \textrm{(The `odd-odd' subspace)}.
\end{align}
\par
We have restricted ourselves to considering operators with width $w \leq L$ (i.e.~half the chain). Only by making this restriction can we ensure that the subspaces above have no overlap - this is necessary as we will later want to be able to say that elements from different $\mathcal{B}$ subspaces are orthogonal to each other. To illustrate, consider an operator $\bm{q}_{0,...,2L-1} = \bm{a}_0\bm{b}_{2L-1}$, with $\bm{a}, \bm{b} \in \bar{\mathcal{A}}_1$. This could be viewed as an operator supported over the region $\left[0,2L-1\right]$ with $x=0$ even and width $w=2L$ even, and would hence be in the  $\mathcal{B}_{\textrm{ee}}$ subspace (if we had not restricted $w\leq L$). However, it could also be viewed as an operator supported over the region $\left[2L-1, 0\right]$ now with $x=2L-1$ \textit{odd} and width $w=2$ even (i.e.~viewed as $\bm{q}_{2L-1,0} = \bm{b}_{2L-1}\bm{a}_0$), and would hence be in the $\mathcal{B}_{\textrm{oe}}$ subspace. If we restrict $w \leq L$, however, then it is clear that $\bm{q}_{2L-1,0}$ is only in $\mathcal{B}_{\textrm{oe}}$ (it does not fit in the constructive definition of $\mathcal{B}_{\textrm{ee}}$), and we avoid this problem. As we will wish to utilise orthogonality of elements from different subspaces in the full proof of Lemma \ref{lem:onlyoddw} (and the associated Sublemmas) later in this Appendix, we hence enforce $w \leq L$, such that we have
\begin{equation}
    \bm{q} \in \mathcal{B}_{i}, \; \bm{p} \in \mathcal{B}_{j}, \; i,j \in \{\textrm{ee},\textrm{eo},\textrm{oe},\textrm{oo}\}, \; i \neq j \implies \tr(\bm{q}^{\dagger}\bm{p}) = 0 .
\end{equation}
\par

 The mappings between the different $\mathcal{B}$ subspaces under dual-unitary conjugation by $\mathbb{U}$ can be derived by considering the diagrams in Fig.~\ref{spreadingdiagrams}, and these mappings are collated into a directed graph in Fig.~\ref{fig:digraph}. One of the first things that we can note by studying Fig.~\ref{fig:digraph} is that we cannot have any terms in the subspace $\mathcal{B}_{\textrm{ee}}$ contribute to a quantity $\mathcal{Q}$ if we want $\mathcal{Q}$ to be conserved under the action of a brickwork dual-unitary Floquet operator. We provide a proof of this below.  

 \begin{sublemma}\label{lem:noeveneven}
    \textit{No even-even terms}. Let
    \begin{equation}
        \mathcal{Q} = \sum_{x\in\mathbb{Z}_{2L}}\sum_{w\in\mathbb{Z}_{L-3}/0} \bm{q}(x,w)_{x,...,x+w-1}
    \end{equation}
    with $\bm{q}(x,w)\in\bar{\mathcal{A}}_w$ be a conserved quantity under the dynamics of a 1+1D brickwork Floquet operator (Eqn.~\ref{eqn:floquetoperator}), $\mathbb{U} \mathcal{Q} \mathbb{U}^{\dagger} = \mathcal{Q}$. If $\mathbb{U}$ is dual-unitary, then $\mathcal{Q}$ cannot contain any terms $\bm{a}\in \mathcal{B}_{\textrm{ee}}$ (i.e.~operators with even $x$ and even width $w \leq L-4$),
    \begin{equation}
        \tr\left(\mathcal{Q}^{\dagger}\bm{a}\right) = 0 \;\; \forall \; \bm{a} \in \mathcal{B}_{\textrm{ee}}.
    \end{equation}
\end{sublemma}
\begin{proof}
    For proof by contradiction, let's say that $\mathcal{Q}$ does contain some terms from the even-even subspace (i.e.~there exists some  $\bm{a} \in \mathcal{B}_{\textrm{ee}}$ such that $\tr\left(\mathcal{Q}^{\dagger}\bm{a}\right) \neq 0$) - one of these terms will be the widest such term. Let's say this term has a width $w_{\textrm{max}}$.
    \par
    From Fig.~\ref{fig:digraph}, we can note that this term will either be mapped by $\mathbb{U}$ into a new term in $\mathcal{B}_{\textrm{ee}}$ with width $w_{\textrm{max}}-4$, or into terms in the other subspaces (if $w_{\textrm{max}} \leq 4$). We can also note that there is no way to map from terms in the other subspace to terms in the even-even subspace.
    \par
    Putting these two facts together, we can conclude that we won't have any terms of width $w_{\textrm{max}}$ in the even-even subspace present in $\mathbb{U} \mathcal{Q} \mathbb{U}^{\dagger}$ (recall the elements in the other subspaces will be orthogonal to those in $\mathcal{B}_{\textrm{ee}}$, so we will not be able to reconstruct any even-even terms from terms in the other subspaces) - but this would mean that $\mathcal{Q} \neq  \mathbb{U} \mathcal{Q} \mathbb{U}^{\dagger}$, in contradiction with our requirement that $\mathcal{Q}$ is a conserved quantity. We can hence conclude that we cannot have this widest even-even term present in $\mathcal{Q}$ (if we want it to be conserved). But, as we have already stated, if we have any even-even terms present in $\mathcal{Q}$ then we will have one which is the widest. Hence, we must conclude that $\tr\left(\mathcal{Q}^{\dagger}\bm{a}\right) = 0 \;\; \forall \; \bm{a} \in \mathcal{B}_{\textrm{ee}}$ as required.

\end{proof}

Next, we can note that terms in the `odd-even' subspace cannot form a conserved quantity by themselves - $\mathcal{Q}$ cannot consist solely of terms with odd $x$ and even $w$.

\begin{sublemma}\label{lem:notjustoddeven}
    \textit{Odd-even terms not sufficient}. Let
    \begin{equation}
        \mathcal{Q} = \sum_{x\in\mathbb{Z}_{2L}}\sum_{w\in\mathbb{Z}_{L-3}/0} \bm{q}(x,w)_{x,...,x+w-1}
    \end{equation}
    with $\bm{q}(x,w)\in\bar{\mathcal{A}}_w$. If $\mathcal{Q}$ consists solely of terms with $x$ odd and $w$ even, then it cannot be a conserved quantity under a brickwork dual-unitary Floquet operator $\mathbb{U}$. Concretely, defining the set
    \begin{equation}
        A = \{\bm{q}_{x,...,x+w-1}| \tr\left(\mathcal{Q}^{\dagger}\bm{q}_{x,...,x+w-1}\right) \neq 0, \; x\in\mathbb{Z}_{2L}, \; w\in\mathbb{Z}_{L-3}/0, \; \bm{q}\in\bar{\mathcal{A}}_w \},
    \end{equation}
    then
    \begin{equation}
        A \subseteq \mathcal{B}_{\textrm{oe}} \implies \mathbb{U}\mathcal{Q}\mathbb{U}^{\dagger} \neq \mathcal{Q}.
    \end{equation}
\end{sublemma}
\begin{proof}
    Suppose $A \subseteq \mathcal{B}_{\textrm{oe}}$. There will be some $\bm{q}(x,w)_{x,...,x+w-1} \in \mathcal{B}_{\textrm{oe}}$ which contributes to $\mathcal{Q}$ for which $w$ is minimum. Let's say this term has a width $w_{\textrm{min}}$.
    \par
    From Fig.~\ref{oddeventerms}, we can see that all elements of $\mathcal{B}_{\textrm{oe}}$ remain in $\mathcal{B}_{\textrm{oe}}$ (as long as $w\leq L-4$, as assumed herein) and grow in width by 4 sites under the action of a dual-unitary Floquet operator $\mathbb{U}$. Consequently, there will be no terms of width $w_{\textrm{min}}$ present in $\mathbb{U}\mathcal{Q}\mathbb{U}^{\dagger}$. Hence, $\mathcal{Q} \neq \mathbb{U}\mathcal{Q}\mathbb{U}^{\dagger}$.
\end{proof}

Finally, we can prove that a quantity that is conserved under the action of a dual-unitary Floquet operator $\mathbb{U}$ can only consist of terms from the subspaces with odd-$w$.

\begin{extralemma}
    \textit{Only odd-width terms (with formal proof)}.
    Let
    \begin{equation}
        \mathcal{Q} = \sum_{x\in\mathbb{Z}_{2L}}\sum_{w\in\mathbb{Z}_{L-3}/0} \bm{q}(x,w)_{x,...,x+w-1}
    \end{equation}
    with $\bm{q}(x,w)\in\bar{\mathcal{A}}_w$, and let $\mathbb{U}$ be a 1+1D brickwork dual-unitary Floquet operator (Eqn.~\ref{eqn:floquetoperator}). Then
    \begin{equation}
        \mathbb{U}\mathcal{Q}\mathbb{U}^{\dagger} = \mathcal{Q} \implies \bm{q}(x,w) = 0 \;  \forall \; w \in \mathbb{E}_{L-3}/0.
    \end{equation}
\end{extralemma}
\begin{proof}
    By Sublemma \ref{lem:noeveneven}, we already know that $\bm{q}(x,w) = 0 \; \forall \; (x,w) \in \mathbb{E}_{2L} \times \mathbb{E}_{L-3}/0$. Hence, let us write $\mathcal{Q}$ as
    \begin{equation}
        \mathcal{Q} = \mathcal{Q}_{\textrm{odd-}w} + \mathcal{Q}_{\textrm{odd-even}},
    \end{equation}
    where
    \begin{equation}
        \mathcal{Q}_{\textrm{odd-}w} = \sum_{x\in\mathbb{Z}_{2L}}\sum_{w\in\mathbb{O}_{L-3}} \bm{q}(x,w)_{x,...,x+w-1} \in \textrm{Span}\left(\mathcal{B}_{eo} \cup \mathcal{B}_{oo}\right),
    \end{equation}
    and
    \begin{equation}
        \mathcal{Q}_{\textrm{odd-even}} = \sum_{x\in\mathbb{O}_{2L}}\sum_{w\in\mathbb{E}_{L-3}/0} \bm{q}(x,w)_{x,...,x+w-1} \in \textrm{Span}\left(\mathcal{B}_{oe}\right),
    \end{equation}
    We know, by the diagrams given in Fig.~\ref{spreadingdiagrams}, that under the action of a dual-unitary Floquet operator, $\mathbb{U}$, we can have 3 types of terms produced from these 2 components: we can have terms in the odd-$w$ subspaces which remain in these subspaces - we will denote the new terms with $\mathcal{Q}^{\prime}_{\textrm{odd-}w}$; we can have some terms in $\mathcal{B}_{oe}$ which terms from the odd-$w$ subspaces are mapped into - we will denote these terms with $\mathcal{Q}^{\prime}_{\textrm{odd-}w \mapsto \textrm{odd-even}}$; finally, we will have terms in $\mathcal{B}_{oe}$ which the original $\mathcal{B}_{oe}$ terms (in $\mathcal{Q}_{\textrm{odd-even}}$) are mapped into - we will denote these terms with $\mathcal{Q}^{\prime}_{\textrm{odd-even}}$. Together, this gives
    \begin{equation}
        \mathbb{U} \mathcal{Q} \mathbb{U}^{\dagger} = \mathcal{Q}^{\prime}_{\textrm{odd-}w} + \mathcal{Q}^{\prime}_{\textrm{odd-}w \mapsto \textrm{odd-even}} + \mathcal{Q}^{\prime}_{\textrm{odd-even}}.
    \end{equation} 
    If $\mathcal{Q}$ is to be a conserved quantity, we require that 
    \begin{equation}\label{eqn:lem3firstQcondition}
        \mathcal{Q}^{\prime}_{\textrm{odd-}w} = \mathcal{Q}_{\textrm{odd-}w},
    \end{equation}
     and
    \begin{equation}\label{eqn:lem3secondQcondition}
        \mathcal{Q}^{\prime}_{\textrm{odd-even}} + \mathcal{Q}^{\prime}_{\textrm{odd-}w \mapsto \textrm{odd-even}} = \mathcal{Q}_{\textrm{odd-even}},
    \end{equation}
    in order to ensure that $\mathcal{Q} = \mathbb{U}\mathcal{Q}\mathbb{U}^{\dagger}$. If $\mathcal{Q}_{\textrm{odd-even}} \neq 0$, then by Sublemma \ref{lem:notjustoddeven} we know that $\mathcal{Q}^{\prime}_{\textrm{odd-even}} \neq \mathcal{Q}_{\textrm{odd-even}}$, and so we require that $\mathcal{Q}^{\prime}_{\textrm{odd-}w \mapsto \textrm{odd-even}} = \mathcal{Q}^{\prime}_{\textrm{odd-even}} - \mathcal{Q}_{\textrm{odd-even}}$ is non-zero such that Eqn.~\ref{eqn:lem3secondQcondition} holds. However, if $\mathcal{Q}^{\prime}_{\textrm{odd-}w \mapsto \textrm{odd-even}} \neq 0$ is non-zero, and $\mathcal{Q}^{\prime}_{\textrm{odd-}w} = \mathcal{Q}_{\textrm{odd-}w}$, then we will have that the norm of the terms produced by acting on $\mathcal{Q}_{\textrm{odd-}w}$ is greater than the norm of $\mathcal{Q}_{\textrm{odd-}w}$ itself; this contradicts the assumption that $\mathbb{U}$ is unitary - unitary transformations are, by definition, norm-preserving. The only way to resolve this is if $\mathcal{Q}_{\textrm{odd-even}} = 0$, in which case
    \begin{equation}
        \mathcal{Q} = \mathcal{Q}_{\textrm{odd-}w} = \sum_{x\in\mathbb{Z}_{2L}}\sum_{w\in\mathbb{O}_{L-3}} \bm{q}(x,w)_{x,...,x+w-1},
    \end{equation}
    and hence
    \begin{equation}
        \bm{q}(x,w) = 0 \;  \forall \; w \in \mathbb{E}_{L-3}/0.
    \end{equation}

\end{proof}

    \section{The unitary subspaces of the \texorpdfstring{$\mathcal{M}_{\pm,w}$}{TEXT} maps}\label{appendix:solitonbasis}

    In the proof of Theorem \ref{thm:fundamentalcharges}, we use the fact that any operator $\bm{a} \in \bar{\mathcal{A}}_w$ supported strictly non-trivially over an interval $[x, x+w-1]$ which is mapped to a new operator supported strictly non-trivially over an interval $[x\pm 2t, x+w-1\pm2t]$ by $t$ repeated applications of the Floquet operator can be decomposed into a basis of solitons; here, we rigorously establish this fact. Specifically, we show that the subspace spanned by the solitons of the $\mathcal{M}_{\pm, w}$ maps is the same as the subspace spanned by operators which remain the same width and lead to a preservation of norm under repeated applications of the $\mathcal{M}_{\pm, w}$ maps (which is an equivalent statement to that of the previous sentence).
    \par
    \begin{lemma}\label{lem:solitonbasis}
        Take a dual-unitary Floquet operator, $\mathbb{U}$, as defined in Eqn.~\ref{eqn:floquetoperator}. Consider the following subspaces of $\bar{\mathcal{A}}_w$, defined as
        \begin{equation}
            \mathcal{S}_{+,w} = \textrm{Span}\left(
                \bm{a} \in \bar{\mathcal{A}}_{w} 
                \; : \; \mathcal{M}_{+,w}^t(\bm{a}) \in \bar{\mathcal{A}}_w, \; \|\mathcal{M}_{+,w}^t(\bm{a})\| = \|\bm{a}\|, \; \forall \; t \in \mathbb{N}
                \right),
        \end{equation}
        and
        \begin{equation}
            \mathcal{S}_{-,w} = \textrm{Span}\Bigl(
                \bigl\{\bm{b} \in \bar{\mathcal{A}}_{w} 
                \; : \; \mathcal{M}_{-,w}^t(\bm{b}) \in \bar{\mathcal{A}}_w, \; \|\mathcal{M}_{-,w}^t(\bm{b})\| = \|\bm{b}\|, \; \forall \; t \in \mathbb{N}\bigr\}
                \Bigr),
        \end{equation}
        where $\|A\| = \sqrt{\tr(A^{\dagger}A)}$ is the Hilbert-Schmidt norm. The right-moving width-$w$ solitons, defined as the elements of the set $\mathbb{S}_{+,w}$ in Eqn.~\ref{eqn:set_S}, form a complete basis for $\mathcal{S}_{+,w}$, and the left-moving width-$w$ solitons, defined as the elements of the set $\mathbb{S}_{+,w}$ in Eqn.~\ref{eqn:set_T}, form a complete basis for $\mathcal{S}_{-,w}$. That is,
        \begin{equation}
            \textrm{Span}\left(\mathbb{S}_{+,w} \right) = \mathcal{S}_{+,w}, \quad \textrm{Span}\left(\mathbb{S}_{-,w} \right) = \mathcal{S}_{-,w}.
        \end{equation}
    
    \end{lemma}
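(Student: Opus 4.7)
The plan is to characterise $\mathcal{S}_{+,w}$ as the ``unitary part'' of the Hilbert--Schmidt contraction $\mathcal{M}_{+,w}$ and then use finite-dimensional spectral theory to identify it with the span of the unimodular eigenvectors of $\mathcal{M}_{+,w}$, i.e.\ with $\textrm{Span}(\mathbb{S}_{+,w})$. I would treat only the right-moving case; the left-moving argument is identical after swapping orientations.

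The inclusion $\textrm{Span}(\mathbb{S}_{+,w}) \subseteq \mathcal{S}_{+,w}$ is immediate, since any $\bm{a_k}$ with $\mathcal{M}_{+,w}(\bm{a_k}) = \lambda^+_{k,w}\bm{a_k}$ and $|\lambda^+_{k,w}|=1$ satisfies $\mathcal{M}_{+,w}^t(\bm{a_k}) = (\lambda^+_{k,w})^t\bm{a_k} \in \bar{\mathcal{A}}_w$ with unchanged norm for every $t$. For the reverse inclusion, I would first establish that $\mathcal{M}_{+,w}$ is a contraction in HS norm: as in the proof of Theorem~\ref{thm:fundamentalcharges}, for any $\bm{a}\in\bar{\mathcal{A}}_w$ with $w$ odd the dual-unitary evolution $\mathbb{U}[\bm{a}]_{x,\ldots,x+w-1}\mathbb{U}^{\dagger}$ decomposes into mutually orthogonal width-$w$, $w+1$ and $w+3$ pieces, with the width-$w$ piece equal to $[\mathcal{M}_{+,w}(\bm{a})]_{x+2,\ldots,x+w+1}$, so that unitarity of $\mathbb{U}$ yields
\begin{equation}
    \|\bm{a}\|^2 = \|\mathcal{M}_{+,w}(\bm{a})\|^2 + \|\bm{q}^{\prime}\|^2 + \|\bm{q}^{\prime\prime}\|^2 \;\geq\; \|\mathcal{M}_{+,w}(\bm{a})\|^2.
\end{equation}
Hence $I-\mathcal{M}_{+,w}^{*}\mathcal{M}_{+,w}\succeq 0$, and the norm-preservation locus $\{\bm{a}: \|\mathcal{M}_{+,w}(\bm{a})\|=\|\bm{a}\|\}$ coincides with the kernel $\ker(I-\mathcal{M}_{+,w}^{*}\mathcal{M}_{+,w})$, a linear subspace. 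Similarly, $\{\bm{a}: \mathcal{M}_{+,w}^t(\bm{a}) \in \bar{\mathcal{A}}_w\}$ is the preimage of the subspace $\bar{\mathcal{A}}_w$ under a linear map and hence itself a subspace. Intersecting both conditions over $t \in \mathbb{N}$ shows that the generating set in the definition of $\mathcal{S}_{+,w}$ is already a subspace, so the outer $\textrm{Span}$ is redundant.

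The key spectral step is then to observe that $\mathcal{M}_{+,w}$ maps $\mathcal{S}_{+,w}$ into itself, since $\mathcal{M}_{+,w}^t(\mathcal{M}_{+,w}(\bm{a})) = \mathcal{M}_{+,w}^{t+1}(\bm{a})$ inherits both defining conditions from $\bm{a}$, and that the restriction $\mathcal{M}_{+,w}\big|_{\mathcal{S}_{+,w}}$ is norm-preserving (the $t=1$ case of the defining equality). An injective linear map from a finite-dimensional space into itself is bijective by rank--nullity, so this restriction is a unitary operator on the Hilbert--Schmidt space $\mathcal{S}_{+,w}$, hence diagonalisable with unimodular spectrum. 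Its eigenvectors span $\mathcal{S}_{+,w}$, and by $\mathcal{M}_{+,w}$-invariance of $\mathcal{S}_{+,w}$ they are eigenvectors of $\mathcal{M}_{+,w}$ on the ambient space with $|\lambda|=1$, i.e.\ elements of $\mathbb{S}_{+,w}$. This gives $\mathcal{S}_{+,w}\subseteq \textrm{Span}(\mathbb{S}_{+,w})$ and, combined with the forward inclusion, completes the argument.

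The main obstacle I anticipate is the careful bookkeeping around the two linear conditions defining the generating set (norm preservation and stay-in-$\bar{\mathcal{A}}_w$), and the verification that $\mathcal{M}_{+,w}$ -- a priori only a map $\mathcal{A}_w \to \mathcal{A}_w$ -- genuinely restricts to a bijection of $\mathcal{S}_{+,w}$. Once the orthogonal-width decomposition delivers the contraction bound and the kernel characterisation of the norm-preservation set is in place, the remaining spectral conclusions reduce to standard finite-dimensional isometry theory. The $-$-moving statement is obtained verbatim by replacing $\mathcal{M}_{+,w}$ with $\mathcal{M}_{-,w}$, tracing out the right-hand boundary instead of the left, and shifting two sites to the left.
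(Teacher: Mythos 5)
Your proposal is correct and follows essentially the same route as the paper: both identify the restriction of $\mathcal{M}_{\pm,w}$ to $\mathcal{S}_{\pm,w}$ as a norm-/inner-product-preserving endomorphism of a finite-dimensional space, conclude it is unitary, and hence that its unimodular-eigenvalue eigenvectors (the solitons) form a complete basis. The only difference is presentational: you make explicit the contraction bound from the orthogonal width-$w$, $w+1$, $w+3$ decomposition, the kernel characterisation showing the defining set is already a subspace, and the rank--nullity step, details which the paper's shorter argument (a direct computation of inner-product preservation via conjugation by $\mathbb{U}$) leaves implicit.
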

    \begin{proof}
        We start by considering the $\mathcal{S}_{+,w}$ subspace, and define a map
        \begin{equation}
            f: \mathcal{S}_{+,w} \mapsto \mathcal{S}_{+,w},
        \end{equation}
        which is the restriction of $\mathcal{M}_{+,w}$ to $\mathcal{S}_{+,w}$, i.e.~
        \begin{equation}
            f(\bm{a}) = \mathcal{M}_{+,w}(\bm{a}).
        \end{equation}
        Trivially, $f$ is linear, and also preserves the Hilbert-Schmidt inner product
        \begin{align}
            \langle \bm{a}, \bm{c}\rangle
            &= \tr \left(
                \bm{a}^{\dagger}\bm{c}
            \right) \\
            &= \frac{1}{d^{2L-w}} \tr \left(\mathbb{U}^{\dagger}\mathbb{U}
                \bm{a}^{\dagger}_{\mathcal{R}}\bm{c}_{\mathcal{R}}
            \right), \quad |\mathcal{R}| = w \\
            &= \frac{1}{d^{2L-w}} \tr \left(\mathbb{U}
            \bm{a}^{\dagger}_{\mathcal{R}}\mathbb{U}^{\dagger}\mathbb{U}\bm{c}_{\mathcal{R}}\mathbb{U}^{\dagger}
        \right) \\
            &= \frac{1}{d^{2L-w}} \tr \left(
            \left[\mathcal{M}_{+,w}(\bm{a}^{\dagger})\right]_{\mathcal{R}+2}\left[\mathcal{M}_{+,w}(\bm{c})\right]_{\mathcal{R}+2}
        \right) \\
            &= \tr \left(\mathcal{M}_{+,w}(\bm{a})^{\dagger}\mathcal{M}_{+,w}(\bm{c})
            \right) \\
            &= \langle f(\bm{a}), f(\bm{c})\rangle,
        \end{align}
        where we have defined an interval $\mathcal{R} = [x, x+w-1]$ such that $x$ is even, and noted that $\mathcal{M}_{\pm,w}(\bm{a})^{\dagger} = \mathcal{M}_{\pm,w}(\bm{a}^{\dagger})$ and so $f(\bm{a}^{\dagger}) = f(\bm{a})^{\dagger}$. We also note that, by definition, $f$ is an endomorphism, $f \in \textrm{End}(\mathcal{S}_{+,w})$. This allows us to deduce that $f$ is a unitary transformation, and hence its eigenvectors must form a complete orthonormal basis for $\mathcal{S}_{+,w}$. For these eigenvectors, which we will conveniently denote as $\{\bm{a_k}\}_{k=1}^n$, with $1 \leq n = (\textrm{dim}(\mathcal{S}_{+,w}) - 1) \leq d^{2^w}-1$, 
        it will be true that 
        \begin{equation}\label{eqn:unimod_eigs_of_fmap}
            \mathbb{U}([\bm{a_k}]_{x,...,x+w-1})\mathbb{U}^{\dagger} = \lambda_{k,w}[\bm{a_k}]_{x+2,...,x+w+1}, \quad \lambda_{k,w} \in \mathbb{C}, \; |\lambda_{k,w}| = 1,
        \end{equation}
        which makes them all right-moving, width-$w$ solitons - the elements of the set $\mathbb{S}_{+,w}$ defined in Eqn.~\ref{eqn:set_S}; these eigenvectors \textit{are} the solitons, and we have shown that they span a \textit{unitary subspace} $\mathcal{S}_{+,w}$ of the $\mathcal{M}_{+,w}$ map. The proof that the left-moving width-$w$ solitons - the elements of the set $\mathbb{S}_{-,w}$ - span a unitary subspace of the $\mathcal{M}_{-,w}$ map follows completely analogously. We conclude by noting the resemblance this illuminates between the $\mathcal{S}_{\pm,w}$ subspaces and the well-known notion of decoherence-free subspaces \cite{lidar1998decoherencefreesubspsforqc,lidar2014reviewofdecohfreesubsps} (i.e.~they are subspaces of $\mathcal{H}_w$ on which the generically non-unitary $\mathcal{M}_{\pm, w}$ maps act unitarily).
    \end{proof}

    \section{Non-maximal velocity solitons}\label{sec:v1solitons}
    In Section \ref{sec:solitonsandlocalconservedQs}, we made a brief comment about our definition of solitons being somewhat restrictive, in that we only consider maximal velocity solitons. Specifically, we are referring to the fact that operators $\bm{a} \in \mathcal{A}_w$ which satisfy
    \begin{equation}\label{eqn:arbitraryvsolitons}
        \mathbb{U}\bm{a}_{x,...,x+w-1}\mathbb{U}^{\dagger} = \lambda \bm{a}_{x+v,...,x+w-1+v}, \; \lambda \in \mathbb{C}, \; x \in \mathbb{Z}_{2L},
    \end{equation}
    with $v \in \{0, \pm1, \pm2\}$ (because of the brickwork geometry of the circuit these are the only possible values of $v$) have been referred to as solitons in the literature (see Ref. \cite{bertini2020opent2}), whereas we only allow (in Definition \ref{def:wbodysolitons}) for $v= \pm 2$.
    \par
    In Ref. \cite{bertini2020opent2}, the authors proved that when $w=1$ and $U = V$ in the definition of the Floquet operator (i.e. the even and odd layers are generated by the same 2-qudit unitary, see Eqn. \ref{eqn:floquetoperator}), then $v = 0$ and $v = \pm 2$ are the only possible values which lead to consistent solutions. Moreover, when $\mathbb{U}$ is dual-unitary, they showed that $v = 0$ solitons cannot exist. It is not immediately clear, however, that these results generalise to the case of $U \neq V$ (i.e. temporal inhomogeneity in the Floquet operator) and many-body operators (i.e. $w > 1$) as considered here.
    \par
    In this Appendix we explain why we incorporate only the $v=\pm 2$ solutions into our definition of solitons used in the main body of the paper. We show that allowing for $U \neq V$ and $w > 1$ does not allow for $v=0$ solutions to Eqn. \ref{eqn:arbitraryvsolitons}, and that while it becomes possible to find $v = \pm 1$ solutions (of which we provide an explicit example) these are not stable under repeated applications of the Floquet operator, and so should not be considered to be solitons (solitons are long-lived excitations) despite the fact they technically satisfy Eqn. \ref{eqn:arbitraryvsolitons}. We note these results are already implied by Theorem \ref{thm:fundamentalcharges} (if we have an operator that satisifies Eqn. \ref{eqn:arbitraryvsolitons} repeatedly, under repeated conjugation by $\mathbb{U}$, then we can construct a conserved quantity out it; any such operator with $v \neq \pm 2$ would violate Theorem \ref{thm:fundamentalcharges}), but spell them out more explicitly here for illustrative purposes.
    \par
    Consider a dual-unitary Floquet operator $\mathbb{U}$ on a chain of $2L$ qubits, defined as in Eqn.~\ref{eqn:floquetoperator} with $U = \textrm{FSWAP}(H\otimes \mathds{1})$ and $V = \textrm{FSWAP}(\mathds{1}\otimes H)$, where $H$ is a single-qubit Hadamard gate and FSWAP is the fermionic SWAP operator studied in Section $\ref{sec:fermionic}$. It is straightforward to verify that both $U$ and $V$ are dual-unitary.
    \par
    Let $\mathbb{U}$ act on a pair of neighbouring $\bm{\sigma_z}$ operators, $\left[\bm{\sigma_z} \otimes \bm{\sigma_z}\right]_{x, x+1}$, where $x$ is even. We only need to consider the gates within a causal light cone; this amounts to calculating
    \begin{equation}
        \mathbb{U}\left[\bm{\sigma_z} \otimes \bm{\sigma_z}\right]_{x, x+1}\mathbb{U}^{\dagger} = \left[(V \otimes V)\left[U(\bm{\sigma_z} \otimes \bm{\sigma_z})U^{\dagger}\right]_{x, x+1} (V \otimes V)^{\dagger}\right]_{x-1,...,x+2}.
    \end{equation}
    The action of the single Hadamard in $U$ maps $\bm{\sigma_z} \otimes \bm{\sigma_z}$ to $\bm{\sigma_x} \otimes \bm{\sigma_z}$. From the relations for FSWAP given in Section \ref{sec:fermionic}, we know that this is then mapped to $\mathds{1} \otimes \bm{\sigma_x}$, i.e. $\left[U(\bm{\sigma_z} \otimes \bm{\sigma_z})U^{\dagger}\right]_{x, x+1} = \left[\bm{\sigma_x}\right]_{x+1}$. This means that we then only need to calculate the action of the right-hand $V$ gate, i.e.
    \begin{equation}
        \mathbb{U}\left[\bm{\sigma_z} \otimes \bm{\sigma_z}\right]_{x, x+1}\mathbb{U}^{\dagger} = \left[V(\bm{\sigma_x} \otimes \mathds{1})V^{\dagger}\right]_{x+1, x+2}.
    \end{equation}
    The FSWAP in $V$ maps $\bm{\sigma_x} \otimes \mathds{1}$ to $\bm{\sigma_z} \otimes \bm{\sigma_x}$, and then the Hadamard on the $x+2$ qubit maps this to $\bm{\sigma_z} \otimes \bm{\sigma_z}$, so overall we have
    \begin{equation}\label{eqn:ZZv1soliton}
        \mathbb{U}\left[\bm{\sigma_z} \otimes \bm{\sigma_z}\right]_{x, x+1}\mathbb{U}^{\dagger} = \left[\bm{\sigma_z} \otimes \bm{\sigma_z}\right]_{x+1, x+2}.
    \end{equation}
    This shows that we can have many-body operators which are spatially translated by just one site - which we might refer to as $v = \pm 1$ solitons - by a brickwork dual-unitary Floquet operator. However, note that we stated that $x$ must be even, and so the mapping defined in Eqn.~\ref{eqn:ZZv1soliton} is from the $\mathcal{B}_{ee}$ subspace to the $\mathcal{B}_{oe}$ subspace. In Appendix \ref{appendix:fate}, we showed that any operators in the $\mathcal{B}_{oe}$ subspace cannot evolve solitonically - their region of spatial support grows ballistically at the maximal velocity in both spatial directions. So, the $\bm{\sigma_z} \otimes \bm{\sigma_z}$, $v=1$ solution to Eqn. \ref{eqn:arbitraryvsolitons} presented here could only evolve solitonically (i.e. simply be spatially translated) for one period of dual-unitary Floquet evolution - after this one time step, it ends up in the $\mathcal{B}_{oe}$ subspace and is no longer able to evolve solitonically (another way to look at this is to realise that operators in $\mathcal{B}_{oe}$ cannot be solutions to Eqn. \ref{eqn:arbitraryvsolitons} for a dual-unitary $\mathbb{U}$ - this follows from the digraph in Fig. \ref{fig:digraph}).
    \par
    Furthermore, we know by the mappings given in Figs.~\ref{fig:digraph} and \ref{spreadingdiagrams} that this must be true of all $v = \pm 1$ solitons in a dual-unitary circuit - the mapping between the $\mathcal{B}_{ee}$ subspace and the $\mathcal{B}_{oe}$ subspace after conjugation by a dual-unitary $\mathbb{U}$ is the only such mapping under which the site of left-most support changes by one (giving $v = \pm 1$) and the width $w$ stays constant. Hence, all $v = \pm 1$ solitons must suffer the same fate as the $\bm{\sigma_z} \otimes \bm{\sigma_z}$ example considered above. As these $v = \pm 1$ solitons must belong to an even-width subspace ($\mathcal{B}_{ee}$) we also know by Lemma \ref{lem:noeveneven} that they cannot contribute to any conserved quantities.
    \par
    We conclude by noting, from the digraph in Fig. \ref{fig:digraph}, that the self-loops of the odd-width subspaces ($\mathcal{B}_{oo}$ and $\mathcal{B}_{eo}$) are the only other possible mappings (generated by the adjoint action of a dual-unitary Floquet operator) under which a local operator can remain the same width, and hence the only other mappings that can provide potential solutions to Eqn. \ref{eqn:arbitraryvsolitons}. These are the $v = \pm 2$ soliton solutions as discussed throughout the main body of the paper. This leaves no mappings (for a dual-unitary $\mathbb{U}$) which can provide $v = 0$ solutions to Eqn. \ref{eqn:arbitraryvsolitons}. Hence, the result of Ref. \cite{bertini2020opent2} that dual-unitary circuits can only support maximal velocity 1-body solitons generalises to the many-body case considered here (with the caveat that when $U \neq V$ in the definition of the Floquet operator, it is possible to find $v = \pm 1$ solutions to Eqn. \ref{eqn:arbitraryvsolitons} which are stable for only one period of Floquet evolution).

    \section{Quantifying the effective reduction on Hilbert space imposed by the dual-unitary constraints}\label{sec:dimreductioncalc}
    In Section \ref{sec:dynamicalconstraints}, we discussed the dynamical constraints imposed by dual-unitarity on the spatial support of local operators as they are time-evolved under a brickwork dual-unitary circuit. As an example, we noted that of the $d^8-1$ coefficients generically needed to describe (using some orthonormal operator basis) a one-site local operator $\bm{q}$ evolved to have non-trivial support on 4-sites under one layer of a brickwork unitary Floquet operator, $\mathbb{U}\bm{q}_x\mathbb{U}^{\dagger}$, precisely $4d^6 - 2d^4 + d^2 - 1$ of those coefficients are strictly zero if $\mathbb{U}$ is dual-unitary. Here, we provide further clarification on how this can be deduced from Eqn.~\ref{eqn:1bodyunderfloquetDU}.
    \par
    We start by noting that Eqn.~\ref{eqn:1bodyunderfloquetDU} tells us that $\mathbb{U}\bm{q}_x\mathbb{U}^{\dagger}$ can only have overlap with single-site operators on the sites $x+2$, corresponding to non-zero coefficents of the form $c_{000\alpha}$, $\alpha \neq 0$. This also immediately follows from the main result of Ref.~\cite{bertini2019exactcorrfuncs}. Consequently, any of the coefficients $c_{\alpha000}$, $c_{0\alpha00}$, $c_{00\alpha0}$ with $\alpha \neq 0$, that correspond to single-site operators on the other 3 sites must be zero. There are precisely $d^2-1$ traceless elements (i.e. $\alpha \neq 0$) of the operator basis for the single-site Hilbert space, so this gives us $3(d^2-1)$ coefficients that must be zero.
    \par
    We also know that the only overlap $\mathbb{U}\bm{q}_x\mathbb{U}^{\dagger}$ can have with operators that are traceless on two contiguous sites is with those supported on the sites $x+1$ and $x+2$, corresponding to non-zero coefficents of the form $c_{00\alpha\beta}$, $\alpha, \beta \neq 0$. This means that all coefficients of the form $c_{\alpha\beta00}$ and $c_{0\alpha\beta0}$ with $\alpha, \beta \neq 0$ must be zero. This gives us $2(d^2-1)^2$ coefficients that must be zero.
    \par
    Finally, we know that $\mathbb{U}\bm{q}_x\mathbb{U}^{\dagger}$ cannot have any overlap with operators that are supported strictly on three contiguous sites, and so all coefficients of the form $c_{\alpha\beta\gamma0}$ or $c_{0\alpha\beta\gamma}$ with $\alpha, \gamma \neq 0$ must be zero. Here, as the site in the middle can be traceless or identity, we get a $d^2$ contribution from the index $\beta$, which together with the $d^2-1$ contributions from the indices $\alpha$ and $\gamma$ gives us $2d^2(d^2-1)^2$ coefficients that must be zero in total.
    \par
    Putting this all together, this gives us $3(d^2-1) + 2(d^2-1)^2 + 2d^2(d^2-1)^2 = 2d^6 - 2d^4 + d^2 - 1$ coefficients that must be zero in the description of $\mathbb{U}\bm{q}_x\mathbb{U}^{\dagger}$, as claimed in Section \ref{sec:dynamicalconstraints}.
\end{document}